\setlist[enumerate]{nolistsep}
\newtheorem{theorem}{Theorem}[section]
\newtheorem*{theorem*}{Theorem}
\newtheorem{proposition}[theorem]{Proposition}
\newtheorem{corollary}[theorem]{Corollary}
\newtheorem*{corollary*}{Corollary}
\newtheorem{lemma}[theorem]{Lemma}
\theoremstyle{definition}
\newtheorem{remark}[theorem]{Remark}
\newtheorem{definition}[theorem]{Definition}
\newtheorem{example}[theorem]{Example}
\crefname{equation}{}{}
\DeclarePairedDelimiter\norm{\lVert}{\rVert}
\DeclareMathOperator{\cost}{cost}
\newcommand{\OPT}{\operatorname{OPT}}
\newcommand{\ALG}{\operatorname{ALG}}
\DeclareMathOperator{\K}{\mathrm{K}}
\DeclareMathOperator{\relint}{relint}
\newcommand{\Rbb}{\mathbb{R}}
\newcommand{\Nbb}{\mathbb{N}}
\newcommand{\R}{\mathbb{R}}
\newcommand{\N}{\mathbb{N}}
\newcommand{\Acc}{\mathcal{A}}
\newcommand{\Ecc}{\mathcal{E}}
\newcommand{\Cs}{\mathcal{C}}
\newcommand{\support}{\mathrm{supp}}
\newcommand{\Kappa}{\mathrm{K}}
\newcommand{\lambdamax}{\lambda^{+}(\Acc)}
\newcommand{\lambdamin}{\lambda^{-}(\Acc)}
\newcommand{\error}{\mathcal{V}}
\newcommand{\AK}{{\Acc,\Kappa}}
\newcommand{\apd}{\mathcal{P}}
\newcommand{\coresetX}{\tilde{X}}
\newcommand{\coresetx}{\tilde{x}}
\newcommand{\coresetOmega}{\tilde{\Omega}}
\newcommand{\coresetomega}{\tilde{\omega}}
\newcommand{\coresetC}{\tilde{C}}
\newcommand{\extension}{f}
\newcommand{\batchingX}{X_{\mathcal{B}}}
\newcommand{\batchingOmega}{\Omega_{\mathcal{B}}}
\newcommand{\batchingC}{C_{\mathcal{B}}}
\newcommand{\batches}{\mathcal{B}}
\newcommand{\wcaa}{wca assignment\xspace}
\newcommand{\wcac}{wca clustering\xspace}
\newcommand{\wcaas}{wca assignments\xspace}
\newcommand{\wcacs}{wca clusterings\xspace}
\newcommand{\instance}{I}
\newcommand{\coresetinstance}{\tilde{\instance}}
\newcommand{\identity}{E}
\DeclareMathOperator*{\argmax}{argmax}
\DeclareMathOperator*{\argmin}{argmin}
\newcommand{\leqnomode}{\tagsleft@true}
\definecolor{lBlue}{HTML}{a6cee3}
\definecolor{dBlue}{HTML}{1f78b4}
\definecolor{lGreen}{HTML}{b2df8a}
\definecolor{dGreen}{HTML}{33a02c}
\definecolor{lRed}{HTML}{fb9a99}
\definecolor{dRed}{HTML}{e31a1c}
\definecolor{lOrange}{HTML}{fdbf6f}
\definecolor{dOrange}{HTML}{ff7f00}
\definecolor{lPurple}{HTML}{cab2d6}
\definecolor{dPurple}{HTML}{6a3d9a}
\pgfplotsset{compat=1.15}
\pgfplotsset{data/.style={xtick distance=1,ytick distance=1,ticks=none,axis line style={draw=none},grid=major,axis equal image}}
\begin{document}
	
\newcommand{\PG}[1]{\textcolor{red}{PG: {#1}}}
	
\title{Coresets for Weight-Constrained Anisotropic Assignment and Clustering}
\author{Maximilian Fiedler and Peter Gritzmann}

\publishers{\vspace*{4ex}%
	\normalfont\normalsize%
	\parbox{0.8\linewidth}{%
		\textbf{Abstract.} The present paper constructs coresets for weight-constrained anisotropic assignment and clustering. 
		In contrast to the well-studied unconstrained least-squares clustering problem, approximating the centroids of the clusters no longer suffices in the weight-constrained anisotropic case, as even the assignment of the points to best sites is involved. This assignment step is often the limiting factor in materials science, a problem that partially motivates our work.  
		
		We build on a paper by Har-Peled and Kushal, who constructed coresets of size $\mathcal{O}\bigl(\frac{k^3}{\epsilon^{d+1}}\bigr)$ for unconstrained least-squares clustering. We generalize and improve on their results in various ways, leading to even smaller coresets with a size of only $\mathcal{O}\bigl(\frac{k^2}{\epsilon^{d+1}}\bigr)$ for weight-constrained anisotropic clustering. Moreover, we answer an open question on coreset designs in the negative, by showing that the total sensitivity can become as large as the cardinality of the original data set in the constrained case. Consequently, many techniques based on importance sampling do not apply to weight-constrained clustering.
	}	
}
\maketitle

\pagenumbering{arabic}

\section{Introduction}
The present paper constructs coresets of size $\mathcal{O}\bigl(\frac{k^2}{\epsilon^{d+1}}\bigr)$ for weight-constrained  anisotropic least-squares assignment and clustering, for short, \emph{\wcaa} and \emph{\wcac}. 
Our focus is twofold: On the one hand, we improve on known results on coresets for least-squares  clustering and generalize them to \wcac. In this way, we provide new fast approximation algorithms for the NP-hard \wcac problem. On the other hand, we develop our results in line with a specific problem in materials science that requires the computation of \wcaas. While this problem can theoretically be solved in polynomial time, our results allow for significantly reduced computation times and memory, which are strictly limiting factors in practice.

Clustering is a well-established tool for unsupervised learning and has been in wide use for decades in the analysis of data. A variety of more recent applications requires certain \enquote{fairness conditions}  \cite{Huang2019a, Schmidt2020} and, particularly, bounds on the cluster sizes; see, e.g., \cite{Basu2009} for general background on constrained clustering, and the handbook article \cite{GK17} for a concise overview focusing on concepts and results of particular relevance in the present context. For instance, when consolidating farmland \cite{BBG14}, the individual farm sizes should remain near-constant, while electoral districting \cite{Brieden2017} requires balanced districts by law. In conjunction with transformation techniques that extend clustering to supervised learning, bounding the cluster sizes can also drive the output towards statistical significance. Such approaches have recently been used for air cargo prediction \cite{BG20} and response prediction to clinical medication \cite{BG21}.

While computational issues are highly relevant in all such applications, the representation of grain maps, a problem in material science, particularly motivates the present paper \cite{Alpers2015}. This problem concerns structural representations of polycrystalline materials, which are only accessible through specific measurements. In fact, the available grain scan data allows determining the number $k$ of grains, approximations of grain centroids, volumes, and measures of anisotropy. As demonstrated in \cite{Alpers2015}, anisotropic diagrams related to extremal constrained anisotropic clusterings represent grain maps well. In the introduced model, each voxel corresponds to $k$ variables, one for each grain. Hence, a reasonable 3D resolution generally leads to a prohibitively large optimization problem, particularly if grain dynamics requiring solutions for many time steps are studied.

With this and related applications in mind, the present paper addresses basic theoretical questions, including the existence of small coresets for \wcac. 

Coresets, which can be seen as compressed data, have been developed for various problems \cite{Har-Peled2007,Feldman2013,FSS20,Fichtenberger2013,Sohler2018,Bachem2017a}, most notably in our context for approximate unconstrained least-squares clustering. Unfortunately, results for the unconstrained case cannot directly be generalized to the constrained situation, and some concepts do not generalize at all, as we will see in \cref{sec:counterexample}. An apparent difference between the unconstrained and the constrained case is the following: If we assume that the clustering quality is measured with respect to $k$ given points, called \emph{sites}, the unconstrained problem becomes trivial. The points are simply assigned to a nearest site. In contrast, for \wcaa, we need to solve a linear program in $\mathcal{O}(nk)$ variables, which quickly becomes practically intractable for large data sets. In fact, clusterings computed on coresets can only be utilized if fast weight-preserving extensions to clusterings on the full data set are available.

We introduce a general coreset definition that also works in the constrained case and for generalized objective functions. In addition, our coresets have the favorable property that their \wcaas and clusterings can be mapped very quickly to competitive \wcaas and clusterings of the original data set. Hence, after sufficiently good approximate centroids have been determined on coresets, we obtain the desired constrained clustering without having to solve the \enquote{remaining} expensive linear program on the full data set. Thus, our coresets allow us to avoid large-scale computations that would often be virtually impossible for the applications mentioned in materials science. 

As unconstrained least-squares clustering is a special case of \wcac, our results even improve on the previously best-known bounds on coreset sizes for unconstrained least-squares clustering in the standard deterministic setting. In probabilistic models, however, smaller coresets for unconstrained least-squares clustering can be obtained with high probability via \emph{importance sampling}, \cite{Feldman2011g}. It is therefore only natural that \cite{Schmidt2020,Huang2019a} ask whether this technique can also be utilized for constrained clustering. As we will show, however, the total sensitivity of a given instance may be as large as the size of the original data set, which answers this question in the negative.

The present paper is organized as follows. \Cref{sec:notation} introduces the necessary notation, formally specifies our main problems, \wcac and \wcaa, provides relevant background information, and states our main results.  \cref{sec:preliminaries} proves the first results and explains the relation between \wcac and geometric diagrams, as required for our coreset. \Cref{sec:existence_epsilon_1,sec:improved_construction,sec:counterexample} contain the proofs of our main results. Finally, \cref{conclusion} concludes with some remarks and open problems.

\section{Notation, background, and main results}\label{sec:notation}

In the following, we formally introduce the basic problems and state our main results. We begin with a brief introduction to \wcaa and \wcac. Although we provide the most relevant references, we also refer the reader to \cite{GK17} and its sources for pointers to additional literature.

Let $d, n\in \N$ and, for $j\in [n]=\{1,\ldots,n\}$, let $x_j \in \R^d$ and $\omega_{j} \in (0,\infty)$. In the following, we will collect the data in two families $X=\{x_1,\ldots,x_n\}$ and $\Omega=\{\omega_{1},\dots,\omega_{n}\bigr\}$ and refer to $\omega_j$ as the \emph{weight} of the \emph{data point} $x_j$. We refrain from a formal description of the points and weights by means of functions on $[n]$ and tacitly assume that the correspondences are always indicated by the index, i.e. $\omega_j$ is the weight of $x_j$. With this understanding, the pair
$(X,\Omega)$ is called \emph{weighted data set}, and $\omega(X)= \sum_{j=1}^{n} \omega_j$ is its \emph{total weight}.

Further, $k \in \mathbb{N}$ always specifies the \emph{number of clusters} we want to compute. Then, a $k$-\emph{clustering} $C$ of $(X,\Omega)$ is a vector
$$
C = (C_1,\ldots,C_k)=(\xi_{11},\dots, \xi_{1n}, \dots, \xi_{k1}, \dots, \xi_{kn}),
$$ 
where the component $\xi_{ij}$ specifies the fraction of $x_j$ that is assigned to the \emph{$i$th cluster} $C_i=(\xi_{i1},\dots, \xi_{in})$. More formally, the \emph{assignment conditions} are 
\begin{equation*} \label{eq:clustering_constraints}
\xi_{ij}  \ge 0 \quad \forall i \in[k],\ j \in [n] \qquad \text{ and } \qquad \sum_{i=1}^{k} \xi_{ij}  = 1 \quad \forall j \in [n].
\end{equation*}
If $C_i=0$, the $i$th cluster is \emph{void}. 
The \emph{weight} $\omega(C_i)$ and, if $C_i$ is not void, the \emph{centroid} $c_{i}$ of the cluster $C_i$ are given by
$$\omega(C_i)= \sum_{j=1}^{n} \xi_{ij}\omega_j
 \qquad \text{ and } \qquad
c_{i} = c(C_i) = \frac{1}{\omega(C_i)}\sum_{j=1}^{n} \xi_{ij}\omega_j x_{j},
$$
respectively. If $C_i$ is void, we set $c_i=0$. Further, we use the abbreviation $c(C)=\{c_1,\ldots,c_k\}$ for the family of centroids of the clustering $C$, and sometimes we write $c_i=c(C_i)$. The set of all $k$-clusterings of $(X,\Omega)$ will be denoted by $\Cs(k,X,\Omega)$. Whenever the ingredients are clear from the context, we will speak of the elements $C\in \Cs(k,X,\Omega)$ of a clustering of $X$ or simply of a clustering.

As we are particularly interested in weight-constrained clusterings, suppose we are further given lower and upper bounds $\kappa_{i}^{-}, \kappa_{i}^{+}\in \R$ with $\kappa_{i}^{-}\le \kappa_{i}^{+}$ for each cluster, again, collected in a family
$$
    \Kappa = \{\kappa_{1}^{-},\kappa_1^+,\dots,\kappa_{k}^{-},\kappa_k^+\}.
$$ 
Then, we require that the clusters satisfy the \emph{weight constraints} 
\begin{equation*}
\kappa_{i}^{-} \le w(C_i)=\sum_{j=1}^{n} \omega_j \xi_{ij} \le \kappa_{i}^{+} \qquad  \forall i \in [k].
\end{equation*}
The set of all such \emph{weight-constrained clusterings} will be denoted by  $\Cs_\Kappa(k,X,\Omega)$. Note that the simple and easy-to-verify condition
\begin{equation*}
\sum_{i=1}^k \kappa_{i}^{-} \le \omega(X) \le \sum_{i=1}^k\kappa_{i}^{+} 
\end{equation*}
is equivalent to the existence of such clusterings, i.e., to $\Cs_\Kappa(k,X,\Omega) \ne \emptyset$.
To avoid trivialities, we will assume that this condition holds.

Of course, if $\kappa_{i}^{-}=0$, and $\kappa_i^+ \ge \omega(X)$ for an $i\in [k]$ the $i$th weight constraint is redundant and can be omitted. We will signify this situation by writing $\kappa_i^+= \infty$, i.e., formally allowing $\kappa_i^+\in \R\cup \{\infty\}$. Hence, the choice of 
$$\Kappa_\infty = \{0,\infty,\dots, 0,\infty \}$$
refers to the unconstrained case.

Next, we introduce measures for the quality of a given clustering. They are based on distances that may utilize a different norm for each cluster. So, for $i\in [k]$, let  $A_{i}\in \R^{d\times d}$ be a positive definite symmetric matrix, and let $\norm{\ }_{A_i}$ denote the associated \emph{ellipsoidal norm}, i.e.
$\norm{x}_{A_i}=x^TA_ix$ for any $x\in \R^d$. Again, we collect all $k$ such matrices in the family $\Acc=\{A_{1},\dots,A_{k}\}$. Clearly, if 
$E=E_d$ denotes the $d\times d$ unit matrix, $\mathcal{E}=\{E,\dots,E\}$  refers to the classic Euclidean \emph{least-squares} case, in which each cluster norm is $\norm{x}_2=\norm{x}_{E}$. 
Of course, if $\lambda^{-}(A_i)$ and $\lambda^{+}(A_i)$ denote the smallest and largest eigenvalue of $A_i$, then 
$$
\lambda^{-}(A_i)\cdot \norm{x}_2^{2} \le \norm{x}_{A_i}^{2} \le \lambda^{+}(A_i)\cdot\norm{x}_2^{2}.
$$
Such ellipsoidal norms often emerge naturally in many applications. For instance, the different matrices reflect measured knowledge about the moments of the grains in materials, see \cite{Alpers2015}, and enable the representation of anisotropic growth.

We can now measure the quality of a weight-constrained clustering $C \in \Cs_\Kappa(k,X,\Omega)$ with respect to a set of reference points $S=\{s_1,\ldots, s_k \} \subset \R^d$, called \emph{sites}, in terms of 
\begin{equation*}
	\cost_\AK(X,C,S) = 
	\displaystyle \sum_{i=1}^{k}\sum_{j=1}^{n} \xi_{ij} \omega_j \norm{x_{j}-s_{i}}_{A_i}^{2}. 
\end{equation*}
Note that $\cost_\AK(X,C,S)$ only indirectly depends on $\Kappa$, as $C\in \Cs_\Kappa(k,X,\Omega)$ is given. We use the subscript $\Kappa$ consistently in order to indicate that we are considering the constrained case.

Then, \emph{\wcaa} is the following problem:
\begin{quote}
    Given $k, X,\Omega,\Acc,\Kappa, S$, find $C\in \Cs_\Kappa(k,X,\Omega)$ that minimizes $\cost_\AK(X,C,S)$.
\end{quote}
In this setting, the optimization involves only the $nk$ variables $\xi_{ij}$. Hence, in effect, \wcaa is a linear program that can be solved in polynomial time. By our general assumption, $\Cs_\Kappa(k,X,\Omega) \ne \emptyset$, a finite minimum exists, which we denote by $\cost_\AK(X,S)$, i.e.,
\begin{equation*}
	\cost_\AK(X,S) = \min_{C  \in \Cs_\Kappa(k,X,\Omega)} \, \cost_\AK(X,C,S).
\end{equation*}

Instead of fixing the sites and optimizing over the clusterings, we can also fix the clustering, optimize over the sites, and set
\begin{equation*}
    \cost_\AK(X,C) = \min_{S} \, \cost_\AK(X,C,S).
\end{equation*}
In \cref{le:center_replacement}, we will see that the optimal sites are, indeed, the centroids of the clusters. 

Optimizing the latter term over all clusterings results in an optimal weight-constrained anisotropic clustering; its objective function value will be denoted by
\begin{equation*}
    \OPT_\AK(X) = \min_{C \in \Cs_\Kappa(k,X,\Omega)} \, \cost_\AK(X,C).
\end{equation*}
The problem of finding such optimal clustering is called \emph{\wcac} and is defined as:
\begin{quote}
    Given $k,X,\Omega,\Acc,\Kappa$, find $C\in \Cs_\Kappa(k,X,\Omega)$ that minimizes $\cost_\AK(X,C)$.
\end{quote}

As the centroids depend on the clusterings, the objective function of \wcac is highly non-linear in its variables $\xi_{ij}$. It is this property that makes the problem NP-hard and already renders unconstrained least-squares clustering difficult, \cite{Dasgupta2008, Aloise2009}. The latter is shown to be NP-hard even in dimension $d=2$, \cite{Mahajan2012}. Moreover, \cite{Awasthi2015} shows that no PTAS exists in both, the number of clusters $k$, and the dimension of the data $d$. 

These hardness results apply to \wcac, as the unconstrained least-squares clustering problem is contained as the special case $\Acc = \Ecc$ and $\Kappa=\Kappa_\infty$. In this classic situation, we drop $\Kappa$ and $\Acc$ from the above cost terms and simply write $\Cs(k,X,\Omega)=\Cs_{\Kappa_\infty}(k,X,\Omega)$ and
\begin{align*}
    \cost(X,C,S) &= \cost_{\Ecc,\Kappa_\infty}(X,C,S), \quad &\cost(X,S) &= \cost_{\Ecc,\Kappa_\infty}(X,S), \quad  \\
    \cost(X,C) &= \cost_{\Ecc,\Kappa_\infty}(X,C), \quad &\OPT(X) &= \OPT_{\Ecc,\Kappa_\infty}(X).
\end{align*}
If we only drop $\Kappa$ but keep $\Acc$, we are in the anisotropic but unconstrained case, and if we drop $\Acc$ but keep $\Kappa$ we are dealing with the constrained Euclidean case and write, for instance:
\begin{align*}
    \cost_\Acc(X,C,S) &= \cost_{\Acc,\Kappa_\infty}(X,C,S), \quad & \OPT_\Acc(X) &= \OPT_{\Acc,\Kappa_\infty}(X)\\
    \cost_\Kappa(X,C,S) &= \cost_{\Ecc,\Kappa}(X,C,S), \quad& \OPT_\Kappa(X) &= \OPT_{\Ecc,\Kappa}(X).
\end{align*}

A popular method of handling the high computational complexity of unconstrained least-squares clustering is to approximate the data set by much smaller sets that still capture the relevant properties of the data, \cite{Har-Peled2007,Cohen2014,Braverman2016,Makarychev, Chen2009, Feldman2011g}. The decisive property of such a \emph{coreset} is that the cost of a clustering of its points is comparable to the cost of clusterings of the original data set. Hence, the time-consuming computations can be performed on much smaller data sets with good approximations of optimal solutions for the original massive data still being obtained.

We will now define coresets more formally, in a manner that is suitable for the weight-constrained anisotropic case.

\begin{definition}\label{def:coreset}
    Let $k \in \N$, $(X,\Omega)$ and $(\coresetX,\coresetOmega)$ be two weighted data sets, and $\epsilon \in (0,\nicefrac{1}{2}]$, $\delta \in [1,\infty)$.
    
    The weighted data set $(\coresetX,\coresetOmega)$ is an \emph{$(\epsilon,\delta)$-coreset} for $(k,X,\Omega,\Acc,\Kappa)$ if there exists a mapping $\extension:\Cs_\Kappa(k,\coresetX,\coresetOmega) \rightarrow \Cs_\Kappa(k,X,\Omega)$, called an \emph{extension}, and real constants $\Delta^{+},\Delta^{-}$, referred to as \emph{$\Delta$-terms} or \emph{offsets}, with $0\le \Delta^{+} \le \delta \cdot \Delta^{-}$. such that the following two conditions hold for all sets $S$ of $k$ sites and clusterings $\coresetC\in  \Cs_\Kappa(k,\coresetX,\coresetOmega)$:
	\begingroup\leqnomode
	\begin{alignat}{3}
	(1-\epsilon)\cost_\AK(X,\extension(\coresetC),S) &\le \cost_\AK(\coresetX,\coresetC, S) + \Delta^{+},  && \tag{a} \label{def:property_a} \\
	\cost_\AK(\coresetX,S) + \Delta^{-} &\le (1+\epsilon)\cost_\AK(X, S). \;  &&  \tag{b} \label{def:property_b}
	\end{alignat}
	\endgroup	
    If the context is clear, we refer to $(\coresetX,\coresetOmega)$ as an \emph{$(\epsilon,\delta)$-coreset}. If $\delta=1$, we speak of an \emph{$\epsilon$-coreset} and if, in addition, $\Delta^{+}=\Delta^{-}=0$ of a \emph{$0$-offset} or \emph{linear $\epsilon$-coreset}. 
\end{definition}

Note that the conditions in \cref{def:coreset} are required \emph{for every set $S$} of sites. Hence, $(\coresetX,\coresetOmega)$ can also be regarded as a coreset for \emph{each instance} $(k,X,\Omega,\Acc,\Kappa,S)$ of \wcaa, and we will sometimes refer to it this way. 

For easy distinction, we will universally signify coresets by means of a tilde or (if additionally needed) a bar, as in $(\coresetX,\coresetOmega)$ or $(\bar{X},\bar{\Omega})$. Parameters or objects on the different sets will be also be marked this way. For instances, clusterings on $(X,\Omega)$ will usually be named $C$ and their components $\xi_{ij}$, while clusterings and their components on coresets $(\coresetX,\coresetOmega)$ are signified by a tilde, i.e., they are called $\coresetC$ and $\tilde{\xi}_{ij}$.

Before we state our main results, let us briefly comment on the rationale behind the above coreset definition. 

First, note that $\cost_\AK(X,S) \le \cost_\AK(X,\extension(\coresetC),S)$. Hence, in the special situation that 
$\coresetC$ attains $\cost(\coresetX,S)$, 
$\Delta^{+}=\Delta^{-}=0$, and $\delta=1$, the above condition for a linear coreset implies that
$$
(1-\epsilon)\cost_\AK(X,S) \le \cost_\AK(\coresetX,S) \le (1+\epsilon)\cost_\AK(X, S),
$$
which coincides with the coreset definition of \cite{Har-Peled2007} for standard unconstrained least-squares clustering, i.e., $\Acc=\Ecc$, $\Kappa = \Kappa_\infty$. 
Hence, \cref{def:coreset} generalizes the known definition to adapt to the specific requirements of \wcac.

Further, note that the two $\Delta$-terms generalize the one additive offset parameter used in \cite{Feldman2013,FSS20}. They will be used to show that $(\epsilon,\delta)$-coresets preserve $\delta$-approximations up a factor $\epsilon$.

Finally, and most importantly, let us point out that, while rather weak, \cref{def:coreset} still captures the main motivation for the concept that an approximate solution of the constrained clustering problem on a coreset yields an approximate solution on the original data set; see \cref{le:coreset_approximation_preservation_wcaa} for the details.
 
As the general goal is to improve the tractability of the underlying optimization problem we are aiming at coresets $(\coresetX,\coresetOmega)$ of weighted sets $(X,\Omega)$ that are small enough to enhance the performance of available algorithms significantly but still preserve enough of the structure of the original instance to yield solutions which can be extended to good approximations for the original instance. In the unconstrained least-squares case, clusterings of the coreset can easily be converted into clusterings of the original data set. In fact, it suffices to assign the points of $X$ to on of the closest determined sites, and this is efficiently facilitated by means of Voronoi diagrams. However, such a procedure does not respect the weight constraints and, hence, does not produce \wcacs of the full data set in general.

In \cref{def:coreset}, this conversion of a \emph{coreset clustering} $\coresetC \in \Cs_\Kappa(k,\coresetX,\coresetOmega)$ into a clustering $C \in \Cs_\Kappa(k,X,\Omega)$ of the full data set is facilitated by the explicitly introduced extension 
$$
\extension:\Cs_\Kappa(k,\coresetX,\coresetOmega) \rightarrow \Cs_\Kappa(k,X,\Omega).
$$
In fact, $\extension$ captures any way of \enquote{extending} $\coresetC$ to a feasible clustering $C$ of the full data set.
Specifically, condition \cref{def:property_a} of \cref{def:coreset} can be interpreted as saying that for each weight-constrained coreset clustering $\coresetC$, we obtain a weight-constrained clustering $\extension(\coresetC)$ on the full data set of cost that is not much worse. Condition \cref{def:property_b} expresses the property that an optimal \wcaa for sites $S$ on the full data set is not much better than one on the coreset. Hence, good approximations of the latter lead to good approximations of the former.

Let us now turn to the main results of the present paper.
First, as a generalization of a result in \cite{Schmidt2020}, we show that small changes in position of the data points result in $\epsilon$-coresets. The permitted deviation is quantified in terms of the cost $\OPT(X)$ of an optimal unconstrained least-squares clustering on the data set $(X,\Omega)$. Let $\lambdamax$ and $\lambdamin$ denote the largest and smallest eigenvalue of all matrices in $\Acc$.

\begin{theorem}\label{th:movement_error_coreset}
	Let $(k,X,\Omega,\Acc,\Kappa)$ be an instance of \wcac. Further, let  $(\coresetX,\coresetOmega)$ be a weighted data set, $\tilde{n} =|\coresetX|$, and let $p: [n] \rightarrow [\tilde{n}]$ be such that 
	$$
	\coresetomega_{\tilde{\jmath}} =  \sum_{j \in p^{-1}(\tilde{\jmath})} \omega_{j} \qquad \forall \, \tilde{\jmath} \in [\tilde{n}].
	$$
	For any $\epsilon\in (0,\nicefrac{1}{2}]$, if 
	$$ \sum_{j=1}^{n} \omega_{j} \norm{x_{j}-\coresetx_{p(j)}}_{2}^{2} \le \frac{\epsilon^2 \lambdamin}{16 \lambdamax} \OPT(X),
	$$
	then  $(\coresetX,\coresetOmega)$ is a linear $\epsilon$-coreset.
\end{theorem}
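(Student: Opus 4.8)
The plan is to equip the coreset with the \enquote{pull-back} extension along $p$ and to take $\Delta^{+}=\Delta^{-}=0$, $\delta=1$. Concretely, given a coreset clustering $\coresetC=(\tilde\xi_{i\tilde\jmath})\in\Cs_\Kappa(k,\coresetX,\coresetOmega)$, let $\extension(\coresetC)$ be the clustering of $X$ with components $\xi_{ij}:=\tilde\xi_{i,p(j)}$. Then $\sum_i\xi_{ij}=\sum_i\tilde\xi_{i,p(j)}=1$, and $w\bigl(\extension(\coresetC)_i\bigr)=\sum_{j}\omega_j\tilde\xi_{i,p(j)}=\sum_{\tilde\jmath}\bigl(\sum_{j\in p^{-1}(\tilde\jmath)}\omega_j\bigr)\tilde\xi_{i\tilde\jmath}=\sum_{\tilde\jmath}\coresetomega_{\tilde\jmath}\tilde\xi_{i\tilde\jmath}=w(\coresetC_i)$, so $\extension$ is well defined into $\Cs_\Kappa(k,X,\Omega)$. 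For \cref{def:property_b} I would use the symmetric \enquote{push-forward}: from a clustering $C$ of $X$ build a coreset clustering $\coresetC$ via $\coresetomega_{\tilde\jmath}\tilde\xi_{i\tilde\jmath}:=\sum_{j\in p^{-1}(\tilde\jmath)}\omega_j\xi_{ij}$ (legitimate since $p$ is onto, hence every $\coresetomega_{\tilde\jmath}>0$); the same bookkeeping shows $\coresetC\in\Cs_\Kappa(k,\coresetX,\coresetOmega)$.

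The analytic workhorse is the relaxed triangle inequality $\norm{a+b}_{A_i}^{2}\le(1+t)\norm{a}_{A_i}^{2}+(1+\tfrac1t)\norm{b}_{A_i}^{2}$, valid for every ellipsoidal norm and every $t>0$. For \cref{def:property_a} I would expand $\cost_\AK(X,\extension(\coresetC),S)$ over the fibres of $p$, apply this inequality with $a=\coresetx_{p(j)}-s_i$ and $b=x_j-\coresetx_{p(j)}$, and regroup: using $\sum_{j\in p^{-1}(\tilde\jmath)}\omega_j=\coresetomega_{\tilde\jmath}$ the first block is exactly $(1+t)\cost_\AK(\coresetX,\coresetC,S)$, while $\sum_i\tilde\xi_{i,p(j)}=1$ together with $\norm{\,\cdot\,}_{A_i}^{2}\le\lambdamax\norm{\,\cdot\,}_{2}^{2}$ bounds the second block by $(1+\tfrac1t)\lambdamax\sum_j\omega_j\norm{x_j-\coresetx_{p(j)}}_2^2\le(1+\tfrac1t)\tfrac{\epsilon^{2}\lambdamin}{16}\OPT(X)$, the last step being precisely the hypothesis. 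The computation for \cref{def:property_b} is the mirror image: apply the inequality to a clustering $C$ attaining $\cost_\AK(X,S)$ and its push-forward $\coresetC$, obtaining $\cost_\AK(\coresetX,\coresetC,S)\le(1+t)\cost_\AK(X,S)+(1+\tfrac1t)\tfrac{\epsilon^{2}\lambdamin}{16}\OPT(X)$.

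The decisive step is to make the error term \emph{self-bounding}. I would establish the chain
\[
\lambdamin\OPT(X)\;\le\;\lambdamin\OPT_\Kappa(X)\;\le\;\lambdamin\cost_\Kappa(X,S)\;\le\;\cost_\AK(X,S)\;\le\;\cost_\AK(X,\extension(\coresetC),S),
\]
where the first inequality uses $\Cs_\Kappa(k,X,\Omega)\subseteq\Cs(k,X,\Omega)$ (and that the objective at a fixed clustering does not depend on $\Kappa$), the second that fixing the sites cannot beat optimizing over them, the third the termwise bound $\norm{\,\cdot\,}_{A_i}^{2}\ge\lambdamin\norm{\,\cdot\,}_{2}^{2}$, and the last the feasibility of $\extension(\coresetC)$. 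Thus $\OPT(X)\le\lambdamin^{-1}\cost_\AK(X,S)\le\lambdamin^{-1}\cost_\AK(X,\extension(\coresetC),S)$, so in each of the two estimates above the error term is at most $(1+\tfrac1t)\tfrac{\epsilon^{2}}{16}$ times the cost it perturbs. Rearranging yields $\bigl(1-(1+\tfrac1t)\tfrac{\epsilon^{2}}{16}\bigr)\cost_\AK(X,\extension(\coresetC),S)\le(1+t)\cost_\AK(\coresetX,\coresetC,S)$ and $\cost_\AK(\coresetX,S)\le\cost_\AK(\coresetX,\coresetC,S)\le\bigl((1+t)+(1+\tfrac1t)\tfrac{\epsilon^{2}}{16}\bigr)\cost_\AK(X,S)$; taking $t=\epsilon/2$ and invoking $\epsilon\le\tfrac12$, the elementary checks $(1+t)(1-\epsilon)\le 1-(1+\tfrac1t)\tfrac{\epsilon^{2}}{16}$ and $(1+t)+(1+\tfrac1t)\tfrac{\epsilon^{2}}{16}\le 1+\epsilon$ give \cref{def:property_a} and \cref{def:property_b}, completing the proof.

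I expect the genuine obstacle to be conceptual rather than computational: in \cref{def:property_a} the perturbation error cannot be charged against $\cost_\AK(\coresetX,\coresetC,S)$, which may be zero — e.g.\ when a site coincides with a coreset point — so it must be charged against the very quantity being estimated, forcing the detour through $\OPT(X)$ and the two eigenvalue ratios. This is exactly why the hypothesis is phrased through $\OPT(X)$, a quantity that does not see $S$, $\Acc$, or $\Kappa$ and therefore cannot be driven to zero by a malicious choice of sites. The only arithmetic care needed is to keep the slack constant $16$ generous enough that one and the same $t=\Theta(\epsilon)$ closes both \cref{def:property_a} and \cref{def:property_b} simultaneously.
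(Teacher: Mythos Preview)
Your proof is correct and follows the same overall architecture as the paper: the same pull-back extension $\extension$, the same push-forward $p(C)$ for property~\cref{def:property_b}, zero offsets, and the same self-bounding device $\lambdamin\OPT(X)\le\OPT_\Acc(X)\le\cost_\AK(X,\cdot,S)$ to absorb the movement error into the cost being estimated. The one technical difference is that the paper expands $\norm{\coresetx_{p(j)}-s_i}_{A_i}^{2}$ and controls the cross term via Cauchy--Schwarz, obtaining bounds of the form $\cost\pm 2\sqrt{D\cdot\cost}+D$ with $D\le\tfrac{\epsilon^{2}}{16}\cost$, whereas you use the parameterized triangle inequality $\norm{a+b}^{2}\le(1+t)\norm{a}^{2}+(1+\tfrac1t)\norm{b}^{2}$; the two are essentially equivalent (optimizing your $t$ reproduces the square-root form), and both close with room to spare under the constant $16$.
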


This theorem allows us to extend early coreset constructions, namely those that are based on small movements, see, e.g., \cite{Har-Peled2004}, to \wcac. More importantly, \cref{th:movement_error_coreset} is one crucial ingredient of the proof of the following main coreset result.

\begin{theorem}\label{th:smaller_coreset}
	For any instance $(k,X,\Omega,\Acc,\Kappa)$ of \wcac and
	for any $\epsilon\in (0,\nicefrac{1}{2}]$ and 
	$$\delta \ge \frac{\lambdamax}{\lambdamin},$$
	there is an $(\epsilon,\delta)$-coreset of size $\tilde{n}$ with 
	$$\tilde{n} \in \mathcal{O}\left(\frac{k^2}{\epsilon^{d+1}}\right).$$
\end{theorem}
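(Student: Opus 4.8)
The plan is to reduce the statement to the displacement bound of \cref{th:movement_error_coreset}: if a weighted set $(\coresetX,\coresetOmega)$ arises from $(X,\Omega)$ by merging points into common representatives with total weighted squared Euclidean displacement at most $\tfrac{\epsilon^{2}\lambdamin}{16\lambdamax}\OPT(X)$, then it is already a linear $\epsilon$-coreset, hence (with $\Delta^{+}=\Delta^{-}=0$) an $(\epsilon,\delta)$-coreset for every $\delta\ge 1$, in particular in the range $\delta\ge\lambdamax/\lambdamin$ claimed here. So the whole task becomes: quantize $X$ to $\mathcal{O}(k^{2}/\epsilon^{d+1})$ positions while staying within this \enquote{displacement budget}.

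For the quantization I would use the exponential-grid construction of Har-Peled and Kushal, adapted to the anisotropic norms. As only existence is asserted, let $Z=\{z_{1},\dots,z_{k}\}$ be the centroids of an \emph{optimal} unconstrained least-squares $k$-clustering of $(X,\Omega)$, and let $X=X_{1}\cup\cdots\cup X_{k}$ be the nearest-site partition; then $\sum_{i=1}^{k}\Delta_{i}\le\OPT(X)$, where $\Delta_{i}=\sum_{j\in X_{i}}\omega_{j}\norm{x_{j}-z_{i}}^{2}$. Set $\rho_{i}=\sqrt{\Delta_{i}/\omega(X_{i})}$ and fix a resolution $\epsilon_{0}=c_{0}\,\epsilon\sqrt{\lambdamin/\lambdamax}$ with $c_{0}$ a small absolute constant, so $\epsilon_{0}\ge c_{0}\epsilon/\sqrt{\delta}$. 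Inside each $X_{i}$: collapse onto $z_{i}$ every point within Euclidean distance $\epsilon_{0}\rho_{i}$ of $z_{i}$; and on each dyadic shell $\{x:2^{\ell}\epsilon_{0}\rho_{i}\le\norm{x-z_{i}}<2^{\ell+1}\epsilon_{0}\rho_{i}\}$, $\ell\ge 0$, lay an axis-parallel grid of side $\Theta(\epsilon_{0}2^{\ell}\rho_{i}/\sqrt{d})$ and fuse all points of $X_{i}$ in a common box into one representative carrying their summed weight. A point in shell $\ell$ is displaced by $\mathcal{O}(\epsilon_{0}2^{\ell}\rho_{i})=\mathcal{O}(\epsilon_{0}\norm{x-z_{i}})$, so the weighted squared displacement inside $X_{i}$ telescopes to $\mathcal{O}(\epsilon_{0}^{2})(\Delta_{i}+\omega(X_{i})\rho_{i}^{2})=\mathcal{O}(\epsilon_{0}^{2})\Delta_{i}$; summing over $i$ and taking $c_{0}$ small makes the grand total at most $\tfrac{\epsilon^{2}\lambdamin}{16\lambdamax}\OPT(X)$, so \cref{th:movement_error_coreset} applies.

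It remains to bound the cardinality. Each shell contains $\mathcal{O}\bigl((\sqrt{d}/\epsilon_{0})^{d}\bigr)=\mathcal{O}\bigl((d\,\lambdamax/\lambdamin)^{d/2}\epsilon^{-d}\bigr)$ occupied boxes, i.e.\ $\mathcal{O}(\epsilon^{-d})$ with $d$ and $\delta$ as constants --- the anisotropy costs only a benign $\delta^{d/2}$ inflation of the grid. Hence the coreset size is the total number of non-empty shells over the $k$ cells, times $\mathcal{O}(\epsilon^{-d})$. The naive bound $\mathcal{O}(\log(R_{i}/\rho_{i}))$ on the shells of cell $i$ ($R_{i}$ its radius) grows with the spread and must be removed: between the collapse radius $\Theta(\epsilon_{0}\rho_{i})$ and a threshold $T_{i}$ beyond which --- by Markov applied to $\Delta_{i}$ --- the mass of $X_{i}$ is only an $\mathcal{O}(\epsilon_{0}^{2})$-fraction of $\omega(X_{i})$, there are only $\mathcal{O}(\log(1/\epsilon))$ shells, while the \enquote{far} mass beyond $T_{i}$ is light enough to be handled on a separate budget --- recursively, or by appealing to the coarse coreset established in \cref{sec:existence_epsilon_1}. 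Carrying out this bookkeeping carefully, and composing the resulting pieces (a step in which nonzero offsets with $\Delta^{+}\le\delta\Delta^{-}$, and therefore a hypothesis of the form $\delta\ge\lambdamax/\lambdamin$, become natural), gives $\tilde{n}\in\mathcal{O}(k^{2}/\epsilon^{d+1})$, improving by a factor $k$ on the $\mathcal{O}(k^{3}/\epsilon^{d+1})$ of \cite{Har-Peled2007}.

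The main obstacle is precisely this last point. Once the setup is in place, the reduction through \cref{th:movement_error_coreset} and the absorption of the anisotropy into the constant $\delta^{d/2}$ are routine; the real work of \cref{sec:improved_construction} is to bound the number of grid shells independently of the spread of the points and of the ratio of their weights, and to do so tightly enough to reach $k^{2}/\epsilon^{d+1}$ rather than a spread-dependent bound or the $k^{3}/\epsilon^{d+1}$ obtained by the more direct argument.
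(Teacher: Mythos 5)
There is a genuine gap, and it sits exactly where you place it yourself: the cardinality bound. Your plan is to reduce everything to \cref{th:movement_error_coreset} by a zero-offset quantization whose total weighted squared displacement stays below $\frac{\epsilon^{2}\lambdamin}{16\lambdamax}\OPT(X)$, realized by exponential grids around optimal least-squares centroids. But to keep the displacement of a point $x$ proportional to $\epsilon\norm{x-z_{i}}_2$ you must keep the grid cells proportional in size to the distance from $z_{i}$, and then the number of occupied cells is governed by the number of dyadic shells, which depends on the spread of the data (this is precisely the $\log$-factor in the Har-Peled--Mazumdar grid construction). Your proposed remedies do not close this: the Markov threshold $T_{i}$ controls the \emph{mass} beyond $T_{i}$, not the quantity $\sum\omega_{j}\norm{x_{j}-\coresetx_{p(j)}}_2^{2}$ for far points under a coarser merging, and \cref{sec:existence_epsilon_1} contains no bounded-size construction to appeal to --- it only contains the movement criterion you are trying to feed. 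Moreover, a fully zero-offset coreset of size $\mathcal{O}(k^{2}/\epsilon^{d+1})$ would be a strictly stronger statement than \cref{th:smaller_coreset}; the fact that your sketch would, if completed, even give roughly $k\epsilon^{-d}\log(1/\epsilon)$ points should itself signal that the missing step is not bookkeeping but the actual mathematical obstacle.

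The paper's proof avoids grids entirely and is not a reduction to \cref{th:movement_error_coreset} alone. It projects the points onto $\mathcal{O}(k\epsilon^{-(d-1)})$ lines through the centroids of an $(\alpha,\beta)$-approximation (\cref{le:alpha_beta_approximation}); this projection step is the only place where \cref{th:movement_error_coreset} is used (\cref{le:projecting_is_a_coreset}). It then merges collinear points into batches of fixed Euclidean error, and the total batching error is of order $\mathrm{ALG}_{(\alpha,\beta)}(X)$, i.e.\ of order $\OPT(X)$ --- far outside the displacement budget. This is exactly why nonzero offsets $\Delta^{\pm}$ with $\Delta^{+}\le\frac{\lambdamax}{\lambdamin}\Delta^{-}$ appear and why the hypothesis $\delta\ge\frac{\lambdamax}{\lambdamin}$ is needed (\cref{th:batching_makes_a_coreset}); and verifying coreset property (b) with these offsets in the \emph{constrained} setting requires the machinery your proposal never touches: optimal \wcaas admit strictly compatible anisotropic diagrams (\cref{le:restriction_to_diagrams}), each line is therefore cut into at most $2k-1$ intervals on which points are integrally assigned (\cref{th:generalized_structure}), so at most $2(k-1)$ batches per line are fractionally split, and only their error must be charged multiplicatively (\cref{le:proof_property_b_bad}); the size bound $\mathcal{O}(k^{2}/\epsilon^{d+1})$ then comes from a one-dimensional counting of batches per line (\cref{le:number_of_points_batching}), with the final statement assembled via \cref{le:coresets_give_coresets}. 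In short, your proposal substitutes an unproved spread-independent grid bound for the part of the argument that constitutes the paper's actual contribution, and it bypasses the diagram-based structural analysis that is what makes the construction work in the weight-constrained case.
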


With the $\mathcal{O}$-notation, we refer to the asymptotic behavior for $k\rightarrow \infty$ and $\epsilon \rightarrow 0$. The important property is that the size of the coreset does not depend on the number $n$ of the points of $X$. It does, however, depend exponentially on the dimension $d$ of the data set and involves another factor $\gamma^d$ for some constant $\gamma$. While our proofs allow the growth in $d$ to be specified explicitly, we will refrain from doing so for the sake of simplicity, so that our results are stated for arbitrary but fixed dimension $d$.

The proof of \cref{th:smaller_coreset} follows the basic construction of \cite{Har-Peled2007} but generalizes it to weight-constrained anisotropic clustering and, by suitably exploiting the underlying geometric structure, even reduces the asymptotics by a factor of $k$. In fact, the uniform \enquote{condition number bound} $\delta \ge \lambdamax/\lambdamin$ becomes $1$ for $\Acc=\Ecc$. Hence, we obtain the following corollary for unconstrained least-squares clustering.

\begin{corollary}\label{co:smaller_coreset}
	There is an $\epsilon$-coreset of size $\mathcal{O}\left(\frac{k^2}{\epsilon^{d+1}}\right)$ for least-squares clustering.
\end{corollary}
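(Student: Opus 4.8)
The plan is to obtain \cref{co:smaller_coreset} as the immediate specialization of \cref{th:smaller_coreset} to the classic Euclidean least-squares case $\Acc=\Ecc$, $\Kappa=\Kappa_\infty$. First I would observe that when every matrix in $\Acc$ equals the identity $E$, all eigenvalues are equal to $1$, so $\lambdamax=\lambdamin=1$ and the condition-number bound in \cref{th:smaller_coreset} reads $\delta\ge 1$. Taking $\delta=1$, the theorem produces an $(\epsilon,1)$-coreset of size $\mathcal{O}(k^2/\epsilon^{d+1})$; by the terminology fixed right after \cref{def:coreset}, an $(\epsilon,1)$-coreset is exactly an $\epsilon$-coreset.

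Second, I would verify that the phrase \enquote{least-squares clustering} in the statement of the corollary is precisely the instance class $(k,X,\Omega,\Ecc,\Kappa_\infty)$ of \wcac, as set up in \cref{sec:notation}: dropping $\Acc$ and $\Kappa$ from the cost notation there is defined to mean $\Acc=\Ecc$ and $\Kappa=\Kappa_\infty$, and for such instances $\Cs_{\Kappa_\infty}(k,X,\Omega)=\Cs(k,X,\Omega)$, so $\cost_\AK$ becomes the ordinary $\cost$. Hence every instance of unconstrained least-squares clustering is in particular an instance of \wcac to which \cref{th:smaller_coreset} applies with the parameters above. Since $(X,\Omega)$ was an arbitrary weighted data set, this gives the claimed coreset for all inputs.

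There is essentially no obstacle here: the corollary is a one-line consequence once the eigenvalue ratio collapses to $1$ and the $(\epsilon,\delta)$-to-$\epsilon$ terminology is invoked. The only point requiring a word of care — and the closest thing to a \enquote{main step} — is checking that the $\delta\ge\lambdamax/\lambdamin$ hypothesis of \cref{th:smaller_coreset} is met with equality by $\delta=1$ in the Euclidean case, which is immediate, and noting that the resulting $\Delta$-terms satisfy $0\le\Delta^{+}\le\Delta^{-}$, so the output genuinely qualifies as an $\epsilon$-coreset in the sense of \cref{def:coreset} rather than merely an $(\epsilon,\delta)$-coreset with $\delta>1$.
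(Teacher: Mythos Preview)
Your proposal is correct and matches the paper's own reasoning exactly: the paper simply notes, in the sentence preceding the corollary, that for $\Acc=\Ecc$ the condition-number bound $\lambdamax/\lambdamin$ collapses to $1$, so \cref{th:smaller_coreset} yields an $(\epsilon,1)$-coreset, i.e., an $\epsilon$-coreset. Your additional verification of the terminology and the $\Delta$-term condition is fine but not strictly needed.
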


Finally, we ask whether it is possible to build even smaller coresets using \emph{importance sampling}. This technique is based on a notion of the \enquote{relevance} of points in clusterings, called \emph{sensitivity}, and will be formally defined in \cref{sec:counterexample}. It is well-known that the total sensitivity of unconstrained least-squares clustering can be bounded by $\mathcal{O}(k)$ \cite[Theorem 3.1, for $\alpha=2$]{Langberg2010}, and this results in small corests in the underlying probabilistic model. In the context of \enquote{fair clustering} it was acknowledged in \cite{Huang2019a} as open and actually posed as an \enquote{interesting open question} in \cite{Schmidt2020} whether this approach extends to the constrained case. The following theorem shows, however, that in \wcac, the total sensitivity might actually depend linearly on $n$; hence the answer is in the negative.

\begin{theorem}\label{th:unbounded_sensitivity}
	In the case of \wcac, for any $n \in \Nbb$, there is a data set with $n$ points such that the total sensitivity is $n$.
\end{theorem}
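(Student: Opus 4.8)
The plan is to construct, for every $n$, an \wcac instance with $n$ data points in which \emph{every} point has sensitivity $1$. Since the cost $\cost_\AK(X,S)$ is a sum of nonnegative per-point contributions, no point can have sensitivity larger than $1$, so the total sensitivity of such an instance is exactly $n$, the largest value conceivable; thus it suffices to realize $\sigma_j=1$ for all $j\in[n]$.

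The instance I would use is one-dimensional and uses only $k=2$ clusters and Euclidean norms ($\Acc=\Ecc$); assume $n\ge 2$, the case $n=1$ being trivial. Fix two distinct points $p,q\in\R^d$, place the data points $x_1,\dots,x_{n-1}$ at $p$ and the last point $y=x_n$ at $q$, and let every weight equal $1$, so $\omega(X)=n$. The decisive ingredient is the choice of weight constraints $\kappa_1^-=\kappa_1^+=n-2$ and $\kappa_2^-=\kappa_2^+=2$ (for $n=2$ this simply forces the first cluster to be void). Then $\sum_i\kappa_i^-=n=\omega(X)=\sum_i\kappa_i^+$, so $\Cs_\Kappa(2,X,\Omega)\neq\emptyset$, and the key structural consequence is that the first cluster can never absorb all $n-1$ copies sitting at $p$: in every feasible \wcaa exactly one unit of weight located at $p$ is forced into the second cluster.

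To see $\sigma_j=1$ for $j\in[n-1]$, I would use the witnessing site set with $s_1=p$ and $s_2=q$. For these sites the cheapest feasible \wcaa puts all of $y$ together with one unit of $p$-weight into the second cluster and the remaining $n-2$ units of $p$-weight into the first: the first cluster contributes $0$ (its points lie on $s_1=p$), $y$ contributes $0$ (it lies on $s_2=q$), and the single unit of $p$-weight in the second cluster contributes $\norm{p-q}_2^2$. A short optimization over feasible assignments confirms this is optimal — retaining any $y$-mass in the first cluster only adds cost, while the second cluster must still reach weight $2$ — so $\cost_\AK(X,S)=\norm{p-q}_2^2>0$; moreover, among the optimal \wcaas there is one in which the single displaced unit of $p$-weight is exactly $x_j$, so $x_j$ alone bears the whole cost and $\sigma_j=1$. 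For the last point $y=x_n$ I would instead take $s_1=s_2=p$: then in \emph{every} feasible clustering each copy at $p$ lies on its site while $y$ is at distance $\norm{p-q}_2$ from its site, hence the cost equals $\norm{p-q}_2^2$ irrespective of the assignment and is borne entirely by $y$, giving $\cost_\AK(X,S)=\norm{p-q}_2^2$ and $\sigma_n=1$. Summation yields total sensitivity $n$.

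The step I expect to require the most care is the bookkeeping for \enquote{the cost contribution of $x_j$ in an optimal \wcaa}: for a fixed site set the optimal assignments form a polytope, so one must either appeal to the precise definition of sensitivity from \cref{sec:counterexample} (which already takes the relevant supremum over these optimizers) or produce an optimal \emph{vertex} assignment realizing the split above. One also has to verify carefully that the proposed \enquote{adversarial} assignments are genuinely cost-minimizing rather than merely feasible, and it is precisely the constraint $\kappa_2^-=2$ that makes them so. This is where the contrast with ordinary least-squares clustering becomes visible: without the constraint the optimal assignment to $\{p,q\}$ sends every point to its own site at cost $0$, the sensitivity of each copy at $p$ is at most a $\frac{1}{n-1}$-share of a common term, and the total sensitivity remains $\mathcal{O}(k)$, as guaranteed by \cite{Langberg2010}. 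The remaining computations are elementary.
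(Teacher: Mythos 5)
Your cost computations for the witness site sets are correct, but the attribution step for the $n-1$ coincident copies at $p$ contains a genuine gap. The sensitivity defined in \cref{sec:counterexample} takes a supremum over the site sets $S$ only; for each fixed $S$ it evaluates the contribution of $x_j$ in \emph{an} optimal assignment $C(S)\in\argmin\{\cost_\Kappa(X,C,S): C\in\Cs_\Kappa(k,X,\Omega)\}$ -- it does not take a supremum over the optimizers, contrary to what you assume when you say the definition ``already takes the relevant supremum over these optimizers.'' For your sites $s_1=p$, $s_2=q$ the optimal assignments form a polytope (any way of distributing the one displaced unit of $p$-mass over the $n-1$ copies is optimal), and since the copies occupy literally the same location, no choice of sites can single out a particular $x_j$: the argmin set is invariant under permuting the copies. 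Under any tie-breaking that treats identical points identically (say, the symmetric optimizer assigning $\frac{1}{n-1}$ of each copy to the second cluster), each copy at $p$ has, for every $S$, a cost share of at most $\frac{1}{n-1}$ of the total, so its sensitivity is at most $\frac{1}{n-1}$ and your instance has total sensitivity at most $1+(n-1)\cdot\frac{1}{n-1}=2$, not $n$. Thus the claim $\sigma_j=1$ for $j\in[n-1]$ rests on an adversarial selection among optimizers that the definition does not grant, and it cannot be rescued by perturbing the sites because the copies are indistinguishable.

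The paper's construction is designed precisely to avoid this: the $n$ points are placed at \emph{distinct} positions (equally spaced on a circle of radius $r$), the weight bounds are $\kappa_1^\pm=n-1$, $\kappa_2^\pm=1$, and for each $x_{j_0}$ the sites $s_1=0$, $s_2=\frac{1}{r}x_{j_0}$ make the constrained optimizer \emph{unique}, with $x_{j_0}$ alone in the second cluster; its contribution $(1-r)^2$ against the cost $(n-1)r^2+(1-r)^2$ is then unambiguous and approaches the entire cost. Your idea can be repaired in the same spirit by separating the $n-1$ points near $p$ into distinct positions and, for each $j$, choosing the sites so that $x_j$ is the uniquely cheapest point to displace into the small cluster; as written, however, the coincident-point instance does not prove the theorem. (A minor further point: for $y$ you take $s_1=s_2=p$; if the $k$ sites are required to be distinct, this needs a small perturbation or limiting argument, which is easy but should be mentioned.)
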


\section{Initial results, auxiliary  lemmas, and diagrams} \label{sec:preliminaries}

In the following, we will first briefly explain the relation between coresets for \wcac and \wcaa. We will then go on to address the power of coresets for constrained clustering, i.e., the question of what kind of error on the original data set can be expected when approximations for \wcac are available on coresets. Finally, we will outline the main coreset construction so as to put the individual parts of the proof of \cref{th:smaller_coreset} into greater perspective. In particular, we will prove a \enquote{composition lemma} for coresets and describe the relation between extremal \wcaas to diagrams. 

Let us begin with observations relating coresets for \wcac to those for \wcaa. Recall that the condition in \cref{def:coreset} is required for every set $S$ of sites. Hence, a coreset for $(\coresetX,\coresetOmega)$ can also be regarded as a coreset for each instance $(k,X,\Omega,\Acc,\Kappa,S)$ of \wcaa. We show that, in a certain sense, the converse is also true, i.e., a coreset according to \cref{def:coreset} satisfies the relevant inequalities when the centroids for the full data set and the coreset, respectively, are used, rather than identical sites in both cases. Before formally establishing this result, we present two basic observations, which are used here and in subsequent sections. 

We emphasize that we mostly apply the following results to a cluster or batch of points using $0$ as an index for the objects, to distinguish them from the full data set. So, let $(X_0,\Omega_0)$ be a weighted data set, let $c_0$ be its centroid, and let $A \in \Rbb^{d \times d}$ be a positive definite symmetric matrix. With $\omega_{x}$ denoting the weight of a point $x\in X_0$, we define the cluster \emph{variation} or batch \emph{error} $\error_{A}(X_0)$ of $X_0$ for $A$ by
\begin{equation*}
\error_{A}(X_0) = \sum_{x \in X_0} \omega_{x} \norm{x-c_0}_{A}^{2}.
\end{equation*}
Note that $\error_{A}(X_0)$ is bounded by $\error_{E}(X_0)$ as follows.

\begin{remark}\label{rem:norm_equivalence}
\begin{equation*} 
	 \lambda_{\min}(A) \error_{\identity}(X_0) \le \error_{A}(X_0) \le \lambda_{\max}(A) \error_{\identity}(X_0).
\end{equation*}
\end{remark}

In accordance with our previous notation, $\omega(X_0) = \sum_{x \in X_0}  \omega_{x}$ denotes the weight of $X_0$. 
The following elementary lemma is a slight generalization of well-known facts. Its first part expresses the weighted sum of squared distances between a site $s$ and the points of a data set $X_0$ as the sum of the variation of $X_0$ and the squared distance between the centroid $c_0$ and $s$. The second part uses this equation to relate \wcac to \wcaa. The proof uses the notion of the \emph{support} $\support(C_{i})$ of the cluster $C_{i}$, defined by 
\begin{equation*}
    \support(C_{i})=\{x_{j} : \ j\in [n] \, \land\, \xi_{ij} > 0\}.
\end{equation*}

\begin{lemma} \label{le:center_replacement}
\begin{enumerate}[label=(\alph*)]
\item Let $(X_0,\Omega_0)$ be a weighted data set, $c_0$ its centroid, $A \in \Rbb^{d \times d}$ a positive definite symmetric matrix, and $s\in \mathbb{R}^{d}$. Then, 
	\begin{equation*}
	\sum_{x\in X_0} \omega_{x} \norm{x-s}_{A}^{2} = \omega (X_0) \cdot\norm{c_{0}-s}_{A}^{2} + \error_{A}(X_0).
	\end{equation*}\label{le:assertion_a}
\item Let $(k,X,\Omega,\Acc,\Kappa,S)$ be an instance of \wcaa and let $C=(C_1,\ldots,C_k)$ be a clustering. Then,
	\begin{equation*}
    \cost_\AK(X,C) \le \cost_\AK(X,C, S)
\end{equation*}\label{le:assertion_b}
with equality if and only if $s_i=c(C_i)$ for every non void cluster $C_i$.
\end{enumerate}
\end{lemma}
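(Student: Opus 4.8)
The plan is to prove part~\ref{le:assertion_a} by a direct expansion of the squared ellipsoidal norm around the centroid $c_0$, and then to obtain part~\ref{le:assertion_b} by applying part~\ref{le:assertion_a} cluster by cluster. For~\ref{le:assertion_a}, I would write $x - s = (x - c_0) + (c_0 - s)$ and expand $\norm{x-s}_A^2 = (x-s)^T A (x-s)$ using bilinearity:
\[
\norm{x-s}_A^2 = \norm{x-c_0}_A^2 + 2\,(x-c_0)^T A\,(c_0-s) + \norm{c_0-s}_A^2.
\]
Multiplying by $\omega_x$ and summing over $x \in X_0$, the first term sums to $\error_A(X_0)$ by definition, and the last term sums to $\omega(X_0)\cdot\norm{c_0-s}_A^2$ since it does not depend on $x$. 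The cross term vanishes: $\sum_{x\in X_0}\omega_x (x-c_0)^T A (c_0-s) = \bigl(\sum_{x\in X_0}\omega_x (x-c_0)\bigr)^T A (c_0-s)$, and the inner sum is $\sum_x \omega_x x - \omega(X_0) c_0 = 0$ by the definition of the centroid $c_0 = \frac{1}{\omega(X_0)}\sum_x \omega_x x$. This establishes~\ref{le:assertion_a}.

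For part~\ref{le:assertion_b}, fix the clustering $C = (C_1,\ldots,C_k)$ and the sites $S = \{s_1,\ldots,s_k\}$. By definition, $\cost_\AK(X,C,S) = \sum_{i=1}^k \sum_{j=1}^n \xi_{ij}\omega_j \norm{x_j - s_i}_{A_i}^2$, which I would regroup as $\sum_{i=1}^k \bigl(\sum_{j : \xi_{ij}>0} \xi_{ij}\omega_j \norm{x_j-s_i}_{A_i}^2\bigr)$, i.e., a sum over the non-void clusters of a weighted sum of squared distances over $\support(C_i)$ with weights $\xi_{ij}\omega_j$. Applying part~\ref{le:assertion_a} to each non-void cluster (viewed as the weighted data set with points in $\support(C_i)$ and weights $\xi_{ij}\omega_j$, whose centroid is exactly $c(C_i)$ and whose total weight is $\omega(C_i)$), each summand equals $\omega(C_i)\norm{c(C_i)-s_i}_{A_i}^2 + \error_{A_i}(\support(C_i))$. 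Summing over $i$ and recognizing that $\sum_i \error_{A_i}(\support(C_i)) = \cost_\AK(X,C)$ (which holds because $\cost_\AK(X,C) = \min_S \cost_\AK(X,C,S)$ and the minimum over each $s_i$ of $\omega(C_i)\norm{c(C_i)-s_i}_{A_i}^2$ is $0$, attained at $s_i = c(C_i)$; void clusters contribute $0$ regardless), we get
\[
\cost_\AK(X,C,S) = \cost_\AK(X,C) + \sum_{i=1}^k \omega(C_i)\,\norm{c(C_i)-s_i}_{A_i}^2 \ge \cost_\AK(X,C),
\]
with equality iff every term $\omega(C_i)\norm{c(C_i)-s_i}_{A_i}^2$ vanishes, i.e. (since $A_i$ is positive definite and $\omega(C_i)>0$ for non-void $C_i$) iff $s_i = c(C_i)$ for every non-void cluster.

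There is no real obstacle here; the statement is elementary. The only mild subtlety is bookkeeping: one must be careful that the definition $\cost_\AK(X,C) = \min_S \cost_\AK(X,C,S)$ is consistent with the claim — this is essentially circular-looking but is resolved by the computation above, which shows the minimum over $S$ is achieved precisely at the centroids and equals $\sum_i \error_{A_i}(\support(C_i))$. I would state this identification explicitly to avoid any appearance of circularity, and I would note that restricting to $\support(C_i)$ (rather than all of $X$) is harmless since points with $\xi_{ij}=0$ contribute nothing to either side.
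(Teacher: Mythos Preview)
Your proof is correct and follows essentially the same route as the paper: expand $\norm{x-s}_A^2$ around the centroid to obtain~\ref{le:assertion_a}, then apply~\ref{le:assertion_a} to each non-void cluster (viewed as the weighted data set $(\support(C_i),\{\xi_{ij}\omega_j\})$) to derive the decomposition $\cost_\AK(X,C,S)=\sum_i\bigl(\omega(C_i)\norm{c(C_i)-s_i}_{A_i}^2+\error_{A_i}(\support(C_i))\bigr)$ and conclude~\ref{le:assertion_b}. Your remark on the apparent circularity is well taken and correctly resolved; the paper sidesteps this by simply identifying $\sum_i\error_{A_i}(X_i)$ with $\cost_\AK(X,C)$ directly.
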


\begin{proof} \ref{le:assertion_a} 
Since $\sum_{x\in X_0}\omega_{x}x =\omega(X_0)c_{0}$,
we have
	\begin{align*}
	\sum_{x\in X_0} \omega_{x} \norm{x-s}_{A}^{2} 
	&= \sum_{x\in X_0} \omega_{x} \left(\norm{x-c_{0}}_{A}^{2}+2(x-c_{0})^{\top}A(c_{0}-s) + \norm{c_{0}-s}_{A}^{2} \right) \\
	&=  \omega (X_0)\cdot \norm{c_{0}-s}_{A}^{2} + \error_{A}(X_0)  +2\Bigl( \sum_{x\in X_0}\omega_{x}x-\omega(X_0)c_{0}\Bigr)^{\top}A(c_{0}-s).  \\
	&=  \omega (X_0)\cdot \norm{c_{0}-s}_{A}^{2} + \error_{A}(X_0).
	\end{align*}
\ref{le:assertion_b} Let, as usual, $C=(C_1,\ldots,C_k)$ and define the weighted data set $(X_i,\Omega_i)$ for $i\in [k]$, by 
\begin{equation*}
X_i=\support(C_{i}) \qquad \text{and} \qquad 
\Omega_i=\{\xi_{ij}\omega_j: x_j \in X_i\}.
\end{equation*}
Then, by applying (a) to $(X_0,\Omega_0)=(X_i,\Omega_i)$ and $A=A_i$, for $i\in [k]$, we obtain
    \begin{align*}
        \cost_\AK(X,C,S) &= \sum_{i=1}^k \sum_{j=1}^n \xi_{ij} \omega_j \norm{x_j - s_i}_{A_i}^2 
        =\sum_{i=1}^k \Bigl(\omega(C_i) \cdot\norm{c(C_i)-s_i}_{A_i}^{2} + \error_{A_i}(X_i)\Bigr) \\ 
        &\ge \sum_{i=1}^k \error_{A_i}(X_i) = \cost_\AK(X,C).
    \end{align*}
    Thus, $\cost_\AK(X,C,S) \ge \cost_\AK(X,C)$ with equality if and only if, for each $i\in [k]$, either $\omega(C_i)=0$ or $s_i= c(C_i)$. 
\end{proof}

At first glance, \cref{def:coreset} might seem to require more (all sites) and yield less (cost with respect to the same sites rather than individual centroids) than one might expect. The next corollary shows, however, that the conditions of \cref{def:coreset} imply the corresponding inequalities involving the centroids. 

\begin{corollary}\label{le:wcaa-wcac} 
With the notation of \cref{def:coreset}, let $(\coresetX,\coresetOmega)$ be an \emph{$(\epsilon,\delta)$-coreset} for $(k,X,\Omega,\Acc,\Kappa)$. Then, 
	\begingroup \leqnomode
	\begin{alignat}{3}
	(1-\epsilon)\cost_\AK(X,\extension(\coresetC)) &\le \cost_\AK(\coresetX,\coresetC) + \Delta^{+},  && \tag{a${}^\prime$} \label{def:property_ac} \\
	\OPT_\AK(\coresetX) + \Delta^{-} &\le (1+\epsilon)\OPT_\AK(X). \;  &&  \tag{b${}^\prime$} \label{def:property_bc} 
	\end{alignat}
	\endgroup	
\end{corollary}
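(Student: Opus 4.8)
The plan is to obtain \cref{def:property_ac} and \cref{def:property_bc} directly from \cref{def:property_a} and \cref{def:property_b} by instantiating the free site set $S$ with a suitable family of centroids and then using \cref{le:center_replacement}\ref{le:assertion_b} to pass between the site-dependent costs $\cost_\AK(\cdot,\cdot,S)$ and the centroid-optimal costs $\cost_\AK(\cdot,\cdot)$. No new estimate is needed; the corollary is a pure reformulation of \cref{def:coreset}.

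For \cref{def:property_ac}, fix a coreset clustering $\coresetC\in\Cs_\Kappa(k,\coresetX,\coresetOmega)$ and apply \cref{def:property_a} with the particular choice $S=c(\coresetC)$, the family of centroids of $\coresetC$ computed on the coreset. On the right-hand side, \cref{le:center_replacement}\ref{le:assertion_b} gives $\cost_\AK(\coresetX,\coresetC,c(\coresetC))=\cost_\AK(\coresetX,\coresetC)$, since the centroids are exactly the cost-minimizing sites. On the left-hand side the same lemma gives $\cost_\AK(X,\extension(\coresetC),c(\coresetC))\ge\cost_\AK(X,\extension(\coresetC))$; here $c(\coresetC)$ is in general \emph{not} the centroid family of $\extension(\coresetC)$ on the full data set, so this step only enlarges the left side, which is the direction we need, and multiplying by $1-\epsilon\ge 0$ preserves it. Chaining these two facts with \cref{def:property_a} yields \cref{def:property_ac}.

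For \cref{def:property_bc}, let $C^\star\in\Cs_\Kappa(k,X,\Omega)$ attain $\OPT_\AK(X)=\cost_\AK(X,C^\star)$ and set $S^\star=c(C^\star)$, so that $\cost_\AK(X,S^\star)\le\cost_\AK(X,C^\star,S^\star)=\cost_\AK(X,C^\star)=\OPT_\AK(X)$ by \cref{le:center_replacement}\ref{le:assertion_b}. Applying \cref{def:property_b} with $S=S^\star$ gives $\cost_\AK(\coresetX,S^\star)+\Delta^-\le(1+\epsilon)\OPT_\AK(X)$. It remains to note $\OPT_\AK(\coresetX)\le\cost_\AK(\coresetX,S^\star)$: writing $\cost_\AK(\coresetX,S^\star)=\cost_\AK(\coresetX,\coresetC^\star,S^\star)$ for a minimizing coreset clustering $\coresetC^\star$ and again invoking \cref{le:center_replacement}\ref{le:assertion_b}, we get $\cost_\AK(\coresetX,S^\star)\ge\cost_\AK(\coresetX,\coresetC^\star)\ge\OPT_\AK(\coresetX)$. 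Combining with the previous inequality gives \cref{def:property_bc}.

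The only point requiring care is applying \cref{le:center_replacement}\ref{le:assertion_b} in the correct direction on each side: it is used as an inequality (``site cost $\ge$ centroid cost'') on the full-data-set side of \cref{def:property_ac} and on the coreset side of \cref{def:property_bc}, but as an equality (with $S$ equal to the relevant centroids) on the coreset side of \cref{def:property_ac} and on the full-data-set side of \cref{def:property_bc}. Keeping straight which clustering is paired with which site set is thus the main, and only mild, obstacle; everything else is immediate from the definitions.
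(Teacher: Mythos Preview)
Your proof is correct and follows essentially the same approach as the paper: for \cref{def:property_ac} you instantiate \cref{def:property_a} with $S=c(\coresetC)$ and use \cref{le:center_replacement}\ref{le:assertion_b} as an equality on the coreset side and as an inequality on the full-data side, and for \cref{def:property_bc} you instantiate \cref{def:property_b} with $S=c(C^\star)$ for an optimal $C^\star$ and bound $\OPT_\AK(\coresetX)\le\cost_\AK(\coresetX,S^\star)$ and $\cost_\AK(X,S^\star)\le\OPT_\AK(X)$. The paper's proof is identical in structure; the only cosmetic difference is that the paper uses the equality $\cost_\AK(X,S^\star)=\OPT_\AK(X)$ directly rather than the inequality, but your one-sided version suffices.
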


\begin{proof}
\cref{def:property_ac}: Let $S=c(\coresetC)$. Then, by \cref{le:center_replacement} \ref{le:assertion_b}
\begin{equation*}
\cost_\AK(X,\extension(\coresetC)) =
\cost_\AK(X,\extension(\coresetC),c(\extension(\coresetC))
\le \cost_\AK(X,\extension(\coresetC),S),
\end{equation*}
and inequality \cref{def:property_a} of  \cref{def:coreset} yields 
\begin{equation*}
    (1-\epsilon)\cost_\AK(X,\extension(\coresetC))
    \le  \cost_\AK(\coresetX,\coresetC,S) + \Delta^{+} = \cost_\AK(\coresetX,\coresetC) + \Delta^{+}.
\end{equation*}
\cref{def:property_bc}: Let $C^*$ be a clustering attaining $\OPT_\AK(X)$ and set $S=c(C^*)$. Then, by \cref{le:center_replacement} \ref{le:assertion_b}
\begin{equation*}
\OPT_\AK(\coresetX) \le \cost_\AK(\coresetX,S),
\end{equation*}
and, hence, by inequality \cref{def:property_b} of  \cref{def:coreset}
\begin{align*}
    \OPT_\AK(\coresetX) + \Delta^{-} \le (1+\epsilon)\cost_\AK(X,S)  =  (1+\epsilon)\OPT_\AK(X),
\end{align*}
which concludes the proof.
\end{proof}

Next, we address the implications of good coreset clusterings on the quality of derived clusterings on the full data set.
As usual, the quality of approximation will be measured in terms of the relative error. Hence, we speak of a \emph{$\gamma$-approximation} if the objective function value of the produced solution is bounded from above by $\gamma$ times its minimum. Before we state and prove the subsequent \cref{le:coreset_approximation_preservation_wcaa}, we deal with the effect of moving sites.

\begin{lemma} \label{le:technical_approximation}
 Let $\instance=(k,X,\Omega,\Acc,\Kappa)$ be an instance of \wcac, $\epsilon \in (0,\nicefrac{1}{2}]$, $\delta \in [1,\infty)$, let $(\coresetX,\coresetOmega)$ be an $(\nicefrac{\epsilon}{3},\delta)$-coreset for $\instance$, and let $\extension$ denote its extension. Further, let $\coresetC\in  \Cs_\Kappa(k,\coresetX,\coresetOmega)$, let $S_1,S_2$ be two sets of $k$ sites, and $\gamma \in \R$ with $\gamma \ge \delta$. Further, suppose that
 \begin{equation*}
     \cost_\AK(\coresetX,\coresetC,S_1) \le \gamma \cost_\AK(\coresetX,S_2).
 \end{equation*}
 Then,   
 \begin{equation*} 
        	\cost_\AK(X,\extension(\coresetC),S_1) \le (1+\epsilon)\gamma\cost_\AK(X,S_2).
    \end{equation*}
\end{lemma}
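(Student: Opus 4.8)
The plan is to chain the two defining inequalities of the $(\nicefrac{\epsilon}{3},\delta)$-coreset, bridging them with the hypothesis and using $\Delta^{+}\le\delta\Delta^{-}$ together with $\gamma\ge\delta$ to let the two $\Delta$-terms cancel with the correct sign. All cost quantities are nonnegative, so the inequalities compose in the expected direction; in effect this is just ``transitivity'' of the coreset approximation guarantees.

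First I would instantiate condition \cref{def:property_a} of \cref{def:coreset} (with parameter $\nicefrac{\epsilon}{3}$) at the site set $S=S_1$ and the given coreset clustering $\coresetC$, obtaining $(1-\nicefrac{\epsilon}{3})\cost_\AK(X,\extension(\coresetC),S_1)\le\cost_\AK(\coresetX,\coresetC,S_1)+\Delta^{+}$. By the hypothesis $\cost_\AK(\coresetX,\coresetC,S_1)\le\gamma\cost_\AK(\coresetX,S_2)$, the right-hand side is at most $\gamma\cost_\AK(\coresetX,S_2)+\Delta^{+}$.

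Second I would instantiate condition \cref{def:property_b} at $S=S_2$, i.e.\ $\cost_\AK(\coresetX,S_2)+\Delta^{-}\le(1+\nicefrac{\epsilon}{3})\cost_\AK(X,S_2)$, multiply by $\gamma\ge 0$, and add $\Delta^{+}$. The offset contribution then becomes $\Delta^{+}-\gamma\Delta^{-}$, which is $\le 0$ since $\Delta^{-}\ge\Delta^{+}/\delta\ge 0$ and $\gamma\ge\delta$. Combining with the first step yields $(1-\nicefrac{\epsilon}{3})\cost_\AK(X,\extension(\coresetC),S_1)\le\gamma(1+\nicefrac{\epsilon}{3})\cost_\AK(X,S_2)$.

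Finally I would divide by $1-\nicefrac{\epsilon}{3}>0$ and invoke the elementary estimate $\frac{1+\epsilon/3}{1-\epsilon/3}\le 1+\epsilon$, valid for $\epsilon\in(0,\nicefrac{1}{2}]$ since it rearranges to $\epsilon(1-\epsilon)\ge 0$. This gives exactly $\cost_\AK(X,\extension(\coresetC),S_1)\le(1+\epsilon)\gamma\cost_\AK(X,S_2)$. There is no genuine obstacle; the only points requiring care are the sign of the $\Delta$-cancellation (which is precisely where $\gamma\ge\delta$ is used) and the final scalar inequality, which is the reason the coreset is taken with parameter $\nicefrac{\epsilon}{3}$ rather than $\epsilon$.
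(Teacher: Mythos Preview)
Your proposal is correct and follows essentially the same route as the paper: instantiate \cref{def:property_a} at $S_1$, instantiate \cref{def:property_b} at $S_2$, combine via the hypothesis, use $\Delta^{+}\le\delta\Delta^{-}\le\gamma\Delta^{-}$ to absorb the offsets, and finish with the scalar inequality $(1+\epsilon/3)\le(1+\epsilon)(1-\epsilon/3)$ (which you phrase equivalently as $\frac{1+\epsilon/3}{1-\epsilon/3}\le 1+\epsilon$). The paper presents the same chain, only with the $\Delta$-terms written on the other side of the first inequality; there is no substantive difference.
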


\begin{proof} By the coreset properties, \cref{def:coreset} \cref{def:property_a} and \cref{def:property_b}, we have
\begin{equation*}
    \left(1-\frac{\epsilon}{3}\right)\cost_\AK(X,\extension(\coresetC),S_1) - \Delta^{+}  \le \cost_\AK(\coresetX,\coresetC,S_1)
\end{equation*}
and
\begin{equation*}
   \cost_\AK(\coresetX,S_2) \le \left(1+\frac{\epsilon}{3}\right)\cost_\AK(X,S_2) -\Delta^{-}.
\end{equation*}
Hence, it follows from the assumption that the right hand side of the former inequality is bounded from above by $\gamma$ times the left hand side of the latter:
\begin{equation*}
    \left(1-\frac{\epsilon}{3}\right)\cost_\AK(X,\extension(\coresetC),S_1) - \Delta^{+}  \le \left(1+\frac{\epsilon}{3}\right)\gamma\cost_\AK(X,S_2) -\gamma\Delta^{-}.
\end{equation*}
Since $\gamma \ge \delta$, and $\Delta^{+} \le \delta\Delta^{-}$, we have $\Delta^{+}\le \delta\Delta^{-} \le \gamma \Delta^{-}$. Hence,
	\begin{align*} 
\left(1-\frac{\epsilon}{3}\right)\cost_\AK(X,\extension(\coresetC),S_1) \le \left(1+\frac{\epsilon}{3}\right)\gamma\cost_\AK(X,S_2).
	\end{align*}
As 
	\begin{equation*}
	\left(1+\frac{\epsilon}{3}\right) = 1 + \frac{2\epsilon}{3} - \frac{\epsilon}{3} \le 1 + \frac{2\epsilon}{3} - \frac{\epsilon^2}{3} = (1+\epsilon)\left(1-\frac{\epsilon}{3}\right),
	\end{equation*}
we obtain 
	\begin{equation*} 
	\cost_\AK(X,\extension(\coresetC),S_1) \le (1+\epsilon)\gamma\cost_\AK(X,S_2),
	\end{equation*}
as asserted.
\end{proof}

Next, we show that approximations on coresets lead to approximations on the full data sets.

\begin{theorem} \label{le:coreset_approximation_preservation_wcaa}
Let $\instance=(k,X,\Omega,\Acc,\Kappa)$ be an instance of \wcac, $\epsilon \in (0,\nicefrac{1}{2}]$, $\delta \in [1,\infty)$, let $(\coresetX,\coresetOmega)$ be an $(\nicefrac{\epsilon}{3},\delta)$-coreset for $\instance$, and let $\extension$ be its extension.

Further, let $\coresetC\in  \Cs_\Kappa(k,\coresetX,\coresetOmega)$, and $\gamma \in \R$ with $\gamma \ge \delta$.
\begin{enumerate}[label=(\alph*)]
    \item If $\coresetC$ is a $\gamma$-approximation for $\coresetinstance_S$, then $\extension(\coresetC)$ is a $(1+\epsilon)\gamma$-approximation for $\instance_S$.\label{le:thm_a}
	 \item
	 If $\coresetC$ is a $\gamma$-approximation for $\coresetinstance$, then $\extension(\coresetC)$ is a $(1+\epsilon)\gamma$-approximation for $\instance$.\label{le:thm_b}
	 \end{enumerate}
\end{theorem}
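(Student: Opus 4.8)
The plan is to reduce both parts to \cref{le:technical_approximation}, so that the only real work is a judicious choice of the two site sets $S_1,S_2$ together with careful bookkeeping of the direction of each inequality, all estimates involving $\epsilon/3\to\epsilon$ and the condition $\gamma\ge\delta$ being already packaged inside that lemma.

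For \ref{le:thm_a}, recall that $\coresetC$ being a $\gamma$-approximation for $\coresetinstance_S$ means exactly $\cost_\AK(\coresetX,\coresetC,S)\le\gamma\cost_\AK(\coresetX,S)$, and the claim is $\cost_\AK(X,\extension(\coresetC),S)\le(1+\epsilon)\gamma\cost_\AK(X,S)$. This is precisely \cref{le:technical_approximation} applied with $S_1=S_2=S$, so nothing further is needed.

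For \ref{le:thm_b}, I would first set $S_1=c(\coresetC)$. By \cref{le:center_replacement} \ref{le:assertion_b} we have $\cost_\AK(\coresetX,\coresetC,S_1)=\cost_\AK(\coresetX,\coresetC)$, and the hypothesis that $\coresetC$ is a $\gamma$-approximation for $\coresetinstance$ gives $\cost_\AK(\coresetX,\coresetC)\le\gamma\OPT_\AK(\coresetX)$. Next, let $C^*$ be a clustering attaining $\OPT_\AK(X)$ and set $S_2=c(C^*)$. Minimizing over clusterings and using $\cost_\AK(\coresetX,C)\le\cost_\AK(\coresetX,C,S_2)$ for every $C$ (again \cref{le:center_replacement} \ref{le:assertion_b}) yields $\OPT_\AK(\coresetX)\le\cost_\AK(\coresetX,S_2)$, and chaining the two bounds produces the hypothesis of \cref{le:technical_approximation}:
\[
\cost_\AK(\coresetX,\coresetC,S_1)\;\le\;\gamma\OPT_\AK(\coresetX)\;\le\;\gamma\cost_\AK(\coresetX,S_2).
\]
Dually, $\cost_\AK(X,S_2)\le\cost_\AK(X,C^*,S_2)=\cost_\AK(X,C^*)=\OPT_\AK(X)$, the middle equality being \cref{le:center_replacement} \ref{le:assertion_b} applied with the sites $S_2=c(C^*)$. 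Now \cref{le:technical_approximation} gives $\cost_\AK(X,\extension(\coresetC),S_1)\le(1+\epsilon)\gamma\cost_\AK(X,S_2)$, and combining this with $\cost_\AK(X,\extension(\coresetC))\le\cost_\AK(X,\extension(\coresetC),S_1)$ (once more \cref{le:center_replacement} \ref{le:assertion_b}) and with $\cost_\AK(X,S_2)\le\OPT_\AK(X)$ we conclude $\cost_\AK(X,\extension(\coresetC))\le(1+\epsilon)\gamma\OPT_\AK(X)$, i.e.\ $\extension(\coresetC)$ is a $(1+\epsilon)\gamma$-approximation for $\instance$.

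There is no genuine obstacle here; the argument is essentially a diagram chase through \cref{le:center_replacement} and \cref{le:technical_approximation}. The only point that requires care is keeping the orientations straight: $S_1$ must be taken as the centroids of the \emph{coreset} clustering $\coresetC$ (so that the \wcac cost on the coreset equals the \wcaa cost at $S_1$), whereas $S_2$ must be taken as the centroids of an \emph{optimal full-data} clustering $C^*$ (so that $\cost_\AK(X,S_2)$ is squeezed below $\OPT_\AK(X)$ while, via the coreset, lying above $\cost_\AK(\coresetX,\coresetC,S_1)/\gamma$). A sanity check that void clusters cause no trouble in the equality case of \cref{le:center_replacement} \ref{le:assertion_b} rounds off the proof.
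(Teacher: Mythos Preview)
Your proof is correct and follows essentially the same approach as the paper: both parts are reduced to \cref{le:technical_approximation}, with $S_1=S_2=S$ in part~\ref{le:thm_a} and $S_1=c(\coresetC)$, $S_2=c(C^*)$ for an optimal $C^*$ in part~\ref{le:thm_b}, combined with the same appeals to \cref{le:center_replacement}\,\ref{le:assertion_b}. The only cosmetic difference is that the paper records the equality $\cost_\AK(X,S_2)=\OPT_\AK(X)$ directly, whereas you derive the needed inequality $\cost_\AK(X,S_2)\le\OPT_\AK(X)$ explicitly.
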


\begin{proof} \ref{le:thm_a}
	Since $\coresetC$ is a $\gamma$-approximation for $\coresetinstance_S$, we have 
	\begin{equation*}
	\cost_\AK(\coresetX,\coresetC,S) \le \gamma \cost_\AK(\coresetX,S).
	\end{equation*}
	Thus, \cref{le:technical_approximation}, applied with $S_1=S_2=S$, implies that 
	\begin{equation*}
	\cost_\AK(X,\extension(\coresetC),S) \le (1+\epsilon)\gamma\cost_\AK(X,S).
	\end{equation*}
	Hence $\extension(\coresetC)$ is a $(1+\epsilon)\gamma$-approximation for $\instance_S$.
	
\ref{le:thm_b}
	Let $C^*\in  \Cs_\Kappa(k,X,\Omega)$ be optimal and set $S^*=c(C^*)$. By \cref{le:center_replacement} \ref{le:assertion_b},
	\begin{equation*}
	\OPT_\AK(\coresetX) \le \cost_\AK(\coresetX,S^*).
	\end{equation*}
	Hence, together with the assumption that $\coresetC$ is a $\gamma$-approximation for $\coresetinstance$, this yields
	\begin{equation*}
	\cost_\AK(\coresetX,\coresetC,c(\coresetC)) = \cost_\AK(\coresetX,\coresetC) \le \gamma \OPT_\AK(\coresetX) \le \gamma \cost_\AK(\coresetX,S^*).
	\end{equation*}
	Thus, \cref{le:technical_approximation}, applied with $S_1=c(\coresetC)$ and $S_2=S^*$, implies that
	\begin{equation*}
	    \cost_\AK(X,\extension(\coresetC),c(\coresetC)) \le (1+\epsilon)\gamma\cost_\AK(X,S^*)= (1+\epsilon)\gamma\OPT_\AK(X).
	\end{equation*}
    With the aid of \cref{le:center_replacement} \ref{le:assertion_b}, we conclude that
	\begin{equation*}
	\cost_\AK(X,\extension(\coresetC)) \le \cost_\AK(X,\extension(\coresetC),c(\coresetC)) \le (1+\epsilon)\gamma\OPT_\AK(X),
	\end{equation*}
	i.e., $\extension(\coresetC)$ is a $(1+\epsilon)\gamma$-approximation for $\instance$.
\end{proof}

Let us now turn to the third and final part of this section. We will first outline the overall structure of the construction on which our main \cref{th:smaller_coreset} is based, before going on to introduce two important ingredients.

The construction of $(\epsilon,\delta)$-coresets for \wcac follows the basic scheme of \cite{Har-Peled2007} for producing coresets for unconstrained least-squares clustering: First, we compute an $(\alpha,\beta)$-approximation for the least-squares clustering and use the obtained cluster centroids as points, called \emph{vertices}, from which an appropriately dense set of lines is issued. The constructed coreset will \enquote{live} on the union of these \emph{pencils of lines}. Second, we project each point of each cluster on a nearest line issuing from the cluster's centroid. Third, on each line, we merge all points that lie in appropriately constructed batches and collect their weights to obtain a much thinner set $(\coresetX,\coresetOmega)$. The details will be given in \cref{ssec:coreset-construction}.

To show that this three-step scheme works for \wcaa and \wcac, we will make use of the fact that coresets of coresets are themselves coresets. This will be shown in \cref{le:coresets_give_coresets}. Also, in order to verify the coreset properties by bounding the number of costly batches in \cref{le:proof_property_b_bad}, we will utilize the relation between extremal \wcaas and diagrams explained subsequently.

\begin{lemma} \label{le:coresets_give_coresets}
Let $\epsilon_1,\epsilon_2\in (0,\nicefrac{1}{2}]$, $\delta_1,\delta_2 \in [1,\infty)$, and set
\begin{equation*}
    \epsilon=\epsilon_1+\epsilon_2+\epsilon_1\epsilon_2 \quad \text{and} \quad \delta=\max \{\delta_1,\delta_2\}.
\end{equation*} 
Further, let $\instance=(k,X,\Omega,\Acc,\Kappa)$ be an instance of \wcac, let $(\bar{X},\bar{\Omega})$ be an $(\epsilon_1,\delta_1)$-coreset for $\instance$, and let $(\coresetX,\coresetOmega)$ be an $(\epsilon_2,\delta_2)$-coreset 
for $\bar{I}=(k,\bar{X},\bar{\Omega},\Acc, \Kappa)$.
Then, $(\coresetX,\coresetOmega)$ is an $(\epsilon,\delta)$-coreset for $\instance$.
\end{lemma}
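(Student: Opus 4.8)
The plan is the natural one: compose the two extensions, add up the two pairs of offsets (with one asymmetric twist), and then verify the three defining conditions of \cref{def:coreset} by chaining the corresponding inequalities for the inner and the outer coreset. Concretely, let $g\colon \Cs_\Kappa(k,\bar X,\bar\Omega)\to\Cs_\Kappa(k,X,\Omega)$ be the extension and $\bar\Delta^{+},\bar\Delta^{-}$ the offsets witnessing that $(\bar X,\bar\Omega)$ is an $(\epsilon_1,\delta_1)$-coreset for $\instance$, and let $h\colon\Cs_\Kappa(k,\coresetX,\coresetOmega)\to\Cs_\Kappa(k,\bar X,\bar\Omega)$ with offsets $\tilde\Delta^{+},\tilde\Delta^{-}$ play the same role for $(\coresetX,\coresetOmega)$ as an $(\epsilon_2,\delta_2)$-coreset for $\bar I$. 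I would take the composed extension $\extension := g\circ h$, which indeed maps $\Cs_\Kappa(k,\coresetX,\coresetOmega)$ into $\Cs_\Kappa(k,X,\Omega)$, and set
\begin{equation*}
\Delta^{+} := \tilde\Delta^{+} + \bar\Delta^{+} \qquad\text{and}\qquad \Delta^{-} := \tilde\Delta^{-} + (1+\epsilon_2)\bar\Delta^{-}.
\end{equation*}
The factor $(1+\epsilon_2)$ on $\bar\Delta^{-}$ is the one slightly delicate design choice; as the computations below show, it is exactly what lets both the $\Delta$-term inequality $\Delta^{+}\le\delta\Delta^{-}$ and property \cref{def:property_b} go through simultaneously.

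For \cref{def:property_a}, fix a set $S$ of $k$ sites and $\coresetC\in\Cs_\Kappa(k,\coresetX,\coresetOmega)$, and write $\bar C := h(\coresetC)$. Property \cref{def:property_a} of the outer coreset applied to $\coresetC$ gives $(1-\epsilon_2)\cost_\AK(\bar X,\bar C,S)\le\cost_\AK(\coresetX,\coresetC,S)+\tilde\Delta^{+}$, while property \cref{def:property_a} of the inner coreset applied to $\bar C$ gives $(1-\epsilon_1)\cost_\AK(X,g(\bar C),S)\le\cost_\AK(\bar X,\bar C,S)+\bar\Delta^{+}$. Multiplying the latter by $1-\epsilon_2>0$ and substituting the former bound for $\cost_\AK(\bar X,\bar C,S)$ yields
\begin{equation*}
(1-\epsilon_1)(1-\epsilon_2)\cost_\AK(X,\extension(\coresetC),S)\ \le\ \cost_\AK(\coresetX,\coresetC,S)+\tilde\Delta^{+}+(1-\epsilon_2)\bar\Delta^{+},
\end{equation*}
using $g(\bar C)=\extension(\coresetC)$. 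Since the cost is nonnegative and $(1-\epsilon_1)(1-\epsilon_2)=1-\epsilon+2\epsilon_1\epsilon_2\ge 1-\epsilon$, and since $(1-\epsilon_2)\bar\Delta^{+}\le\bar\Delta^{+}$, this is precisely \cref{def:property_a} with offset $\Delta^{+}$.

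For \cref{def:property_b}, fix $S$ and chain property \cref{def:property_b} of the outer coreset, $\cost_\AK(\coresetX,S)+\tilde\Delta^{-}\le(1+\epsilon_2)\cost_\AK(\bar X,S)$, with that of the inner coreset, $\cost_\AK(\bar X,S)+\bar\Delta^{-}\le(1+\epsilon_1)\cost_\AK(X,S)$; using the exact identity $(1+\epsilon_1)(1+\epsilon_2)=1+\epsilon$ one gets $\cost_\AK(\coresetX,S)+\tilde\Delta^{-}+(1+\epsilon_2)\bar\Delta^{-}\le(1+\epsilon)\cost_\AK(X,S)$, i.e.\ \cref{def:property_b} with offset $\Delta^{-}$. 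It remains to check $0\le\Delta^{+}\le\delta\Delta^{-}$: nonnegativity is clear (all four offsets are nonnegative, e.g.\ $\bar\Delta^{-}\ge\bar\Delta^{+}/\delta_1\ge0$), and from $\tilde\Delta^{+}\le\delta_2\tilde\Delta^{-}\le\delta\tilde\Delta^{-}$ together with $\bar\Delta^{+}\le\delta_1\bar\Delta^{-}\le\delta\bar\Delta^{-}\le\delta(1+\epsilon_2)\bar\Delta^{-}$ one obtains $\Delta^{+}\le\delta\tilde\Delta^{-}+\delta(1+\epsilon_2)\bar\Delta^{-}=\delta\Delta^{-}$. There is no real obstacle in this argument; the only points requiring a moment's care are the direction of the estimate $(1-\epsilon)\cost\le(1-\epsilon_1)(1-\epsilon_2)\cost$, which relies on nonnegativity of the cost, and keeping the two offset bookkeepings consistent with the asymmetric choice of $\Delta^{-}$.
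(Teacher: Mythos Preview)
Your proof is correct and follows essentially the same route as the paper: compose the two extensions, add the offsets, and chain the two pairs of inequalities \cref{def:property_a} and \cref{def:property_b}. The only cosmetic difference is that the paper takes the simpler choice $\Delta^{-}=\tilde\Delta^{-}+\bar\Delta^{-}$ (without the $(1+\epsilon_2)$ factor) and absorbs the slack in the \cref{def:property_b}-chain via $\Delta_1^{-}\le(1+\epsilon_2)\Delta_1^{-}$ instead; both bookkeepings work.
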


\begin{proof}
Let $\extension_1$ and $\extension_2$  denote the extensions for  $(\bar{X},\bar{\Omega})$ and $(\coresetX,\coresetOmega)$, and set $\extension= \extension_1 \circ \extension_2$. Further, for $\ell\in [2]$, let $\Delta_{\ell}^\pm$ denote the corresponding $\Delta$-terms, and set $\Delta^{\pm}  = \Delta_{1}^\pm + \Delta_{2}^\pm$. Since $0\le \Delta_\ell^{+} \le \delta_\ell\Delta_\ell^{-}$ for $\ell\in [2]$, we have 
\begin{equation*}
    0 \le \Delta^{+}  = \Delta_{1}^+ + \Delta_{2}^+ \le \delta_1 \Delta_{1}^- + \delta_2 \Delta_{2}^- \le \delta  \Delta^{-}.
\end{equation*}
Also, note that,
\begin{equation*}
    1-\epsilon=1-(\epsilon_1+\epsilon_2+\epsilon_1\epsilon_2) \le 1-(\epsilon_1+\epsilon_2-\epsilon_1\epsilon_2) = (1-\epsilon_1)(1-\epsilon_2).
\end{equation*}
We will now verify the two conditions of \cref{def:coreset} \cref{def:property_a} and \cref{def:property_b}. So, let $S$ be a set of $k$ sites, and let $\tilde{C} \in  \Cs_\Kappa(k,\coresetX,\coresetOmega)$. Then, by the coreset conditions for $(\bar{X},\bar{\Omega})$ and $(\coresetX,\coresetOmega)$, we have
    \begin{align*}
 (1- \epsilon) &\cost_\AK(X,f(\coresetC),S)
	\le (1-\epsilon_1)(1-\epsilon_2)\cost_\AK(X,f_1(f_2(\coresetC)),S) \\
	& \le (1-\epsilon_2)\left(\cost_\AK(\bar{X},f_2(\coresetC),S) + \Delta_{1}^+\right)
	\le \cost_\AK(\coresetX,\coresetC,S) + \Delta_{2}^+ + (1-\epsilon_2)\Delta_{1}^+ \\
	& \le \cost_\AK(\coresetX,\coresetC,S) + \Delta^+.
	\end{align*}
Further,
\begin{align*}
    \cost_\AK&(\coresetX,S) + \Delta^- 
    \le (1+\epsilon_2)\cost_\AK(\bar{X},S)  + \Delta_{1}^- 
	\le (1+\epsilon_2)\left(\cost_\AK(\bar{X},S)  + \Delta_{1}^- \right) \\
	&\le (1+\epsilon_2)(1+\epsilon_1)\cost_\AK(X,S) 
	= (1+\epsilon)\cost_\AK(X,S). 
\end{align*}
Hence, $(\coresetX,\coresetOmega)$ is an $(\epsilon,\delta)$-coreset for $\instance$.
\end{proof}

Finally, we explain the relation between \wcaas and diagrams.
We will present the relevant theory in a concise manner, which is strongly focussed towards its use in \cref{sec:improved_construction}. For a detailed general account, additional pointers to the literature, and further background information, see \cite{Brieden2017, GK17}.

An \emph{anisotropic diagram}  $\apd=\apd(T,\Sigma,\mathcal{A})$ is specified by a set $T=\{t_{1},\dots,t_{k}\} \subset \mathbb{R}^{d}$ of $k$ different sites,  corresponding \emph{sizes} $\Sigma=\{\sigma_1,\ldots,\sigma_k\} \in \mathbb{R}^k$, and a set of positive definite symmetric matrices $\mathcal{A}=\{A_1,\dots,A_k\}$. It is then defined as the collection $\apd=\{P_1,\ldots,P_k\}$ of subsets $P_i$ of $\R^n$, called \emph{cells}, specified by 
\begin{equation*}
P_{i} = \bigl\{ x\in\mathbb{R}^{d}\ : \,\, \forall  \ell \in [k]: \norm{x-t_{i}}_{A_i}^{2} + \sigma_{i} \le \norm{x-t_{\ell}}_{A_{\ell}}^{2} + \sigma_{\ell} \bigr\}.
\end{equation*}
Since the ellipsoidal norms $\norm{\quad}_{A_i}$ are strictly convex, cells of different sites do not have any interior points in common; see \cite{Brieden2017}. Note that, in the special case of $\Sigma=\{0,\ldots,0\}$ and $\mathcal{A}=\mathcal{E}$, we obtain the classic \emph{Voronoi diagram} of the points in $T$.

Now, let $C=(C_1,\ldots,C_k) \in  \Cs_\Kappa(k,X,\Omega)$. Then, we can say that $\apd$ and $C$ are \emph{compatible} if $\support(C_{i})\subset P_{i} \cap X$ for all $i \in [k]$. More strongly, if $\support(C_{i}) = P_{i} \cap X$, the diagram $\apd$ and the clustering $C$ are called \emph{strongly compatible}. Note that two cells of different sites can only have boundary points in common anyway. As we will see later, the bound on the coreset quality relies on the fact that for relevant clusterings, the number of points that are fractionally assigned to different clusters can be further controlled. More precisely, $C$ and $\apd$ are called \emph{strictly compatible} if $\apd$ and $C$ are strongly compatible and, for each $i,\ell\in [k]$ with $i \ne \ell$ the intersection $(P_i\cap P_\ell)\cap X$ contains at most one point. Later, we will use the following result, which has been shown for the Euclidean case in \cite[Corollary 2.2]{Brieden2012} and which follows from \cite{Brieden2017} in the general case:

\begin{proposition} \label{le:restriction_to_diagrams}
	Given an instance $(k,X,\Omega,\Acc,\Kappa,S)$ of \wcaa, there is a clustering $C$ that attains $\cost_\AK(X,S)$ and admits a strictly compatible anisotropic diagram. Further, such a diagram can be computed in polynomial time.
\end{proposition}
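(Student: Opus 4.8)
The plan is to reduce the anisotropic, weight-constrained case to the polyhedral structure theory of power-diagram assignments, following the Euclidean argument of \cite{Brieden2012} and its generalization in \cite{Brieden2017}. First I would observe that for a \emph{fixed} set of sites $S=\{s_1,\dots,s_k\}$, the problem of choosing $C\in\Cs_\Kappa(k,X,\Omega)$ minimizing $\cost_\AK(X,C,S)=\sum_{i,j}\xi_{ij}\omega_j\norm{x_j-s_i}_{A_i}^2$ is a transportation-type linear program in the variables $\xi_{ij}$: the assignment constraints $\sum_i\xi_{ij}=1$ together with the weight constraints $\kappa_i^-\le\sum_j\omega_j\xi_{ij}\le\kappa_i^+$ form a (bounded, nonempty by our standing assumption) polytope, and the objective is linear. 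Hence a minimizer is attained at a vertex of this polytope, and the key structural fact is that vertices of such a ``capacitated transportation polytope'' are \emph{almost integral}: the bipartite support graph on $\{$clusters$\}\cup\{$points$\}$ of a vertex solution is a forest, so all but at most $k-1$ of the points $x_j$ are assigned integrally to a single cluster, and each ``split'' point is shared between exactly two clusters. This is the combinatorial heart of the statement and is exactly the place where the Euclidean proof of \cite{Brieden2012} is invoked; I expect this to be the main obstacle only in the sense that it must be cited/adapted carefully, since the anisotropic weights enter the objective but not the constraint polytope, so the vertex structure is unchanged.

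Next I would produce the diagram. Having a vertex-optimal clustering $C$ whose support graph is a forest, I would appeal to the LP duality / complementary-slackness description of optimal assignments: by strong duality there exist multipliers $\sigma_i\in\R$ (dual variables for the weight/capacity constraints, suitably combined with the per-point dual variables) such that $x_j$ is assigned to cluster $i$ only if $i$ minimizes $\norm{x_j-s_i}_{A_i}^2+\sigma_i$ over $\ell\in[k]$. Setting $T=S$, $\Sigma=\{\sigma_1,\dots,\sigma_k\}$, and using the given matrices $\Acc$, this says precisely that $\support(C_i)\subset P_i\cap X$ for the anisotropic diagram $\apd=\apd(T,\Sigma,\Acc)$, i.e.\ $\apd$ and $C$ are compatible. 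Strong compatibility, $\support(C_i)=P_i\cap X$, is then obtained by reassigning any point lying in a cell $P_i$ but not currently in $C_i$: since such a point sits on the common boundary of $P_i$ and its current cell, this reassignment (done integrally, respecting the forest structure) leaves the cost unchanged and preserves feasibility because boundary points carry no ``weight surplus'' issue — one argues this exactly as in \cite{Brieden2017}. Strict compatibility, that $(P_i\cap P_\ell)\cap X$ has at most one point for each pair $i\ne\ell$, follows from the forest property of the support graph: a point shared (even on the boundary, after a worst-case perturbation argument) between cells $i$ and $\ell$ corresponds to an edge of the support forest between the two cluster-nodes, and a forest on the $k$ cluster-nodes has at most one edge between any two of them; more carefully, one uses the standard general-position / lexicographic perturbation of the sizes $\Sigma$ to ensure that no point lies on three or more cells simultaneously and that at most $k-1$ points lie on any cell boundaries at all.

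Finally, for the polynomial-time claim, I would note that each ingredient is polynomial: an optimal vertex of the LP can be found in polynomial time by any polynomial LP algorithm returning a basic optimal solution (or by a transportation-network algorithm, since the constraint matrix is of network type once the capacities are modeled as extra nodes); the dual multipliers $\sigma_i$ come out of the same LP solve; the finitely many boundary reassignments and the perturbation of $\Sigma$ are likewise polynomial-time bookkeeping. I would therefore structure the write-up as: (i) cast \wcaa for fixed $S$ as a capacitated transportation LP and take a basic optimal solution; (ii) read off the dual sizes and verify compatibility of the induced anisotropic diagram; (iii) clean up to strong and then strict compatibility via boundary reassignment and a generic perturbation of $\Sigma$, citing \cite{Brieden2012, Brieden2017} for the Euclidean template and noting that anisotropy affects only the objective, not the feasible polytope; (iv) observe polynomiality of each step. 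The only genuinely delicate point is (iii)'s strictness guarantee — ensuring ``at most one point per pair of cells'' rather than merely ``at most $k-1$ split points in total'' — which is where the forest structure of the support graph and the general-position argument must be combined, and this is precisely the refinement over plain power-diagram compatibility that \cite{Brieden2012}'s Corollary 2.2 supplies.
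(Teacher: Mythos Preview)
The paper does not give its own proof of this proposition; it simply cites it as a known result, attributing the Euclidean case to \cite[Corollary~2.2]{Brieden2012} and the anisotropic extension to \cite{Brieden2017}. Your sketch is a faithful reconstruction of the argument behind those citations: cast \wcaa for fixed $S$ as a capacitated transportation LP, take a basic optimal solution whose bipartite support graph is a forest, read off the sizes $\sigma_i$ from the dual, and then perturb to force strictness. So there is essentially nothing in the paper to compare against, and your outline is the right one.

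Two small points where your write-up could be tightened. First, your remark that ``a forest on the $k$ cluster-nodes has at most one edge between any two of them'' is loose; the real reason at most one point can be split between a given pair $(i,\ell)$ is that two such points $x_j,x_{j'}$ would create the $4$-cycle $i$--$x_j$--$\ell$--$x_{j'}$--$i$ in the bipartite support graph, contradicting acyclicity. Second, your description of the clean-up to \emph{strong} compatibility as being done ``integrally'' is not quite right: strong compatibility requires $\support(C_i)=P_i\cap X$, so any data point lying on a boundary $P_i\cap P_\ell$ must belong to the supports of \emph{both} clusters, i.e.\ must be (fractionally) split. The correct fix is the other way round: perturb $\Sigma$ so that the only data points remaining on any cell boundary are precisely the (at most one per pair, by the forest argument) already-split points of the basic solution; then strong and strict compatibility hold simultaneously without further reassignment. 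This is exactly what the cited references do.
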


\section{Coresets based on local neighborhood mergers} \label{sec:existence_epsilon_1}
A well-known standard method of constructing coresets for unconstrained least-squares clustering is to merge neighboring data points into a single point that carries their total weight. In this section, we will prove \cref{th:movement_error_coreset}, which shows that a similar approach also works for \wcac. This is an essential first step for the construction of improved coresets in \cref{sec:improved_construction} and, in particular, it allows the repositioning of points to obtain favorable structural properties.

Generally, the merging of points in a neighborhood while preserving weights can be represented by a function that assigns each point of the original data set $(X,\Omega)$ to a point of a new (and typically much smaller) set $(\coresetX,\coresetOmega)$ such that each point of $\coresetX$ carries the sum of the weights of all the original points assigned to it. More formally, with $n=|X|$ and $\tilde{n}=|\coresetX|$ as before, let
\begin{equation*}
    p: [n] \rightarrow [\tilde{n}] 
\end{equation*}
be surjective such that
\begin{equation*}
 \coresetomega_{\tilde{\jmath}} =  \sum_{j \in p^{-1}(\tilde{\jmath})} \omega_{j} \quad \forall \tilde{\jmath} \in [\tilde{n}].
\end{equation*}
Then, we call $p$ a \emph{merging function}. 

If the \enquote{total movement}, i.e., the sum of all distances between a point and its image is small enough, the \enquote{compressed set} $(\coresetX,\coresetOmega)$ is a coreset. For example, using such coreset constructions, an approximate solution for unconstrained least-squares clustering whose cost exceeds $\OPT(X)$ by at most a constant factor can be efficiently found. The existence of such approximation algorithms is stated more formally in the following proposition; see \cite{Aggarwal2009,Kanungo2004a,Har-Peled2007,Feldman2013,FSS20,Braverman2016} for a non-exhaustive list of corresponding algorithms.

\begin{proposition}\label{le:alpha_beta_approximation}
   There exist constants $\alpha \ge 1$, $\beta \in \N$ and a polynomial-time algorithm with the following property: Given an instance $(k,X,\Omega)$ of unconstrained least-squares clustering, the algorithm computes a clustering with at most $\beta k$ clusters with a cost $\ALG_{(\alpha, \beta)}$ satisfying
    \begin{equation*}
        \ALG_{(\alpha, \beta)} \le \alpha \OPT(X).
    \end{equation*}
\end{proposition}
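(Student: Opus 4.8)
Since this is a by now classical \emph{bicriteria} approximation statement for weighted least-squares clustering, the plan is to derive it from the standard toolkit rather than to develop anything new; I would organise the argument in three steps and flag the last of them as the only genuinely subtle point.

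\emph{Step 1 (reduction to a finite candidate set).} First I would reduce the task of choosing near-optimal \emph{continuous} centers to that of choosing centers from $X$ itself. By \cref{le:center_replacement} \ref{le:assertion_a} the optimal center of a cluster $C_i$ is its weighted centroid, and a routine triangle-inequality estimate (using the cluster weighting $\xi_{ij}\omega_j$, exactly as in the proof of \cref{le:center_replacement} \ref{le:assertion_b}) shows that replacing the centroid by the point of $\support(C_i)$ nearest to it costs only a constant factor. Hence the optimum of the \emph{discrete} problem, in which centers must lie in $X$, is within a constant factor of $\OPT(X)$, and it suffices to approximate the discrete problem up to a constant.

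\emph{Step 2 (opening $\Theta(k)$ centers).} On the discrete instance I would run an adaptive ($D^2$-)sampling scheme: starting from a single center, repeatedly add the point $x_j$ with probability proportional to $\omega_j\cdot\min_{s\in S}\norm{x_j-s}_2^2$, until $\beta k$ centers have been chosen, for a suitable absolute constant $\beta$. The standard analysis of adaptive sampling for $k$-means shows that after $\Theta(k)$ rounds the cost of assigning each point of $X$ to a nearest chosen center is, in expectation, $O(1)$ times the discrete optimum, hence $O(1)\cdot\OPT(X)$; this produces a clustering with at most $\beta k$ non-void clusters whose cost $\ALG_{(\alpha,\beta)}$ is, in expectation, at most $\alpha\,\OPT(X)$ for an absolute constant $\alpha$.

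\emph{Step 3 (determinism), and the main obstacle.} Turning the randomized guarantee of Step 2 into a \emph{deterministic} polynomial-time statement is the one nontrivial point: Lloyd's iteration is only a heuristic, and derandomising $D^2$-sampling by conditional expectations is awkward because the conditional expectation ranges over exponentially many continuations. The clean way out, which I would ultimately cite, is to apply the deterministic local-search algorithm of Kanungo et al.\ to the candidate set from Step 1; it even yields $\beta=1$ together with a constant $\alpha$, and its polynomial running time follows from the usual improvement-based stopping rule. All of the algorithms in question handle point weights directly (or after an easy reduction), so the weighted statement follows. Since the proposition only asserts the \emph{existence} of such an algorithm and the result is classical, in the final write-up I would simply cite \cite{Aggarwal2009,Kanungo2004a,Har-Peled2007,Feldman2013,FSS20,Braverman2016} and, for the reader's convenience, record the Step 1 reduction explicitly.
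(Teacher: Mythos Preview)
The paper does not prove this proposition at all: it is stated as a known fact and immediately followed by the citation list \cite{Aggarwal2009,Kanungo2004a,Har-Peled2007,Feldman2013,FSS20,Braverman2016}, with the remark that these form a ``non-exhaustive list of corresponding algorithms.'' Your proposal ends in exactly the same place---citing the same references---so in that sense you and the paper agree.

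Where you differ is that you additionally sketch a self-contained argument (discrete reduction via \cref{le:center_replacement}, $D^2$-sampling, then derandomisation by Kanungo et al.'s local search). This sketch is correct and the constants work out (the discrete reduction costs a factor~$2$ by \cref{le:center_replacement}\ref{le:assertion_a}, and Kanungo et al.\ already gives a deterministic constant-factor guarantee with $\beta=1$, so Steps~1--2 are in fact optional). The paper simply treats the proposition as a black box, which is appropriate here since nothing downstream depends on the specific values of $\alpha$ and $\beta$ or on how the algorithm works internally; your extra detail is harmless but not needed.
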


Algorithms according to \cref{le:alpha_beta_approximation} are usually called \emph{$(\alpha,\beta)$-approximations}, as they compute an $\alpha$-approximation but use $k\beta$ rather than $k$ clusters. While, in terms of their approximation properties, smaller constants $\alpha$ and $\beta$ are preferable, the running times typically increase significantly with decreasing constants. Hence, in practice, these effects require some fine-tuning. As pointed out by \cite{Aggarwal2009}, the parameter $\beta$ does not affect the coreset's ability to approximate unconstrained least-squares $k$-clusterings. 

Next, we prove various technical lemmas for handling the weight-constrained anisotropic case. In particular, we will relate the anisotropic case to the Euclidean, bound the cost of an optimal \wcac,  and, finally, define and analyze an appropriate extension function $\extension$.

In order to bound the cost of optimal solutions for a given instance of \wcac, recall first that $\lambdamax$ and $\lambdamin$ denote the largest and smallest eigenvalue of all matrices in  $\Acc$, respectively, and note that the optima in the Euclidean and anisotropic unconstrained situation are related via
\begin{equation*}
	\OPT(X) \le \frac{1}{\lambdamin}\OPT_\Acc(X) \le \frac{\lambdamax}{\lambdamin}\OPT(X).
\end{equation*} 
This shows, in particular, that, for $(\alpha,\beta)$-approximations of least-squares clustering we have the estimate
\begin{equation*}
	\ALG_{(\alpha, \beta)} \le \frac{\alpha}{\lambdamin}\OPT_\Acc(X) \le \frac{\alpha\lambdamax}{\lambdamin}\OPT(X)
\end{equation*} 
for the corresponding anisotropic problem. Also, the former observation implies the following simple but useful consequence.

\begin{lemma}\label{le:euclidean-anisotropic-bound} 
Let $\eta \in (0,\infty)$, and suppose that 
\begin{equation*}
    \sum_{j=1}^n  \omega_j \|\coresetx_{p(j)}- x_j\|^2_{2} \le  \frac{\eta\cdot \lambdamin}{\lambdamax} \OPT(X).
\end{equation*}
Then, whenever $\sum_{i=1}^k\xi_{ij}=1$ for each $j\in [n]$,
\begin{equation*}
    \sum_{i=1}^k \sum_{j=1}^n \xi_{ij} \omega_j \|\coresetx_{p(j)}- x_j\|^2_{A_i} \le \eta \cdot \OPT_\Acc(X).
\end{equation*}
\end{lemma}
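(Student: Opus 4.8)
The plan is to reduce the anisotropic estimate to the Euclidean one that is assumed, by applying the eigenvalue sandwich $\lambda^-(A_i)\norm{x}_2^2 \le \norm{x}_{A_i}^2 \le \lambda^+(A_i)\norm{x}_2^2$ from \cref{sec:notation} (equivalently, the one-point case of \cref{rem:norm_equivalence}) pointwise. First I would fix $j\in[n]$, abbreviate $v_j=\coresetx_{p(j)}-x_j$, and observe that since $\lambda^+(A_i)\le\lambdamax$ for every $i$, while the clustering coefficients satisfy $\xi_{ij}\ge 0$ and $\sum_{i=1}^k\xi_{ij}=1$, the inner sum over $i$ is a convex combination of the values $\norm{v_j}_{A_i}^2$, so
\[
\sum_{i=1}^k \xi_{ij}\norm{v_j}_{A_i}^2 \;\le\; \lambdamax\,\norm{v_j}_2^2\sum_{i=1}^k\xi_{ij} \;=\; \lambdamax\,\norm{v_j}_2^2 .
\]

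Next I would multiply through by $\omega_j$, sum over $j$, interchange the two finite sums, and insert the hypothesis:
\[
\sum_{i=1}^k\sum_{j=1}^n \xi_{ij}\,\omega_j\norm{v_j}_{A_i}^2
\;\le\; \lambdamax\sum_{j=1}^n\omega_j\norm{v_j}_2^2
\;\le\; \lambdamax\cdot\frac{\eta\,\lambdamin}{\lambdamax}\,\OPT(X)
\;=\; \eta\,\lambdamin\,\OPT(X).
\]
Finally I would invoke the relation $\lambdamin\,\OPT(X)\le\OPT_\Acc(X)$ — the left half of the chain $\OPT(X)\le\frac{1}{\lambdamin}\OPT_\Acc(X)\le\frac{\lambdamax}{\lambdamin}\OPT(X)$ recorded just before the lemma — which upgrades the last bound to $\eta\,\OPT_\Acc(X)$, as asserted.

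There is essentially no genuine obstacle here; the statement is a bookkeeping lemma that isolates exactly the scale factor $\lambdamin/\lambdamax$ needed to pass from the Euclidean to the anisotropic world. The only two points deserving a moment's attention are that the argument uses the nonnegativity of the $\xi_{ij}$ (so that the inner $i$-sum really is a convex combination dominated by $\lambdamax$), which holds because the $\xi_{ij}$ are the coefficients of a clustering, and that the inequalities must be chained in the correct direction so that precisely the slack $\lambdamin/\lambdamax$ built into the hypothesis is consumed when passing to $\OPT_\Acc(X)$.
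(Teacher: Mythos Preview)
Your proof is correct and follows essentially the same approach as the paper: bound each $\norm{\coresetx_{p(j)}-x_j}_{A_i}^2$ by $\lambdamax\norm{\coresetx_{p(j)}-x_j}_2^2$, collapse the inner sum using $\sum_i\xi_{ij}=1$, apply the hypothesis, and finish with $\lambdamin\,\OPT(X)\le\OPT_\Acc(X)$. The paper compresses these steps into a single displayed chain of inequalities, but the argument is identical.
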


\begin{proof} For the proof, just note that
\begin{align*}
    \sum_{i=1}^k  \sum_{j=1}^n & \xi_{ij}\omega_j \|\coresetx_{p(j)}- x_j\|^2_{A_i} 
     \le \lambdamax \sum_{j=1}^n \left( \omega_j \|\coresetx_{p(j)}- x_j\|^2_{2} \cdot \sum_{i=1}^k \xi_{ij}\right) \\
    &= \lambdamax \sum_{j=1}^n  \omega_j \|\coresetx_{p(j)}- x_j\|^2_{2}  
    \le  \eta \lambdamin \OPT(X)
    \le \eta \OPT_\Acc(X).
\end{align*}
\end{proof}

The next technical result considers the effect of merging data points on the cost of clustering. So, let $p$ be a merging function for $(X,\Omega)$, let $(\coresetX,\coresetOmega)$ be the obtained data set and let $C \in \Cs_\Kappa(k,X,\Omega)$. Then, $p$ gives rise to a clustering $\coresetC \in \Cs_\Kappa(k,\coresetX,\coresetOmega)$ defined by
\begin{equation*}
	\tilde{\xi}_{i\tilde{\jmath}} = \frac{1}{\tilde{\omega}_{\tilde{\jmath}}}\sum_{j \in p^{-1}(\tilde{\jmath})} \xi_{ij} \omega_{j} \qquad \forall \, \tilde{\jmath} \in [\tilde{n}], i\in [k].
	\end{equation*}
Slightly abusing notation, we will write $p(X)=\coresetX$,
$p(X,\Omega)=(\coresetX,\coresetOmega)$ and $p(C)=\coresetC$. Then, we obtain the following inequalities.

\begin{lemma} \label{le:technical_bound}
    Let $p$ be a merging function for $(X,\Omega)$, $C \in \Cs_\Kappa(k,X,\Omega)$ and set
    \begin{equation*}
        D=\sum_{i=1}^k \sum_{j=1}^n \xi_{ij} \omega_j \|\coresetx_{p(j)}- x_j\|^2_{A_i}.
    \end{equation*}
Then,
 \begin{align*}
     \cost_\AK(X,&C,S) -  2\sqrt{D\cdot \cost_\AK(X,C,S)} \\
	& \le \cost_\AK(p(X),p(C),S) = \sum_{i=1}^{k} \sum_{j= 1}^{n} \xi_{ij}   \omega_{j} 	\norm{\coresetx_{p(j)} - s_{i}}_{A_i}^{2} \\
	& \le \cost_\AK(X,C,S) + 2\sqrt{D\cdot \cost_\AK(X,C,S)} + D.
	\end{align*}
\end{lemma}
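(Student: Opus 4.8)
The plan is to reduce the statement to a clean instance of the elementary inequality
$\bigl|\,\|a\|^2 - \|b\|^2\,\bigr| \le \|a-b\|\cdot(\|a-b\| + 2\|b\|)$ (or, in Cauchy--Schwarz form, to control a cross term $\sum \xi_{ij}\omega_j\langle x_j-s_i,\, \coresetx_{p(j)}-x_j\rangle_{A_i}$ by $\sqrt{D\cdot\cost_\AK(X,C,S)}$), applied term-by-term in the double sum over $i\in[k]$, $j\in[n]$ and then aggregated. First I would record the identity $\cost_\AK(p(X),p(C),S) = \sum_{i,j}\xi_{ij}\omega_j\norm{\coresetx_{p(j)}-s_i}_{A_i}^2$, which is immediate from the definition of $p(C)=\coresetC$: regrouping $\sum_{\tilde\jmath}\coresetomega_{\tilde\jmath}\tilde\xi_{i\tilde\jmath}\norm{\coresetx_{\tilde\jmath}-s_i}_{A_i}^2 = \sum_{\tilde\jmath}\bigl(\sum_{j\in p^{-1}(\tilde\jmath)}\xi_{ij}\omega_j\bigr)\norm{\coresetx_{\tilde\jmath}-s_i}_{A_i}^2$ and noting $\coresetx_{p(j)}=\coresetx_{\tilde\jmath}$ for $j\in p^{-1}(\tilde\jmath)$. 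This is the displayed middle equality and needs no real work.

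Next I would expand each summand. Write $v_{ij} := x_j - s_i$ and $w_{ij} := \coresetx_{p(j)} - x_j$, so $\coresetx_{p(j)} - s_i = v_{ij} + w_{ij}$ and
$\norm{\coresetx_{p(j)}-s_i}_{A_i}^2 = \norm{v_{ij}}_{A_i}^2 + 2\,v_{ij}^\top A_i w_{ij} + \norm{w_{ij}}_{A_i}^2$,
using $\norm{x}_{A_i}^2 = x^\top A_i x$ as in the paper (the paper's displayed ``$\norm{x}_{A_i}=x^TA_ix$'' is a harmless typo for the square). Multiplying by $\xi_{ij}\omega_j\ge 0$ and summing gives
$\cost_\AK(p(X),p(C),S) = \cost_\AK(X,C,S) + 2T + D$,
where $T = \sum_{i,j}\xi_{ij}\omega_j\, v_{ij}^\top A_i w_{ij}$ and $D$ is exactly the quantity defined in the statement. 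So the whole lemma reduces to the bound $|T|\le \sqrt{D\cdot\cost_\AK(X,C,S)}$.

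The key step is therefore this Cauchy--Schwarz estimate for $T$, and it is the part that needs the most care (though it is still routine). Since each $A_i$ is positive definite symmetric, $\langle u,v\rangle_{A_i} := u^\top A_i v$ is an inner product with associated norm $\norm{\cdot}_{A_i}$; thus for fixed $i,j$ the scalar Cauchy--Schwarz inequality gives $|v_{ij}^\top A_i w_{ij}| \le \norm{v_{ij}}_{A_i}\norm{w_{ij}}_{A_i}$. Hence
$|T| \le \sum_{i,j} \xi_{ij}\omega_j\,\norm{v_{ij}}_{A_i}\norm{w_{ij}}_{A_i}
= \sum_{i,j} \bigl(\sqrt{\xi_{ij}\omega_j}\,\norm{v_{ij}}_{A_i}\bigr)\bigl(\sqrt{\xi_{ij}\omega_j}\,\norm{w_{ij}}_{A_i}\bigr)$,
and a second application of Cauchy--Schwarz over the index set $[k]\times[n]$ bounds this by
$\Bigl(\sum_{i,j}\xi_{ij}\omega_j\norm{v_{ij}}_{A_i}^2\Bigr)^{1/2}\Bigl(\sum_{i,j}\xi_{ij}\omega_j\norm{w_{ij}}_{A_i}^2\Bigr)^{1/2} = \sqrt{\cost_\AK(X,C,S)}\cdot\sqrt{D}$.
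Combining, $\cost_\AK(p(X),p(C),S) = \cost_\AK(X,C,S) + 2T + D$ with $-\sqrt{D\cdot\cost_\AK(X,C,S)} \le T \le \sqrt{D\cdot\cost_\AK(X,C,S)}$ yields both displayed inequalities (the lower bound drops the nonnegative term $D$, the upper keeps it), completing the proof. The only subtlety worth flagging is bookkeeping with the weights: one must split $\xi_{ij}\omega_j = \sqrt{\xi_{ij}\omega_j}\cdot\sqrt{\xi_{ij}\omega_j}$ before the outer Cauchy--Schwarz so that the two resulting factors are exactly $\cost_\AK(X,C,S)$ and $D$; no use of the weight constraints $\Kappa$ or the feasibility of $C$ beyond $\xi_{ij}\ge 0$ is needed.
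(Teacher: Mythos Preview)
Your proof is correct and follows essentially the same approach as the paper: establish the middle equality by unfolding the definition of $p(C)$, expand $\norm{\coresetx_{p(j)}-s_i}_{A_i}^2$ via the decomposition $\coresetx_{p(j)}-s_i = (\coresetx_{p(j)}-x_j)+(x_j-s_i)$ to obtain $\cost_\AK(X,C,S)+2T+D$, and then bound the cross term $T$ by two applications of Cauchy--Schwarz with the weight split $\xi_{ij}\omega_j=\sqrt{\xi_{ij}\omega_j}\cdot\sqrt{\xi_{ij}\omega_j}$. The paper's proof is identical in structure, so there is nothing to add.
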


\begin{proof} First, we explain the equality using the definition of $p$:
\begin{align*}
    &\cost_\AK(p(X),p(C),S) = \sum_{i=1}^{k} \sum_{\tilde{\jmath}= 1}^{\tilde{n}} \tilde{\xi}_{i\tilde{\jmath}}   \tilde{\omega}_{\tilde{\jmath}}	\norm{\coresetx_{\tilde{\jmath}} - s_{i}}_{A_i}^{2} \\
    &\quad = \sum_{i=1}^{k} \sum_{\tilde{\jmath}= 1}^{\tilde{n}} \left( \frac{1}{\tilde{\omega}_{\tilde{\jmath}}}\sum_{j \in p^{-1}(\tilde{\jmath})} \xi_{ij} \omega_{j} \right)   \tilde{\omega}_{\tilde{\jmath}}	\norm{\coresetx_{\tilde{\jmath}} - s_{i}}_{A_i}^{2} = \sum_{i=1}^{k} \sum_{\tilde{\jmath}= 1}^{\tilde{n}} \sum_{j \in p^{-1}(\tilde{\jmath})} \xi_{ij} \omega_{j} 	\norm{\coresetx_{\tilde{\jmath}} - s_{i}}_{A_i}^{2} \\
    &\quad = \sum_{i=1}^{k} \sum_{j= 1}^{n} \xi_{ij} \omega_{j} 	\norm{\coresetx_{p(j)} - s_{i}}_{A_i}^{2} .
\end{align*}
Now, note that the latter norm decomposes as follows:
    \begin{align*}
	\norm{\coresetx_{p(j)} - s_{i}}_{A_i}^{2}&
	= \norm{\coresetx_{p(j)} - x_{j} + x_{j} - s_{i}}_{A_i}^{2}\\
	& = \norm{\coresetx_{p(j)}-x_{j}}_{A_i}^{2} +
	2 (\coresetx_{p(j)}-x_{j})^{\top}A_i(x_{j} -s_{i})  +
	 \norm{x_{j}-s_{i}}_{A_i}^{2}.  
\end{align*}
Hence, it suffices to bound the three terms
 \begin{align*}
	& D, \qquad 
	\sum_{i=1}^{k} \sum_{j= 1}^{n} \xi_{ij}   \omega_{j}(\coresetx_{p(j)}-x_{j})^{\top}A_i(x_{j} -s_{i}),\qquad
	\sum_{i=1}^{k} \sum_{j= 1}^{n} \xi_{ij}   \omega_{j}\norm{x_{j}-s_{i}}_{A_i}^{2}
\end{align*}
separately.
The first term appears in the bound from above and, as it is nonnegative, it can be omitted for the estimate from below. Also, the third sum is exactly $\cost_\AK(X,C,S)$. This suffices to suitably bound the absolute value of the second sum. To do so, we evoke the Cauchy-Schwarz inequality twice and obtain
    \begin{equation*} 
    	\begingroup
     	\allowdisplaybreaks
    	\begin{aligned}
    	\Biggl| \sum_{i=1}^{k} \sum_{j=1}^{n} & \xi_{ij} \omega_{j}  (\coresetx_{p(j)}-x_{j})^{\top}A_i(x_{j} -s_{i}) \Biggr| 
        \le  \left| \sum_{i=1}^{k} \sum_{j=1}^{n} \xi_{ij} \omega_{j}  \norm{\coresetx_{p(j)}-x_{j}}_{A_i}\norm{x_{j} -s_{i}}_{A_i} \right|  \\
    	= & \left| \sum_{i=1}^{k} \sum_{j=1}^{n} \sqrt{\xi_{ij} \omega_{j} } \norm{\coresetx_{p(j)}-x_{j}}_{A_i} \sqrt{\xi_{ij} \omega_{j} }\norm{x_{j} -s_{i}}_{A_i} \right| 
    	\le  \sqrt{D} 
    	\cdot \sqrt{\cost_\AK(X,C,S)}.
    	\end{aligned}
    	\endgroup
    \end{equation*}
This concludes the proof.
\end{proof}

As a final ingredient for the proof of \cref{th:movement_error_coreset}, we define a mapping
\begin{equation*}
    \extension: \Cs_\Kappa(k,\coresetX,\coresetOmega) \rightarrow \Cs_\Kappa(k,X,\Omega)
\end{equation*}
that converts clusterings of $(\coresetX,\coresetOmega) = p(X,\Omega)$ to clusterings of $(X,\Omega)$. In fact, $\extension$ assigns each point $x_j$ of $X$ to clusters exactly as its representative $x_{p(j)}$ in $\coresetX$ is assigned. More precisely, let $\coresetC \in \Cs_\Kappa(k,\coresetX,\coresetOmega)$ and define $C=\extension(\coresetC)$ by setting 
\begin{equation*}
\xi_{ij} = \tilde{\xi}_{ip(j)} \qquad \forall \, j \in [n], i\in [k].
\end{equation*}
The following simple lemma shows, in particular, that $\extension$ preserves the cluster weights, i.e., $\extension(\coresetC) \in \Cs_\Kappa(k,X,\Omega)$.  

\begin{lemma} \label{le:technical_bound-f}
Let $p$ be a merging function for $(X,\Omega)$, $(\coresetX,\coresetOmega)=p(X,\Omega)$, $\coresetC \in \Cs_\Kappa(k,\coresetX,\coresetOmega)$, and $\extension$ as defined above. Then, for $C = f(\coresetC)$,
\begin{enumerate}[label=(\alph*)]
    \item $\displaystyle \sum_{j=1}^n \xi_{ij} \omega_{j} =\sum_{\tilde{\jmath}=1}^{\tilde{n}} \tilde{\xi}_{i\tilde{\jmath}}\tilde{\omega}_{\tilde{\jmath}}$ for each $i\in [k]$;\label{le:item-a}
\item $p \bigl(f(\coresetC)\bigr)= \coresetC$. \label{le:item-b}
\end{enumerate}
\end{lemma}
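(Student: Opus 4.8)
The plan is to prove both assertions directly from the definitions of $p$, of the induced clustering $p(\cdot)$, and of the extension $\extension$, as all three are given by explicit formulas. The only mild subtlety is bookkeeping with the index map $p$ and its fibers $p^{-1}(\tilde{\jmath})$, together with the defining relation $\tilde{\omega}_{\tilde{\jmath}} = \sum_{j\in p^{-1}(\tilde{\jmath})}\omega_j$. I expect no real obstacle here; the ``hard part'' is merely keeping the summation ranges straight and invoking the right normalizations at the right moment.

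For part \ref{le:item-a}, I would start from $C = \extension(\coresetC)$, so $\xi_{ij} = \tilde{\xi}_{ip(j)}$ for all $j\in[n]$, $i\in[k]$. Then I rewrite the sum over $j\in[n]$ by grouping the indices according to which fiber $p^{-1}(\tilde{\jmath})$ they fall into, i.e. $\sum_{j=1}^n = \sum_{\tilde{\jmath}=1}^{\tilde{n}}\sum_{j\in p^{-1}(\tilde{\jmath})}$, which is a valid partition since $p$ is surjective (indeed a function). This gives
\begin{equation*}
\sum_{j=1}^n \xi_{ij}\omega_j = \sum_{\tilde{\jmath}=1}^{\tilde{n}} \sum_{j\in p^{-1}(\tilde{\jmath})} \tilde{\xi}_{i\tilde{\jmath}}\,\omega_j = \sum_{\tilde{\jmath}=1}^{\tilde{n}} \tilde{\xi}_{i\tilde{\jmath}} \sum_{j\in p^{-1}(\tilde{\jmath})} \omega_j = \sum_{\tilde{\jmath}=1}^{\tilde{n}} \tilde{\xi}_{i\tilde{\jmath}}\,\tilde{\omega}_{\tilde{\jmath}},
\end{equation*}
using that $\tilde{\xi}_{i\tilde{\jmath}}$ does not depend on $j$ inside a fixed fiber, and finally the weight identity defining $\tilde{\omega}_{\tilde{\jmath}}$. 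As a byproduct, applying this with $\sum_{i=1}^k$ and using $\sum_{i=1}^k\tilde{\xi}_{i\tilde{\jmath}}=1$ together with $\sum_{i=1}^k\xi_{ij}=1$ confirms consistency, and since each cluster weight of $C$ equals the corresponding cluster weight of $\coresetC$ (which satisfies the $\Kappa$-constraints), this also justifies the earlier claim that $\extension(\coresetC)\in\Cs_\Kappa(k,X,\Omega)$.

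For part \ref{le:item-b}, I would simply compose the two explicit definitions. Write $C=\extension(\coresetC)$, so $\xi_{ij}=\tilde{\xi}_{ip(j)}$, and then form $\coresetC' = p(C)$ with components, by definition of the merging-induced clustering,
\begin{equation*}
\tilde{\xi}'_{i\tilde{\jmath}} = \frac{1}{\tilde{\omega}_{\tilde{\jmath}}} \sum_{j\in p^{-1}(\tilde{\jmath})} \xi_{ij}\,\omega_j = \frac{1}{\tilde{\omega}_{\tilde{\jmath}}} \sum_{j\in p^{-1}(\tilde{\jmath})} \tilde{\xi}_{i\tilde{\jmath}}\,\omega_j = \frac{\tilde{\xi}_{i\tilde{\jmath}}}{\tilde{\omega}_{\tilde{\jmath}}} \sum_{j\in p^{-1}(\tilde{\jmath})} \omega_j = \tilde{\xi}_{i\tilde{\jmath}},
\end{equation*}
where again I used that $p(j)=\tilde{\jmath}$ for $j$ in the fiber, pulled the constant $\tilde{\xi}_{i\tilde{\jmath}}$ out of the sum, and applied the defining weight identity. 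Hence $\coresetC' = \coresetC$, i.e. $p(\extension(\coresetC))=\coresetC$, as claimed. Both parts are thus one-line computations once the fiber decomposition of $[n]$ is made explicit.
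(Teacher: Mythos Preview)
Your proof is correct and follows essentially the same approach as the paper: both parts are handled by substituting $\xi_{ij}=\tilde{\xi}_{ip(j)}$, partitioning the sum over $[n]$ into fibers $p^{-1}(\tilde{\jmath})$, pulling out the constant $\tilde{\xi}_{i\tilde{\jmath}}$, and invoking $\tilde{\omega}_{\tilde{\jmath}}=\sum_{j\in p^{-1}(\tilde{\jmath})}\omega_j$. Your version is, if anything, slightly more explicit in spelling out the intermediate step in part~\ref{le:item-b}.
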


\begin{proof} For the proof of \ref{le:item-a} simply note that
\begin{equation*} 
\sum_{j=1}^n \xi_{ij} \omega_{j} = \sum_{j=1}^n \tilde{\xi}_{ip(j)} \omega_{j} = \sum_{\tilde{\jmath}=1}^{\tilde{n}} \sum_{j \in p^{-1}(\tilde{\jmath})} \xi_{ij} \omega_{j} = \sum_{\tilde{\jmath}=1}^{\tilde{n}} \tilde{\xi}_{i\tilde{\jmath}}\tilde{\omega}_{\tilde{\jmath}}.
\end{equation*}
To prove \ref{le:item-b}, denote the components of $p\bigl(f(\coresetC)\bigr)$ by $\tilde{\xi}_{i\tilde{\jmath}}^*$. Then, we have
\begin{equation*} 
	\tilde{\xi}_{i\tilde{\jmath}}^* = \frac{1}{\tilde{\omega}_{\tilde{\jmath}}}\sum_{j \in p^{-1}(\tilde{\jmath})} \xi_{ij} \omega_{j} = 
	\frac{\tilde{\xi}_{i\tilde{\jmath}}}{\tilde{\omega}_{\tilde{\jmath}}}\sum_{j \in p^{-1}(\tilde{\jmath})} \omega_{j}= \tilde{\xi}_{i\tilde{\jmath}}.
\end{equation*}
This concludes the proof of the lemma.
\end{proof}

By \cref{le:technical_bound-f} \ref{le:item-a}, $\extension$ preserves the cluster sizes and maps weight-constrained clusterings onto weight-constrained clusterings. Also note, that $\extension$ can be very easily computed. 

\cref{le:technical_bound-f} \ref{le:item-b} allows us to infer that the inequalities of \cref{le:technical_bound} not only hold as stated, i.e., for pairs $C \in \Cs_\Kappa(k,X,\Omega)$ and $\coresetC=p(C)$, but also for pairs $\coresetC \in\Cs_\Kappa(k,\coresetX,\coresetOmega)$ and $C=f(\coresetC)$, i.e., 

\begin{corollary}\label{le:technical_bound-2}
Let $p$ be a merging function for $(X,\Omega)$, $(\coresetX,\coresetOmega)=p(X,\Omega)$, and $\coresetC \in\Cs_\Kappa(k,\coresetX,\coresetOmega)$. Then, with $D$ as before,
\begin{align*}
     \cost_\AK(X,f(\coresetC),S) &-  2\sqrt{D\cdot \cost_\AK(X,f(\coresetC),S)} 
	 \le \cost_\AK(\coresetX,\coresetC,S)\\
	 &\le \cost_\AK(X,f(\coresetC),S) + 2\sqrt{D\cdot \cost_\AK(X,C,S)} + D.
	\end{align*}
\end{corollary}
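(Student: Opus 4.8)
The plan is to simply invoke \cref{le:technical_bound} with the clustering $C=\extension(\coresetC)$ in place of the generic clustering there, and then rewrite the middle term using \cref{le:technical_bound-f}. So, let $\coresetC \in\Cs_\Kappa(k,\coresetX,\coresetOmega)$ and put $C=\extension(\coresetC)$, with components $\xi_{ij}=\tilde\xi_{ip(j)}$. By \cref{le:technical_bound-f}\ref{le:item-a}, $C\in \Cs_\Kappa(k,X,\Omega)$, so $C$ is an admissible input for \cref{le:technical_bound} (with the same merging function $p$). Observe that the quantity $D=\sum_{i=1}^k\sum_{j=1}^n \xi_{ij}\omega_j\norm{\coresetx_{p(j)}-x_j}_{A_i}^2$ appearing in \cref{le:technical_bound} for this choice of $C$ is precisely the $D$ named in the corollary, since the components $\xi_{ij}$ are fixed by $C=\extension(\coresetC)$.

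Applying \cref{le:technical_bound} to $C=\extension(\coresetC)$ therefore yields
\begin{align*}
\cost_\AK(X,\extension(\coresetC),S) - 2\sqrt{D\cdot\cost_\AK(X,\extension(\coresetC),S)}
&\le \cost_\AK(p(X),p(C),S) \\
&\le \cost_\AK(X,\extension(\coresetC),S) + 2\sqrt{D\cdot\cost_\AK(X,\extension(\coresetC),S)} + D.
\end{align*}
It remains only to identify the middle term. By \cref{le:technical_bound-f}\ref{le:item-b}, $p(C)=p\bigl(\extension(\coresetC)\bigr)=\coresetC$, and $p(X)=\coresetX$ by construction, so $\cost_\AK(p(X),p(C),S)=\cost_\AK(\coresetX,\coresetC,S)$. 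Substituting this equality into the displayed chain, and recalling that $C=\extension(\coresetC)$ so that $\cost_\AK(X,C,S)=\cost_\AK(X,\extension(\coresetC),S)$ (which is why the square-root terms may be written with either $C$ or $\extension(\coresetC)$, exactly as in \cref{le:technical_bound}), gives the asserted inequalities.

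There is essentially no obstacle here: the corollary is a pure bookkeeping consequence of the already-proved \cref{le:technical_bound} together with the fixed-point property $p\circ\extension=\mathrm{id}$ from \cref{le:technical_bound-f}\ref{le:item-b}. The only point requiring a moment's care is to confirm that the $D$ in the corollary's statement and the $D$ produced by \cref{le:technical_bound} refer to the same number, which holds because $\extension(\coresetC)$ has components $\xi_{ij}=\tilde\xi_{ip(j)}$ and hence $D$ depends on $\coresetC$ in exactly the way intended.
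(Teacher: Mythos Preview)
Your proof is correct and follows exactly the approach the paper indicates in the sentence preceding the corollary: apply \cref{le:technical_bound} to $C=\extension(\coresetC)$ and then use \cref{le:technical_bound-f}\ref{le:item-b} to identify $p(\extension(\coresetC))=\coresetC$, so that the middle term becomes $\cost_\AK(\coresetX,\coresetC,S)$. Your additional remarks verifying feasibility via \cref{le:technical_bound-f}\ref{le:item-a} and matching the two occurrences of $D$ are welcome clarifications but do not deviate from the paper's intended argument.
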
 

We are now ready to prove \cref{th:movement_error_coreset}.

\begin{proof}[Proof of \cref{th:movement_error_coreset}]
We show that, under the assumptions of the theorem, the coreset properties, \cref{def:coreset} \cref{def:property_a}, \cref{def:property_b}, hold with offset $0$, i.e., for $\Delta^{+}=\Delta^{-}=0$.

Since $\OPT_\Acc(X) \le \cost_\AK(X,C,S)$ and, by assumption,
\begin{equation*}
    \sum_{j=1}^{n} \omega_{j} \norm{x_{j}-\tilde{x}_{p(j)}}_{2}^{2} \le \frac{\epsilon^2 \lambdamin}{16 \lambdamax} \OPT(X),
\end{equation*}
\cref{le:euclidean-anisotropic-bound}, applied with $\eta= \nicefrac{\epsilon^2}{16}$, provides 
    \begin{equation*}
        D=\sum_{i=1}^k \sum_{j=1}^n \xi_{ij} \omega_j \|\coresetx_{p(j)}- x_j\|^2_{A_i} \le \frac{\epsilon^2}{16} \OPT_\Acc(X) \le  \frac{\epsilon^2}{16}\cost_\AK(X,C,S).
    \end{equation*}
Thus, \cref{le:technical_bound-2} yields for $\coresetC \in\Cs_\Kappa(k,\coresetX,\coresetOmega)$ and $C=f(\coresetC)$
\begin{align*}
(1-\epsilon)\cost_\AK(X,C,S)  &\le  \cost_\AK(X,C,S) -  2\sqrt{D\cdot \cost_\AK(X,C,S)}\\
&\le \cost_\AK(\coresetX,\coresetC,S),
\end{align*}
which shows coreset property \cref{def:property_a}.

To verify coreset property \cref{def:property_b}, let $C \in \Cs_\Kappa(k,X,\Omega)$ be optimal for the given sites $S$, and set $\coresetC=p(C)$. Then, by \cref{le:technical_bound}
\begin{align*}
\cost_\AK(\coresetX,S) &\le \cost_\AK(\coresetX,\coresetC,S)
\le \cost_\AK(X,C,S) + 2\sqrt{D\cdot \cost_\AK(X,C,S)} + D\\
& \le \left(1+\frac{\epsilon}{2}+\frac{\epsilon^2}{16}\right)\cost_\AK(X,S) \le (1+\epsilon)\cost_\AK(X,S).
\end{align*}
This concludes the proof.
\end{proof}

\section{Small coresets} \label{sec:improved_construction}
We will now construct small $(\epsilon,\delta)$-coresets, analyze their properties and prove \cref{th:smaller_coreset}. In \cref{ssec:coreset-construction}, we will first provide the construction and bound the sizes of the resulting coresets $(\coresetX,\coresetOmega)$. In \cref{ssec:interval_partition}, we will derive a structural property for clusterings on $(\coresetX,\coresetOmega)$ that admit a strictly compatible anisotropic diagram. Finally, the coreset properties according to \cref{def:coreset} will be verified in  \cref{ssec:coreset-properties}.

\subsection{The construction}\label{ssec:coreset-construction}
The construction will follow the 3-step scheme already outlined in \cref{sec:preliminaries}. The first step is to compute an $(\alpha,\beta)$-approximation for least-squares clustering using \cref{le:alpha_beta_approximation}. This approximation results in an integer assignment of the points of $X$ to $\hat k$ clusters $\hat C_i$ with $\hat k \le \beta\cdot k$ of cost
\begin{equation*}
    \text{ALG}_{(\alpha,\beta)}(X)\le \alpha \cdot \text{OPT}(X).
\end{equation*}
For $i\in [\hat k]$, let $\hat c_i$ denote the centroid of $\hat C_i$. (The following exposition is written with the general situation in mind that none of the clusters is void. Note, however, that void clusters do not cause any problems but actually reduce the size of the constructed coreset.)

We start by describing the second step in full detail. In this step, the obtained cluster centroids $\hat c_i$ are used as vertices, from which suitable pencils of lines are issued. This follows the approach of \cite{Har-Peled2007}, which essentially reduces the coreset construction in the least-squares case to dimension one. In order to construct such pencils, we employ \emph{$\epsilon$-nets}, i.e., point sets $Q$ with properties explained in the following lemma. 

\begin{proposition}[see \cite{Matousek2002,Har-Peled2007}] \label{le:epsilon_nets}
Let, as usual, $\mathbb{S}^{d-1}$ denote the Euclidean unit sphere of $\R^d$. Given $\epsilon_0 \in (0,\nicefrac{1}{2})$, there exists a point set $Q=Q(\epsilon_0) \subset \mathbb{Q}^d\setminus \{0\}$ with the following properties:
    \begin{enumerate}[label=(\alph*)]
        \item $Q$ contains $\mathcal{O}(\epsilon_0^{-(d-1)})$ points.
        \item For each $p\in \mathbb{S}^{d-1}$ there exists a point $q \in Q$ such that $\norm{p-q}_2 \le \epsilon_0$, and
        \item $Q$ can be computed in time  $\mathcal{O}(\epsilon_0^{-(d-1)})$.
    \end{enumerate}
\end{proposition}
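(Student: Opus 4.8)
The plan is to realize $Q$ as the set of vertices of a sufficiently fine \emph{rational} cubical grid that happen to lie in a thin spherical shell about $\mathbb{S}^{d-1}$. Properties (b) and (a) then follow from an elementary rounding argument and a volume estimate, whereas property (c) is the only point that needs care: one must enumerate the shell slice-by-slice so as to avoid scanning the whole bounding box. Concretely, I would fix a rational number $s$ with $\tfrac{\epsilon_0}{2d} \le s \le \tfrac{\epsilon_0}{d}$ (such an $s$ exists by density of $\mathbb{Q}$, and it satisfies $s\sqrt{d}\le \epsilon_0$ as well as $s=\Theta(\epsilon_0)$) and set
$$
Q=Q(\epsilon_0):=\Bigl\{\, v\in s\Z^{d}\ :\ \bigl\lvert\,\norm{v}_2-1\,\bigr\rvert\le \tfrac{\epsilon_0}{2}\,\Bigr\}.
$$
Since $s\in\mathbb{Q}$ we have $Q\subset\mathbb{Q}^{d}$, and $0\notin Q$ because $\lvert\norm{0}_2-1\rvert=1>\epsilon_0/2$ (as $\epsilon_0<\tfrac12$); hence $Q\subset\mathbb{Q}^{d}\setminus\{0\}$, as required.

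For property (b), given $p\in\mathbb{S}^{d-1}$ let $q\in s\Z^{d}$ be obtained by rounding each coordinate of $p$ to a nearest integer multiple of $s$. Then each coordinate of $p-q$ has absolute value at most $s/2$, so $\norm{p-q}_2\le\tfrac12 s\sqrt{d}\le\tfrac{\epsilon_0}{2}\le\epsilon_0$, and moreover $\lvert\norm{q}_2-1\rvert=\lvert\norm{q}_2-\norm{p}_2\rvert\le\norm{q-p}_2\le\epsilon_0/2$, so $q\in Q$. This establishes (b) with room to spare.

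For property (a), the half-open cubes $v+[0,s)^{d}$, $v\in Q$, are pairwise disjoint, each of volume $s^{d}$, and each is contained in the shell $\{x\in\R^{d}:1-2\epsilon_0\le\norm{x}_2\le 1+2\epsilon_0\}$, using $\tfrac{\epsilon_0}{2}+s\sqrt{d}\le\tfrac{3\epsilon_0}{2}\le 2\epsilon_0$. Since $(1+2\epsilon_0)^{d}-(1-2\epsilon_0)^{d}\le c_d\,\epsilon_0$ for $\epsilon_0\le\tfrac12$ with a constant $c_d$ depending only on $d$ (mean value theorem), that shell has volume at most $c'_d\,\epsilon_0$, so $|Q|\cdot s^{d}\le c'_d\,\epsilon_0$ and hence $|Q|=\mathcal{O}(\epsilon_0/s^{d})=\mathcal{O}(\epsilon_0^{-(d-1)})$.

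The genuinely delicate point is property (c): a direct scan of all grid points of $s\Z^{d}$ inside $[-2,2]^{d}$ costs $\Theta(s^{-d})=\Theta(\epsilon_0^{-d})$, which is a factor $1/\epsilon_0$ too large. Instead I would enumerate $Q$ slice-by-slice: iterate over the $(d-1)$-dimensional grid points $v'\in s\Z^{d-1}$ with $\norm{v'}_2\le 2$ — there are only $\mathcal{O}(s^{-(d-1)})=\mathcal{O}(\epsilon_0^{-(d-1)})$ of them — and, for each such $v'$, determine the admissible set of last coordinates $\{x_d\in s\Z:(1-\tfrac{\epsilon_0}{2})^2\le\norm{v'}^{2}_2+x_d^{2}\le(1+\tfrac{\epsilon_0}{2})^{2}\}$, which is a union of at most two intervals whose endpoints are computed in $\mathcal{O}(1)$ arithmetic operations (the map $x_d\mapsto\norm{v'}^{2}_2+x_d^{2}$ is convex); listing the grid points of $s\Z$ inside these intervals is output-sensitive, costing $\mathcal{O}(1)$ per point. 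The total running time is then $\mathcal{O}(\epsilon_0^{-(d-1)})$ for the base grid plus $\mathcal{O}(|Q|)=\mathcal{O}(\epsilon_0^{-(d-1)})$ for the reported points, which yields (c). Thus the only real obstacle is algorithmic — making the construction output-sensitive rather than grid-exhaustive — while the covering and size bounds are routine.
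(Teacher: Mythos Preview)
The paper does not give its own proof of this proposition; it is quoted as a known fact with references to Matou\v{s}ek's textbook and to Har-Peled--Kushal, so there is nothing to compare against. Your construction is correct: rounding to the grid $s\Z^{d}$ gives (b), a standard packing argument (the disjoint cubes $v+[0,s)^{d}$ all sit in a spherical shell of volume $\mathcal{O}(\epsilon_0)$) gives (a), and the slice-by-slice enumeration---iterating over a $(d-1)$-dimensional base grid of size $\mathcal{O}(s^{-(d-1)})$ and solving a quadratic inequality for the last coordinate---delivers the output-sensitive running time needed for (c). One minor remark: the choice $s\le\epsilon_0/d$ is stronger than necessary (you only use $s\sqrt{d}\le\epsilon_0$, for which $s\le\epsilon_0/\sqrt{d}$ would suffice), so your hidden constant picks up an extra factor of roughly $d^{d/2}$; but since the paper treats $d$ as fixed throughout, this does not affect the stated asymptotics.
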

Using \cref{le:epsilon_nets}, we compute such a set $Q(\epsilon_0)$ for 
\begin{equation*}
    \epsilon_0=\frac{\epsilon}{4}\sqrt{\frac{\lambdamin}{\alpha \cdot \lambdamax}}.
\end{equation*}
This leads to $\hat k$ pencils $\hat c_i + \R Q$ of the lines 
\begin{equation*}
  L(\hat c_i,q) = \hat c_i + \R q, \qquad \forall i\in [\hat k], q\in Q(\epsilon_0).
\end{equation*}
Note that, as $0\not\in Q$, all sets $L(\hat c_i,q)$ are indeed lines. Let $\mathcal{L}$ denote the set of all such lines. Then,
\begin{equation*}
    |\mathcal{L}| = \hat k \cdot |Q| \in \mathcal{O}(k \epsilon^{-(d-1)}).
\end{equation*}

Next, for each $i\in [\hat k]$, we project each point of $x\in X\cap \hat C_i$ orthogonally on a closest ray $L(\hat c_i,q)$; ties are resolved arbitrarily. The resulting point will be denoted by $\bar{x}$. Note that, unless $Q$ is chosen in an appropriately general position, it may happen that two points of $X$ are projected onto the same point $\bar{x}$. We do not merge such points, and we treat them as separate entities in the family $\bar{X}$. The weights then remain unchanged, and we obtain the weighted data set $(\bar{X}, \Omega)$ of the same cardinality $n$, i.e., the merging function is the identity. As the following lemma shows, $(\bar{X}, \Omega)$ is a $0$-offset coreset for \wcac, with the extension $\extension_1: \Cs_\Kappa(k,\bar{X},\Omega) \rightarrow \Cs_\Kappa(k,X,\Omega)$ defined by 
\begin{equation*}
\xi_{ij} = \bar{\xi}_{ij} \qquad \forall \, j \in [n], i\in [k].
\end{equation*}

\begin{lemma} \label{le:projecting_is_a_coreset}
    $(\bar{X}, \Omega)$ is a linear $\epsilon$-coreset for \wcac.
\end{lemma}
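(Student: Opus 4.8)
The key observation is that orthogonal projection onto a closest line in the pencil $\hat c_i + \R Q$ only moves each point $x \in X \cap \hat C_i$ a controlled amount relative to $\norm{x - \hat c_i}_2$, so the total squared movement is bounded by the cost of the $(\alpha,\beta)$-approximation, which in turn is bounded by $\alpha\,\OPT(X)$. This puts us exactly in the situation of \cref{th:movement_error_coreset}, so we may invoke it directly. Concretely, the plan is: (i) show that the projection map satisfies the hypothesis of \cref{th:movement_error_coreset} with the identity as the merging function $p$; (ii) conclude that $(\bar X, \Omega)$ is a linear $\epsilon$-coreset; (iii) check that the extension $\extension_1$ given by $\xi_{ij} = \bar\xi_{ij}$ is exactly the extension $\extension$ produced in the proof of \cref{th:movement_error_coreset} for $p = \mathrm{id}$ (which is immediate, since when $p$ is the identity the induced merging clustering and its extension are both the identity on $\Cs_\Kappa$).

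\textbf{Main step: bounding the movement.} Fix $i \in [\hat k]$ and $x \in X \cap \hat C_i$. Write $v = x - \hat c_i$. If $v = 0$ the projection leaves $x$ fixed. Otherwise, let $u = v/\norm{v}_2 \in \mathbb{S}^{d-1}$, and by \cref{le:epsilon_nets}(b) pick $q \in Q(\epsilon_0)$ with $\norm{u - q}_2 \le \epsilon_0$; we may assume $\norm{q}_2 = 1$ up to replacing $q$ by $q/\norm{q}_2$ (rescaling a direction does not change the line, and $\norm{u - q/\norm{q}_2}_2 \le 2\norm{u-q}_2 \le 2\epsilon_0$, so at worst the constant changes). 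The squared distance from $x$ to the line $L(\hat c_i, q) = \hat c_i + \R q$ is $\norm{v}_2^2 - (v^\top q)^2 = \norm{v}_2^2\bigl(1 - (u^\top q)^2\bigr)$, and since $u^\top q = 1 - \tfrac12\norm{u-q}_2^2 \ge 1 - \tfrac12\epsilon_0^2$ we get $1 - (u^\top q)^2 \le \norm{u-q}_2^2 \le \epsilon_0^2$ (using $\norm{u-q}_2 \le \epsilon_0 \le 1$ so that $(u^\top q)^2 \ge 1 - \norm{u-q}_2^2$). Hence, projecting onto a \emph{closest} line can only do better, so $\norm{x - \bar x}_2^2 \le \epsilon_0^2 \norm{x - \hat c_i}_2^2$. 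Summing over all points and clusters, and using that $\bar x$ is obtained by moving $x$ within its own cluster $\hat C_i$,
\begin{equation*}
    \sum_{j=1}^n \omega_j \norm{x_j - \bar x_j}_2^2 \le \epsilon_0^2 \sum_{i=1}^{\hat k} \sum_{x \in X \cap \hat C_i} \omega_x \norm{x - \hat c_i}_2^2 = \epsilon_0^2 \cdot \mathrm{ALG}_{(\alpha,\beta)}(X) \le \epsilon_0^2 \,\alpha\, \OPT(X).
\end{equation*}
Plugging in $\epsilon_0^2 = \tfrac{\epsilon^2}{16}\cdot\tfrac{\lambdamin}{\alpha\,\lambdamax}$ gives $\sum_j \omega_j \norm{x_j - \bar x_j}_2^2 \le \tfrac{\epsilon^2 \lambdamin}{16\,\lambdamax}\OPT(X)$, which is precisely the hypothesis of \cref{th:movement_error_coreset}. (If one keeps the rescaling constant $2$, this just means choosing $\epsilon_0$ a constant factor smaller, which does not affect the asymptotics claimed later.)

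\textbf{Conclusion.} With $p = \mathrm{id}: [n] \to [n]$ (so $\coresetomega_j = \omega_j$, trivially satisfying the weight condition) and $\coresetX = \bar X$, \cref{th:movement_error_coreset} tells us that $(\bar X, \Omega)$ is a linear $\epsilon$-coreset for $(k,X,\Omega,\Acc,\Kappa)$, with the extension from that theorem's proof. When $p$ is the identity, the induced clustering $p(C)$ has components $\tilde\xi_{ij} = \xi_{ij}$, and the extension $\extension$ defined there by $\xi_{ij} = \tilde\xi_{ip(j)} = \tilde\xi_{ij}$ coincides with $\extension_1$. Thus $(\bar X, \Omega)$ with extension $\extension_1$ and $\Delta^+ = \Delta^- = 0$ is a linear $\epsilon$-coreset for \wcac, as claimed. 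I expect the only delicate point to be the clean treatment of the direction-normalization of the net vectors $q$ (the paper states $Q \subset \mathbb{Q}^d \setminus\{0\}$, not $Q \subset \mathbb{S}^{d-1}$), but this is a cosmetic constant-factor issue absorbed into the choice of $\epsilon_0$; everything else is a direct application of \cref{th:movement_error_coreset} and \cref{le:alpha_beta_approximation}.
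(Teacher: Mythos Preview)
Your proof is correct and follows essentially the same route as the paper: bound the per-point movement by $\epsilon_0\norm{x-\hat c_i}_2$ using the $\epsilon_0$-net property, sum to get $\sum_j \omega_j\norm{x_j-\bar x_j}_2^2 \le \epsilon_0^2\,\alpha\,\OPT(X) = \tfrac{\epsilon^2\lambdamin}{16\lambdamax}\OPT(X)$, and invoke \cref{th:movement_error_coreset} with $p=\mathrm{id}$. The only cosmetic difference is in the per-point bound: the paper avoids your normalization issue entirely by comparing $x_j$ not to its orthogonal projection on $L(\hat c_i,q)$ but to the specific point $\hat c_i + \norm{x_j-\hat c_i}_2\,q$ on that line, giving $\norm{x_j-\bar x_j}_2 \le \norm{x_j-\hat c_i}_2\cdot\norm{u-q}_2 \le \epsilon_0\norm{x_j-\hat c_i}_2$ directly, with no need for $q$ to be unit.
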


\begin{proof} Let $j\in [n]$ and $i(j)\in [\hat k]$ such that $x_j\in X\cap \hat C_{i(j)}$. Further, let $q\in Q$ such that $\bar{x}_j \in L(\hat c_{i(j)},q)$.
Using the intercept theorem, we see that with $i=i(j)$,
    \begin{equation*}
        \norm{x_j-\bar{x}_j}_2 \le 
        \left\| \left(\hat c_i + \frac{x_j-\hat c_i}{\norm{x_j-\hat c_i}_2}\right) - (\hat c_i +q) \right\|_2 \cdot \norm{x_j-\hat c_i}_2 \le \epsilon_0 \cdot \norm{x_j-\hat c_i}_2.
    \end{equation*}
Since the $\hat k$ clusters $\hat C_i$ constitute an $(\alpha,\beta)$-approximation for least-squares clustering, this implies   
    \begin{align*}
        \sum_{j=1}^{n} \omega_{j} \norm{x_{j}-\bar{x}_{j}}_{2}^{2} 
        &\le \epsilon_0^2 \cdot \sum_{j=1}^{n} \omega_{j}  \norm{x_j-\hat c_{i(j)}}_2^2 
        = \epsilon_0^2 \cdot \text{ALG}_{(\alpha,\beta)}(X)
        \le \epsilon_0^2 \cdot \alpha \cdot  \OPT(X)\\
        &= \frac{\epsilon^2 \lambdamin}{16 \lambdamax } \OPT(X).
    \end{align*}
Hence, it follows from \cref{th:movement_error_coreset} that $(\bar{X},\Omega)$ is an $\epsilon$-coreset for \wcac.
\end{proof}

Let us point out that together, Lemmas \ref{le:coresets_give_coresets} and \ref{le:projecting_is_a_coreset} imply that it suffices to design coresets that live on the union of the constructed pencils, i.e. on $\mathcal{L}$. This describes the second step of the procedure.

In the third step, we merge points on each line into batches by replacing them with their centroids and adjusting their weights. This, finally, results in the desired data set $(\coresetX,\coresetOmega)$. 

So, let $L \in \mathcal{L}$. Starting from the leftmost point, we successively add, from left to right, points of $\bar X\cap L$ to form the first batch $B=B_{1}$ until 
\begin{equation*}
    \error_{\identity}(B)= \error_{\identity}=\frac{\epsilon^2}{32\alpha k \lambdamax |\mathcal{L}|}\text{ALG}_{(\alpha,\beta)}(X),
\end{equation*}
and continue with the next batch. The above condition can always be achieved by assigning, if needed, the last point fractionally to the batch. In such a case, we may assume that all points are completely assigned to a single batch by splitting this point into two with the appropriate weights. While this assumption simplifies the exposition, it does not effect on the results and can be made without loss of generality.

The process is continued until all points are assigned to batches. Note that all batches have the same error, apart from the last one, whose error may be smaller. Let $\batches(L)$ denote the set of batches on the line $L$, and let $\batches$ signify the total set of all batches, i.e., the union of all $\batches(L)$ with $L \in \mathcal{L}$.

Finally, we merge the points of each batch. More precisely, let $B \in \batches$, and let
\begin{equation*}
\omega_B = \sum_{\bar{x}_j \in \bar X\cap B} \omega_j, \qquad
    x_B= \frac{1}{\omega_B } \sum_{\bar{x}_j \in \bar{ X}\cap B} \omega_j \bar{x}_j
\end{equation*} 
be the set's $\bar{ X}\cap B$ total weight and centroid, respectively.
We replace the points of $\bar X\cap B$ by their centroid $x_B$ and assign $\omega_B$ as its weight. 

The data set that results from applying the merger to all lines of the pencils will be called $(\batchingX,\batchingOmega)$. This is actually the desired set $(\coresetX,\coresetOmega)$, and we will use both notations
interchangeably, depending on which one leads to the more intuitive exposition in the subsequent proofs. In particular, the number of points will still be denoted by $\tilde{n}$. If we index the batches as $B_{\tilde{\jmath}}$ for $\tilde{\jmath} \in [\tilde{n}]$ and set $x_{\tilde{\jmath}}=x_{B_{\tilde{\jmath}}}$, the corresponding merging function $p:[n]\rightarrow [\tilde{n}]$ is again specified by
\begin{equation*}
    p(j)= \tilde{\jmath} \quad \Longleftrightarrow \quad 
    \bar{x}_j \in B_{\tilde{\jmath}},
\end{equation*}
and the extension $\extension_2: \Cs_\Kappa(k,\coresetX,\coresetOmega) \rightarrow \Cs_\Kappa(k,\bar{X},\Omega)$ is again defined by 
\begin{equation*}
\bar{\xi}_{ij} = \tilde{\xi}_{ip(j)} \qquad \forall \, j \in [n], i\in [k].
\end{equation*}
Hence, the extension $\extension: \Cs_\Kappa(k,\coresetX,\coresetOmega) \rightarrow \Cs_\Kappa(k,X,\Omega)$  is simply given by
\begin{equation*}
f= f_1\circ f_2, \quad \text{i.e.,} \quad \xi_{ij} = \tilde{\xi}_{ip(j)} \qquad \forall \, j \in [n], i\in [k].
\end{equation*}

This completes the construction. Before we prove that $(\batchingX,\batchingOmega)$ is an $(\epsilon,\delta)$-coreset, we bound its cardinality $\tilde{n}$ in this section. The basic idea for deriving a bound is due to \cite{Har-Peled2007}. However, we improve on their arguments to reduce the dependency of the estimate by a factor of $k$. 

\begin{lemma} \label{le:number_of_points_batching}
\begin{equation*}
    |\batches| \in \mathcal{O}\left(\frac{k^2}{\epsilon^{d+1}}\right),
\end{equation*}
i.e., the number of points in $\batchingX$ is of the order $\nicefrac{k^2}{\epsilon^{d+1}}$.
\end{lemma}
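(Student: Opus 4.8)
The plan is to bound the total number of batches $|\batches| = \sum_{L\in\mathcal{L}}|\batches(L)|$ by bounding the number of batches on each line $L$ and then summing over all lines. The key observation is that, apart from the last batch on each line, every batch $B$ has exactly $\error_{\identity}(B) = \error_{\identity}$, the fixed threshold value. So if I can show that the errors of the batches lying on a single line sum to at most some quantity $M(L)$, then the line carries at most $M(L)/\error_{\identity} + 1$ batches, the ``$+1$'' accounting for the possibly cheap last batch. Summing over all $|\mathcal{L}|$ lines gives $|\batches| \le \bigl(\sum_L M(L)\bigr)/\error_{\identity} + |\mathcal{L}|$.

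The crucial step — and where the improvement over \cite{Har-Peled2007} by a factor of $k$ must come from — is bounding $\sum_{L} M(L)$, i.e., the sum over all lines of the total batch-error on that line. First I would argue that on a fixed line $L$ issuing from $\hat c_i$, the batches partition $\bar X\cap L$ into consecutive intervals; since each batch's error is the sum of $\omega_x\norm{x - x_B}_2^2$ over its points, and the batch centroid $x_B$ minimizes this over the interval, the total batch-error on $L$ is at most $\error_{\identity}(\bar X\cap L)$ — the variation of all points on that line about \emph{their} common centroid would actually be an overcount; more carefully, splitting a point set into groups and summing each group's variation about its own centroid only \emph{decreases} the total, so $\sum_{B\in\batches(L)}\error_{\identity}(B) \le \error_{\identity}(\bar X\cap L)$. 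Next I would sum over all lines issuing from a fixed centroid $\hat c_i$: a point $\bar x_j \in \hat C_i$ lies on exactly one line $L(\hat c_i, q)$, and its contribution to $\error_{\identity}(\bar X\cap L)$ is at most $\omega_j\norm{\bar x_j - c(\bar X\cap L)}_2^2 \le \omega_j \norm{\bar x_j - \hat c_i}_2^2$ (again using the centroid-minimality, now comparing against the off-line point $\hat c_i$). Hence $\sum_{L: L \ni \hat c_i}\error_{\identity}(\bar X\cap L) \le \sum_{\bar x_j\in\hat C_i}\omega_j\norm{\bar x_j-\hat c_i}_2^2$, and summing over $i\in[\hat k]$ gives $\sum_L M(L) \le \sum_j \omega_j\norm{\bar x_j - \hat c_{i(j)}}_2^2$. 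Now $\bar x_j$ is the projection of $x_j$ onto a line through $\hat c_i$, so $\norm{\bar x_j - \hat c_i}_2 \le \norm{x_j - \hat c_i}_2$, whence $\sum_L M(L) \le \sum_j \omega_j\norm{x_j - \hat c_{i(j)}}_2^2 = \text{ALG}_{(\alpha,\beta)}(X)$.

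It then remains to put the pieces together:
\begin{equation*}
|\batches| \;\le\; \frac{\text{ALG}_{(\alpha,\beta)}(X)}{\error_{\identity}} + |\mathcal{L}| \;=\; \frac{\text{ALG}_{(\alpha,\beta)}(X)\cdot 32\alpha k\lambdamax|\mathcal{L}|}{\epsilon^2\,\text{ALG}_{(\alpha,\beta)}(X)} + |\mathcal{L}| \;=\; |\mathcal{L}|\left(\frac{32\alpha k\lambdamax}{\epsilon^2} + 1\right).
\end{equation*}
Since $|\mathcal{L}| \in \mathcal{O}(k\epsilon^{-(d-1)})$ by the construction (from \cref{le:epsilon_nets}) and $\alpha,\lambdamax$ are constants for a fixed instance, this yields $|\batches| \in \mathcal{O}\bigl(k\cdot\epsilon^{-(d-1)}\cdot k\cdot\epsilon^{-2}\bigr) = \mathcal{O}(k^2/\epsilon^{d+1})$, as claimed. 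The main obstacle is the second paragraph: getting the ``merging beats the global centroid'' inequality $\sum_{B}\error_{\identity}(B)\le\error_{\identity}(\bar X\cap L)$ stated cleanly (it is the standard fact that partitioning a weighted point set and summing within-group variances is at most the total variance), and — the genuinely new ingredient — recognizing that summing the \emph{per-line} variations over a pencil telescopes against the \emph{single} cluster cost $\sum_j\omega_j\norm{x_j-\hat c_{i(j)}}_2^2$ rather than incurring an extra factor of $k$, which is precisely the point where \cite{Har-Peled2007}'s bound is improved. I would also need to double-check the bookkeeping of the fractional last point and void clusters, but as noted in the construction these are harmless.
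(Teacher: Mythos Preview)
Your argument is correct and actually takes a more elementary route than the paper's. The paper argues via an \emph{optimal} unconstrained $k$-clustering of $(\bar X,\Omega)$: using the associated Voronoi diagram, each line $L$ meets at most $k$ Voronoi cells, so at most $k$ batches per line (the $k-1$ that straddle a cell boundary, plus the last one) can fail to contribute at least $\error_{\identity}$ to $\OPT(\bar X)$. This gives $|\batches|\le \OPT(\bar X)/\error_{\identity} + k|\mathcal L|$, and the paper then needs the coreset property of $(\bar X,\Omega)$ from \cref{le:projecting_is_a_coreset} to bound $\OPT(\bar X)\le 2\,\mathrm{ALG}_{(\alpha,\beta)}(X)$. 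Your approach bypasses the auxiliary optimal clustering entirely: you bound $\sum_{B\in\batches(L)}\error_{\identity}(B)$ directly by $\error_{\identity}(\bar X\cap L)$ (variance decomposition), then by $\sum_{\bar x_j\in L}\omega_j\|\bar x_j-\hat c_i\|_2^2$ (centroid minimality), and telescoping over the pencils collapses this to $\mathrm{ALG}_{(\alpha,\beta)}(X)$ itself. This yields the slightly sharper $|\batches|\le \mathrm{ALG}_{(\alpha,\beta)}(X)/\error_{\identity}+|\mathcal L|$ with no factor~$2$ and additive term $|\mathcal L|$ rather than $k|\mathcal L|$. What the paper's argument buys is that it previews the ``at most $\mathcal{O}(k)$ bad batches per line'' mechanism reused in \cref{le:proof_property_b_bad}; your argument is self-contained but does not foreshadow that later step.

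One phrasing to tighten: the inequality $\omega_j\|\bar x_j - c(\bar X\cap L)\|_2^2 \le \omega_j\|\bar x_j-\hat c_i\|_2^2$ is not true pointwise for each~$j$; centroid minimality only gives it after summing over $\bar x_j\in L$. Your conclusion $\sum_L \error_{\identity}(\bar X\cap L)\le \sum_j\omega_j\|\bar x_j-\hat c_{i(j)}\|_2^2$ is correct, but state the intermediate inequality at the sum level.
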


\begin{proof} To establish the asserted bound, let $\bar{c}_1,\ldots,\bar{c}_k$ be the centroids of an optimal least-squares clustering of $(\bar{X},\Omega)$. In this unconstrained Euclidean case, the clustering is compatible with a Voronoi diagram. As we need not respect any weight constraints for the clusters, we may assume that no point of $\bar{X}$ lies on the boundary of any Voronoi cell. Since $L$ intersects each Voronoi cell in a (possibly empty or one-element) interval, all points in the same interval lie in the same cluster. This implies that for at most $k-1$ batches in $L$, the points are assigned to more than one cluster, i.e., at least $|\batches(L)|-k+1$ such batches are fully assigned to a single cluster. 

Now, let $B\in \batches(L)$ be contained in the $i$th cluster, but let $B$ not be the last batch on $L$. As the error $\error_{E}(B)$ is known from the construction, we obtain from \cref{le:center_replacement} (with $X_0=B$, $c_0=c_B$, $\omega(X_0)=\omega_B$ and $A=E$) that
	\begin{align*}
	\sum_{\bar{x}\in B} \omega_{\bar{x}} \norm{\bar{x}-\bar{c}_i}_{2}^{2} &= \omega_{B}\norm{c_{B}-\bar{c}_i}_{2}^{2} + \error_{E}(B) \ge \error_{E}(B)= \error_{\identity}.
	\end{align*}
As at most $(k-1)+1=k$ of the batches $B\in \batches(L)$ can contribute less than $\error_{\identity}$ to the cost $\text{OPT}(\bar{X})$ of the clustering, we see that 
\begin{equation*}
  |\batchingX| \le \frac{\text{OPT}(\bar{X})}{\error_{\identity}} + k\cdot |\mathcal{L}|.
\end{equation*}
Since $(\bar{X},\Omega)$ is a linear $\epsilon$-coreset for $(X,\Omega)$ by \cref{le:projecting_is_a_coreset}, we can use \cref{le:wcaa-wcac} (b$'$) to obtain
\begin{equation*}
\text{OPT}(\bar{X}) \le (1+\epsilon) \cdot \text{OPT}(X) \le 2 \cdot \text{ALG}_{(\alpha,\beta)}(X).
\end{equation*}
All in all,
	\begin{equation*}
|\batchingX| \le 2\cdot \frac{\text{ALG}_{(\alpha,\beta)}(X)}{\error_{\identity}} +  k\cdot |\mathcal{L}| = \Bigl(64 \cdot \alpha \cdot \lambdamax \cdot\frac{k}{\epsilon^{2}} + k\Bigr)  \cdot |\mathcal{L}|.
	\end{equation*}
	As the right term is in $ \mathcal{O}\left(\frac{k^2}{\epsilon^{d+1}}\right)$, this concludes the proof.
\end{proof}

\subsection{Structural properties of clusterings admitting strictly compatible diagrams}  \label{ssec:interval_partition}

We will now show that each clustering that admits a strictly compatible diagram gives rise to a partition of each line $L$ into at most $2k-1$ intervals such that no data point in the relative interior of each interval is fractionally assigned, i.e., belongs to more than one cluster. 
Since by \cref{le:restriction_to_diagrams} it suffices to consider such diagrams on the coresets, this will allow us to verify coreset condition  \cref{def:property_b} of \cref{def:coreset}. Condition \cref{def:property_a}, on the other hand, does not require additional assumptions on the structure of the clustering. 

Given an instance $(k,\coresetX,\coresetOmega,\Acc,\Kappa, S)$ of \wcaa, suppose that $\coresetC\in \Cs_\Kappa(k,\coresetX,\coresetOmega)$ attains $\cost(\coresetX,S)$ and admits a strictly compatible diagram $\apd=\apd(T,\Sigma,\mathcal{A})=\{P_1,\ldots,P_k\}$. Note that $\coresetC$ is available by  \cref{le:restriction_to_diagrams}. 

We use $\apd$ to bound the \enquote{structural complexity} of $\coresetC$'s intersections with lines $L$. Later we apply the results to those lines that constitute the pencils on which $(\coresetX,\coresetOmega)$ \enquote{lives}. 

The first lemma considers the \enquote{generic situation} and could be stated within the realm of \emph{Davenport-Schinzel Sequences}; see, e.g., \cite{SA95}. We will, however, give a direct, self-contained proof of the relevant result.

So, let $c\in \R$, $q\in \R^d\setminus \{0\}$, $L=L(c,q)$, and, for $i\in [k]$,
\begin{equation*}
    g_i:\R^d \rightarrow \R, \qquad g_i(x)= \|x-s_i\|^2_{A_i} + \sigma. 
\end{equation*}
Further, let
\begin{equation*}
    h:\R^d \rightarrow \R,  \qquad h(x)= \min \{g_i(x): i\in [k]\}
\end{equation*}
be the lower envelope of $g_1,\ldots,g_k$. 
Then, of course, the diagram $\apd$ is induced by $g_1,\ldots,g_k$, and 
\begin{equation*}
    L\cap P_i= \{x\in L: g_i(x)=h(x)\}
\end{equation*}
consists of finitely many  connected components, all closed intervals and \emph{proper}, i.e., $1$-dimensional or possibly singletons. Let $\gamma_i$ denote their number. Then, we have the following bound.

\begin{lemma}\label{thm:intersection}
Suppose that no two of the restrictions $g_i|_L$ coincide. Then,
\begin{equation*}
    \sum_{i=1}^k \gamma_i \le 2k-1,
\end{equation*}
and this bound is the best possible.
\end{lemma}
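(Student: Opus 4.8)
The plan is to analyze the lower envelope $h = \min_i g_i|_L$ as a univariate function on the line $L$, parametrized by $t\in\R$ via $x(t) = c + tq$. Each $g_i|_L$ is a univariate quadratic polynomial in $t$ with positive leading coefficient $q^\top A_i q > 0$ (since $A_i$ is positive definite and $q\neq 0$), so the cells $L\cap P_i$ are exactly the level sets where $g_i|_L$ is the pointwise minimum. The key combinatorial fact I want is that $\sum_i \gamma_i$, the total number of connected components (maximal intervals) of the $L\cap P_i$, is at most $2k-1$. The standard tool here is the theory of lower envelopes of curves: two distinct parabolas with the same leading coefficient cross at most once, but in general two arbitrary parabolas cross at most twice, and the lower envelope of $k$ such functions has complexity $O(\lambda_{s}(k))$ where $s$ is the number of crossings. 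I would instead give the self-contained argument the authors evidently want.

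The approach I would carry out: traverse the line from $t=-\infty$ to $t=+\infty$ and count \emph{breakpoints}, i.e., values of $t$ where the index $i$ achieving the minimum changes. The number of maximal intervals $\sum_i \gamma_i$ equals (number of breakpoints) $+ 1$, so it suffices to show there are at most $2k-2$ breakpoints. At a breakpoint the envelope switches from being traced by some $g_i$ to some $g_\ell$, and at that point $g_i(t) = g_\ell(t)$ while to the left $g_i < g_\ell$ and to the right $g_\ell < g_i$; so each breakpoint corresponds to a transversal crossing of two of the parabolas where both are simultaneously minimal. Since each pair $\{g_i, g_\ell\}$ (with $g_i|_L \neq g_\ell|_L$, distinct quadratics) intersects in at most two points, there are at most $2\binom{k}{2} = k(k-1)$ candidate breakpoints total, which is far more than $2k-2$; the refinement I need is that along the lower envelope each unordered pair can contribute at most \emph{one} breakpoint. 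The cleanest way to see this: if $g_i$ and $g_\ell$ cross twice, say at $t_1 < t_2$, then between $t_1$ and $t_2$ one of them, say $g_i$, lies strictly below the other, and outside $[t_1,t_2]$ the roles are reversed — but because both parabolas open upward with possibly different curvatures, the function $g_i - g_\ell$ is itself a quadratic (or linear), so it changes sign at most twice and in the "sandwiched" regime $g_\ell$ is the larger one on both unbounded rays. The point is that on the lower envelope you can be "handed off" between the pair $\{i,\ell\}$ at most once because a second handoff would require $g_\ell$ (which is $\geq g_i$ on both unbounded rays and only dips below in the middle) to reappear as the global minimum after $g_i$ has superseded it, which a single quadratic sign pattern forbids once we also account for the other functions potentially covering the middle dip — here I would argue that if both crossings of $\{i,\ell\}$ were breakpoints of $h$, then on the bounded interval between them $g_i$ is below $g_\ell$, so $\ell$ cannot be the minimizer anywhere strictly inside, yet $\ell$ is the minimizer just to the right of the second crossing and $i$ just to its left — which is consistent with only one handoff $i\to\ell$ at the second crossing, and symmetrically the first crossing can only be a handoff $\ell\to i$; but a handoff $\ell\to i$ at the first crossing requires $\ell$ to be globally minimal just left of $t_1$, and then $i$ globally minimal on $(t_1,t_2)$, which is fine — so I actually get \emph{two} breakpoints from this pair in the worst case. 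This means the naive pairwise bound only gives $\le k(k-1)$, so the real work is elsewhere.

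Therefore the main obstacle, and where I'd focus, is getting the \emph{linear} bound $2k-1$ rather than quadratic: this is where the positive-definiteness (equal sign of leading coefficients is \emph{not} assumed — the $A_i$ differ) must interact with a global argument, probably via a Davenport–Schinzel style induction. The right induction: consider the function $g_{i_0}$ that is minimal as $t\to +\infty$ (the one with smallest $q^\top A_i q$, breaking ties by lower order terms); it occupies a terminal ray $(t^*, +\infty)$ of the envelope and, crucially, by convexity it can be the minimizer on \emph{at most one other} interval of $L$ — no wait, a single convex quadratic restricted to being below another convex quadratic can still alternate; the clean statement is that any single $g_i$, being convex, has $\{t : g_i(t) \le g_\ell(t)\}$ equal to a (possibly infinite) interval for each fixed $\ell$, hence $\{t : g_i = h\} = \bigcap_\ell \{g_i \le g_\ell\}$ is an \emph{intersection of intervals}, hence a single interval, so $\gamma_i \le 1$ for every $i$! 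That immediately gives $\sum_i \gamma_i \le k$, which is even stronger than claimed — so I must be misreading the setup: re-examining, the functions are $g_i(x) = \|x - s_i\|_{A_i}^2 + \sigma$ but the claimed bound $2k-1$ suggests the $\sigma$ carries an index $\sigma_i$ and, more importantly, that the envelope structure is the one from the diagram $\apd$ whose cells need not be convex because different norms are used. Indeed $\{g_i \le g_\ell\}$ on $L$ is $\{t : (q^\top A_i q - q^\top A_\ell q)t^2 + (\text{lin})t + (\text{const}) \le 0\}$, which is an interval only if the leading coefficient is nonnegative; when $q^\top A_i q < q^\top A_\ell q$ it is the \emph{complement} of an interval (outside of two roots). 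So $\{g_i = h\}$ is an intersection of intervals and co-intervals, which can indeed have up to — and here is the crux — I would prove by induction on $k$ that it has at most $\sum \gamma_i \le 2k-1$ components: remove the function $g_{i_0}$ with the largest leading coefficient $q^\top A_{i_0}q$; among the remaining $k-1$ functions the envelope $h'$ has $\le 2(k-1)-1 = 2k-3$ intervals by induction; then re-inserting $g_{i_0}$, since $\{g_{i_0} \le g_\ell\}$ is an interval for every remaining $\ell$ (as $g_{i_0}$ has the largest curvature), $g_{i_0}$ dips below $h'$ on a single interval, splitting at most one interval of $h'$ into three and thereby adding at most $2$ components, giving $2k-3+2 = 2k-1$. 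That induction — with the careful choice of which function to remove so that it interacts as a "true interval" with all the others — is the heart of the proof, and the tie-breaking / degenerate-root cases (tangencies, which are excluded by "no two $g_i|_L$ coincide" only partially) are the fiddly part. Finally, for tightness I would exhibit $k$ parabolas on a line, with $g_1$ having huge curvature so it dips below in the middle splitting the envelope of $g_2,\dots,g_k$ (arranged in a nested fashion to realize $2k-3$ pieces) into $2k-1$ pieces, i.e., an explicit construction where the induction is tight at every step.
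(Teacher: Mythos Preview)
Your final argument is correct and reaches the same bound, but the route differs from the paper's in one key step. Both proofs proceed by induction on $k$ and use the remove-one-function, add-it-back strategy, where the re-insertion of a function whose minimal set is a single interval can increase the component count by at most~$2$. The difference lies in how such a function is found.

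You identify it \emph{constructively}: pick $g_{i_0}$ with the largest leading coefficient $q^\top A_{i_0} q$ on $L$. Then for every other $\ell$ the difference $(g_{i_0}-g_\ell)|_L$ has nonnegative leading coefficient, so $\{g_{i_0}\le g_\ell\}$ is an interval (or half-line), and hence $\{g_{i_0}=h\}=\bigcap_\ell\{g_{i_0}\le g_\ell\}$ is a single interval, i.e.\ $\gamma_{i_0}\le 1$. The paper instead argues \emph{by contradiction}: assuming $\gamma_i\ge 2$ for all $i$, it selects the index $i^*$ whose first component lies rightmost (with a tiebreak on the second component) and exhibits an alternation $y_1^{(i)} < y_1^{(i^*)} < y_j^{(i)} < y_2^{(i^*)}$, forcing three sign changes of $(g_i-g_{i^*})|_L$ and contradicting that distinct quadratics cross at most twice.

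Your approach is shorter and more explicit for this specific setting, exploiting the total order on curvatures. The paper's approach is the classical Davenport--Schinzel $s=2$ argument and would apply verbatim to any family of continuous functions that pairwise cross at most twice, not only quadratics. Both yield the same tight bound; your ``nested parabolas'' sketch for tightness matches the paper's remark.

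One minor wording issue: when you re-insert $g_{i_0}$, the single interval it occupies does not split one interval ``into three''; it splits one interval of $h'$ into two pieces and contributes one new component of its own, for a net increase of at most~$2$. Your count is right, only the phrasing is off.
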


\begin{proof} 
The proof of the inequality is by induction with respect to $k$. As the assertion is trivial for $k_0=1$, suppose that it has been verified for some $k_0\in \N$, and let $k=k_0+1$.

Suppose, first, that there exists an index $i^*\in [k]$  such that $\gamma_{i^*}=1$. By the induction hypothesis, we have
$\sum_{i\in [k]\setminus \{i^*\}} \gamma_i \le 2(k-1)-1$, and adding $g_{i^*}$ back in, the corresponding interval can split at most one other interval by the assumption that no two of the restrictions $g_i|_L$ coincide. Hence, the assertion holds.

So, suppose in the following,  that $\gamma_{i}\ge 2$ for all $i\in [k]$. We will show that this leads to a contradiction, thereby proving the theorem.

For $i\in [k]$, let $y_j^{(i)}$ be a relative interior point in the $j$th component of $L\cap P_i$, ordered such that
$y_1^{(i)} < \ldots < y_{\gamma_i}^{(i)}$.
Further, let
\begin{equation*}
    i_0 \in M_1= \argmax \bigl\{y_1^{(i)}: i\in [k]\bigr\}.
\end{equation*}
If the index $i_0$ is unique, set $i^*=i_0$. Otherwise let $i^* \in M_2= \argmin \bigl\{y_2^{(i)}: i\in M_1\bigr\}$.
Note that $M_2$ still does not need to be a singleton, but the different choices do not affect the subsequent arguments.

Since $\gamma_{i^*}\ge 2$, there must be $i\in [k]\setminus \{i^*\}$ and $j \in [\gamma_i]$ such that
\begin{equation*}
    y_1^{(i^*)} < y_j^{(i)}  < y_2^{(i^*)} .
\end{equation*}
As $i^* \in M_1$, we know that $j\ge 2$ and $y_1^{(i)}  \le y_1^{(i^*)}$. Equality, however, is excluded by the choice of $i^* \in M_2$. Hence,
\begin{equation*}
    y_1^{(i)} < y_1^{(i^*)} < y_j^{(i)}  < y_2^{(i^*)}.
\end{equation*}
This implies that $g_i-g_{i^*}$ has at least three roots in $L$, i.e., $g_i|_L=g_{i^*}|_L$, which contradicts our general assumption.

The bound is tight for examples of suitably nested parabolas.
\end{proof}

As a consequence, we obtain the following result.

\begin{corollary} \label{th:generalized_structure-generic}
Suppose again that no two of the restrictions $g_i|_L$ coincide.  
Then, there is a dissection of $L$ into at most $2k-1$ proper intervals such that the relative interior of each interval is contained in exactly one of the cells of $\apd$.
\end{corollary}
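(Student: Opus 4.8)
The plan is to read the dissection directly off the restriction of the diagram $\apd$ to $L$ and to bound the number of its pieces by \cref{thm:intersection}. First I would record that $L=\bigcup_{i=1}^{k}(L\cap P_i)$, since at every $x\in L$ we have $h(x)=g_i(x)$ for some $i$, i.e. $x\in P_i$, and that the connected components of these $k$ sets are proper closed intervals, of which there are at most $\sum_{i=1}^{k}\gamma_i\le 2k-1$ by \cref{thm:intersection}. So $L$ is already covered by at most $2k-1$ proper intervals; the remaining work is to turn this cover into a dissection and to certify that each relative interior meets exactly one cell.

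Next I would show that distinct cells overlap in only finitely many points. Two cells $P_i,P_\ell$ with $i\ne\ell$ meet only in boundary points, and a point of $L$ lies in $P_i\cap P_\ell$ only where $g_i|_L$ and $g_\ell|_L$ agree; since these are two \emph{distinct} quadratic polynomials in the line parameter by hypothesis, they agree in at most two points. Hence the set $N\subseteq L$ of points lying in at least two cells is finite, and on each connected component $I$ of $L\setminus N$ the index minimizing $g_1,\dots,g_k$ is locally constant, hence constant, say equal to $i$; thus $I\subseteq\relint(L\cap P_i)$, the closure $\overline I$ is a proper interval, and $\relint(\overline I)=I$ lies in the single cell $P_i$ and in no other.

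I would then take the dissection to consist of the closures $\overline I$ of the components $I$ of $L\setminus N$: they cover $L$, consecutive ones meet only in a point of $N$, and each has its relative interior in exactly one cell, so it only remains to bound their number by $2k-1$. Each $\overline I$ sits inside a one-dimensional component of some $L\cap P_i$, and such a component $J$ is cut by the points of $N$ in its relative interior into one more piece than it contains of them. The key point is that every such interior point $p$ forces a singleton cell: near $p$ the lower envelope is realized by $g_i$, so for any second realizer $g_m$ of $h$ at $p$ the polynomial $g_m|_L-g_i|_L$ is nonzero, nonnegative near $p$, and vanishes at $p$, hence is an upward parabola with double root $p$, which gives $L\cap P_m=\{p\}$; and distinct such $p$ yield distinct singleton components. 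Charging each piece either to the one-dimensional component containing it or to the singleton it spawned, the total is at most $\sum_{i=1}^{k}\gamma_i\le 2k-1$.

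I expect this last step to be the main obstacle. Producing the $\le 2k-1$ components of the sets $L\cap P_i$ and ordering them along $L$ is soft topology; the care lies in showing that splitting the components at their interior multiple-realizer points neither destroys the property that each relative interior lies in exactly one cell nor pushes the count past $2k-1$. The hypothesis that no two of the restrictions $g_i|_L$ coincide is precisely what makes both work, by turning every extra split into a singleton cell that is already counted in $\sum_i\gamma_i$.
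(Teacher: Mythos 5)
Your proposal is correct and takes essentially the same route as the paper: cut $L$ at the finitely many points lying in more than one cell, observe that any such point interior to a one-dimensional component of $L\cap P_i$ forces a singleton component $L\cap P_m=\{p\}$ of some other cell, and charge the extra pieces to these singletons so that \cref{thm:intersection} bounds the total by $2k-1$. The only difference is presentational: you define the cut set $N$ globally and take closures of components of $L\setminus N$, whereas the paper iteratively splits the relative interiors of the proper components and counts the splits against $\sum_i\gamma_i^{(0)}$ -- the underlying decomposition and counting are the same.
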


\begin{proof}
For $i\in [k]$, let $\mathcal{I}_i=\mathcal{I}_i^{(0)}$ denote the set of the relative interiors of all proper intervals that occur as connected components of $L\cap P_i$, and set $\mathcal{I}=\mathcal{I}^{(0)}=\mathcal{I}_1 \cup \ldots \cup \mathcal{I}_k$. Since, by assumption, all univariate polynomials $g_i|_L$ are different, any two intervals in $\mathcal{I}$ are disjoint.

For $i\in [k]$, let $\gamma_i^{(0)}$ denote the number of singletons that occur as connected components of $L\cap P_i$, and set $\gamma_i^{(1)}=|\mathcal{I}_i|$. Then, of course,
\begin{equation*}
 \gamma_i=\gamma_i^{(0)}+\gamma_i^{(1)} \quad \forall i\in [k], \qquad \text{and}\qquad |\mathcal{I}|=\sum_{i=1}^k \gamma_i^{(1)}.
\end{equation*}
Starting with $\mathcal{I}^{(0)}$, we will now construct refinements $\mathcal{I}^{(\ell)}$ by successively splitting intervals that violate the assertion as long as such intervals exist.

So, let $i_0\in [k]$, let $I_0\in \mathcal{I}_{i_0}$, and suppose that there exists a point $y\in I_0$ that also belongs to some other cell $P_{i_1}$. 
Then, $y$ is the unique point of $P_{i_0}\cap P_{i_1}$ in $\relint (I_0)$ since, otherwise, $g_{i_0}|_L- g_{i_1} |_L$ would have at least three local extrema, contradicting the assumption that $g_{i_0}|_L\ne g_{i_1} |_L$. For the same reason, no other interval $I\in \mathcal{I}_{i_0}$ can contain a point of $P_{i_1}$ in its relative interior.

Now, we split $I_0$ at $y$ into two intervals and obtain a new set $\mathcal{I}^{(1)}$ by replacing $I^{(0)}$ in $\mathcal{I}_{i_0}$ by the corresponding two relatively open intervals $I_-^{(0)}$ and $I_+^{(0)}$. Note that neither $I_-^{(0)}$ nor $I_+^{(0)}$ contains points of $P_{i_1}$ anymore, and that $\{y\}$ is a connected component of $L\cap P_{i_1}$, which counted towards $\gamma_{i_1}^{(0)}$. 

We continue by successively splitting intervals that violate the assertion and finally obtain a set 
$\mathcal{I}^*$ of intervals that have the desired property.
The total number of necessary splits is bounded by $\sum_{i=1} \gamma_i^{(0)}$. Hence, with the aid of \cref{thm:intersection},
\begin{equation*}
    |\mathcal{I}^*|\le \sum_{i=1} \gamma_i^{(1)} + \sum_{i=1} \gamma_i^{(0)} = \sum_{i=1} \gamma_i \le 2k-1,
\end{equation*}
which is the asserted bound.
\end{proof}

As an immediate consequence, the corollary yields a structural result when $\coresetC$ and $\apd$ are compatible.  

\begin{corollary} \label{th:generalized_structure-generic-clustering}
Let $g_1|_L,\ldots,g_k|_L$ be all different, and let $\coresetC$ and $\apd$ be compatible. Then, there exists a partition of $L$ into at most $2k-1$ proper intervals such that all data points in the relative interior of each interval are fully assigned to the same cluster. 
\end{corollary}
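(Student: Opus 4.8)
The plan is to reuse the dissection of $L$ furnished by \cref{th:generalized_structure-generic} unchanged and to translate its purely geometric conclusion into the desired statement about $\coresetC$ using only the definition of compatibility. Since the hypothesis \enquote{$g_1|_L,\dots,g_k|_L$ all different} is precisely the hypothesis of \cref{th:generalized_structure-generic}, we obtain a partition of $L$ into $m\le 2k-1$ proper intervals $I_1,\dots,I_m$ such that, for each $r\in[m]$, the relative interior $\relint(I_r)$ is contained in exactly one cell of $\apd$; denote that cell by $P_{i(r)}$. I claim that $I_1,\dots,I_m$ is already the partition we want, so both the bound $2k-1$ and the fact that the intervals are proper are inherited, and nothing further has to be constructed or counted.

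The single step carrying all the content is the following implication. Fix $r\in[m]$ and a data point $\coresetx_j\in\relint(I_r)$, and suppose $\tilde{\xi}_{\ell j}>0$ for some $\ell\in[k]$; then $\coresetx_j\in\support(\coresetC_\ell)$. By compatibility of $\coresetC$ and $\apd$ we have $\support(\coresetC_\ell)\subseteq P_\ell\cap\coresetX$, and hence $\coresetx_j\in P_\ell$. On the other hand $\coresetx_j$ lies in $\relint(I_r)$, and no point of $\relint(I_r)$ lies in any cell other than $P_{i(r)}$; therefore $\ell=i(r)$. As $\ell$ was an arbitrary index with $\tilde{\xi}_{\ell j}>0$, this forces $\tilde{\xi}_{i(r)\,j}=1$, i.e.\ $\coresetx_j$ is fully assigned to the cluster $\coresetC_{i(r)}$. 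Since the index $i(r)$ does not depend on the chosen point, all data points in $\relint(I_r)$ are fully assigned to one and the same cluster, which is exactly the assertion.

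The only point requiring genuine care --- and the one I would spend effort on --- is ensuring that the conclusion of \cref{th:generalized_structure-generic} is used in the strong form in which its proof actually delivers it: namely that $\relint(I_r)$ is \emph{disjoint} from every cell $P_\ell$ with $\ell\ne i(r)$, and not merely contained in $P_{i(r)}$ up to a lower-dimensional overlap with other cells. This is exactly what the refinement procedure in the proof of \cref{th:generalized_structure-generic} achieves, since every point of $L$ belonging to two or more cells is split off and becomes an endpoint of the dissection; consequently no point of any $\relint(I_r)$ can lie in more than one cell. Once this is noted, the implication above applies verbatim, and the remainder is immediate from the definitions of compatibility and of the cells of an anisotropic diagram.
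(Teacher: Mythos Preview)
Your proposal is correct and matches the paper's own treatment: the paper states this corollary as an immediate consequence of \cref{th:generalized_structure-generic} without giving a separate proof, and your argument spells out exactly this implication via the definition of compatibility. Your care about reading \enquote{contained in exactly one cell} as \enquote{disjoint from every other cell} is well placed and is indeed what the refinement procedure in the proof of \cref{th:generalized_structure-generic} guarantees.
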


Note, however, that the \enquote{genericity assumption} that the $k$ univariate polynomials $g_1|_L,\ldots,g_2|_L$ are all different for each line $L$ can generally not be guaranteed, even if all multivariate polynomials $g_1,\ldots,g_k$ are different. For instance, when $\mathcal{A}=\mathcal{E}$, the cells of $\apd$ are polyhedra that might all have a line (or even some higher-dimensional affine subspace) in common. In this case, it may happen that all points of $\tilde{X}$ in the relative interior of a component $I$ of $L\cap P_i$ are fractionally assigned to $C_i$; see \Cref{fig:s_compatible_does_not_suffice} for an illustration.
\begin{figure}
    \centering
    \includegraphics{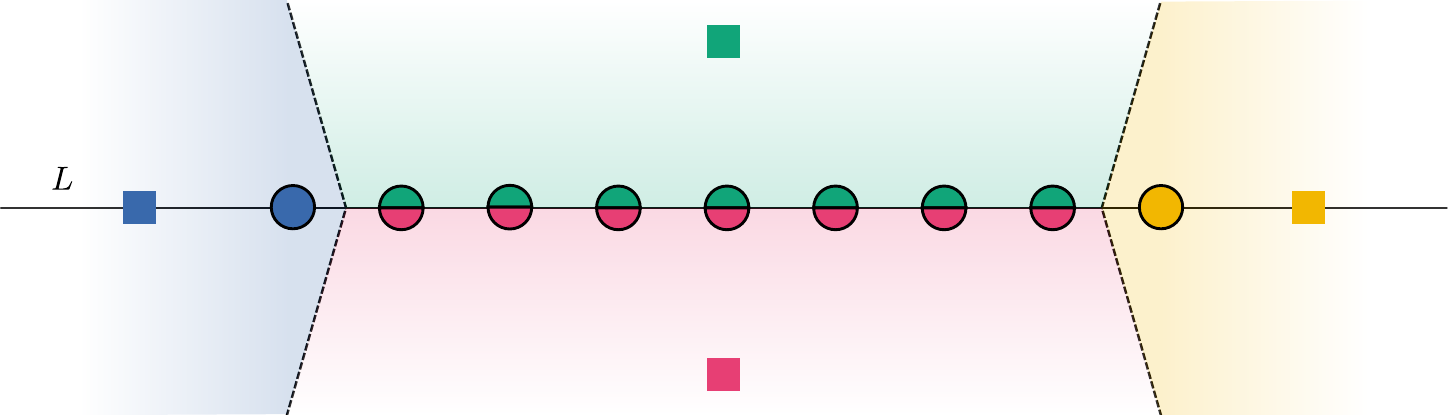}
    \caption{A clustering and diagram that are strongly compatible. Sites are depicted as squares. The clustering assigns all points on the boundary between the green and pink cluster with a proportion of $\nicefrac{1}{2} : \nicefrac{1}{2}$ to the two clusters. Consequently, a dissection of the horizontal line $L$ with the required properties of \cref{th:generalized_structure-generic-clustering} requires more than $2k-1 = 2\cdot4-1 = 7$ intervals.}
    \label{fig:s_compatible_does_not_suffice}
\end{figure}
In this case, the number of splitting operations will generally not be bounded by a function of $k$ only, and \cref{th:generalized_structure-generic-clustering} does not hold anymore.  

In fact, a clustering that admits a feasible or even strongly compatible diagram does not suffice to provide the bound required later, and we need the additional properties of strictly compatible diagrams to prove the following theorem. 

\begin{theorem} \label{th:generalized_structure}
	Let $\bar{C} \in \Cs_{\K}(k,\bar{X},\Omega)$ admit a strictly compatible diagram, and let $L$ be a line in $\Rbb^d$. Then, there is a partition of $L$ into at most $2k-1$ proper intervals such that all data points in the relative interior of each interval are fully assigned to the same cluster.
\end{theorem}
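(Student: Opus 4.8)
The plan is to reduce \cref{th:generalized_structure} to the generic case handled in \cref{th:generalized_structure-generic-clustering} (equivalently \cref{th:generalized_structure-generic}), where all the univariate restrictions $g_i|_L$ are distinct. The subtlety, as the figure illustrates, is that for a line $L$ lying in the common intersection of several cells, many restrictions $g_i|_L$ may coincide, so \cref{thm:intersection} no longer applies directly. The key observation is that the genericity assumption can fail only for the \emph{restrictions along $L$}, and that distinct sites $t_i, t_\ell$ with $g_i|_L = g_\ell|_L$ force the intersection $P_i \cap P_\ell$ to contain the entire line $L$ — but strict compatibility says $(P_i \cap P_\ell) \cap \bar X$ contains at most one point. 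So even though the combinatorial complexity of the lower envelope is unbounded, the number of \emph{data points} of $\bar X$ that are affected is tightly controlled.

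First I would group the indices $[k]$ into equivalence classes under the relation $i \sim \ell \iff g_i|_L = g_\ell|_L$, say with classes $K_1, \dots, K_m$ and representatives $i_1, \dots, i_m$; note $m \le k$. On $L$, the lower envelope $h|_L$ equals $\min_{t \in [m]} g_{i_t}|_L$, and since the $m$ representatives have pairwise distinct restrictions, \cref{th:generalized_structure-generic} applied to the reduced family $g_{i_1}|_L, \dots, g_{i_m}|_L$ (with the corresponding cells $\tilde P_t = \{x \in L : g_{i_t}(x) = h(x)\}$) gives a dissection of $L$ into at most $2m-1 \le 2k-1$ proper intervals, each relative interior contained in exactly one $\tilde P_t$. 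This is the candidate partition.

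Next I would argue that this partition already has the required property for the data points of $\bar X$. Fix an interval $I$ of the partition whose relative interior lies in $\tilde P_t$, and let $\bar x_j \in \bar X$ lie in $\relint(I)$. Then $\bar x_j$ belongs to $P_i$ exactly for those $i \in K_t$ (up to boundary cases, which I'd need to handle: $\bar x_j$ might lie on the boundary between $\tilde P_t$ and a neighboring $\tilde P_{t'}$, but since $\bar x_j$ is in the relative \emph{interior} of $I$ and the partition cells cover $L$ with disjoint relative interiors, $\bar x_j$ lies in exactly one $\tilde P_t$). By strong compatibility of $\bar C$ with $\apd$, the support of each cluster $C_i$ is exactly $P_i \cap \bar X$, so $\bar x_j$ can only be assigned to clusters $C_i$ with $i \in K_t$. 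Now if $|K_t| \ge 2$, pick any two distinct $i, \ell \in K_t$; then $g_i|_L = g_\ell|_L$ forces $L \subseteq P_i \cap P_\ell$, hence $\bar x_j \in (P_i \cap P_\ell) \cap \bar X$. By strict compatibility this set has at most one element — so there is at most one such data point $\bar x_j$ on the \emph{entire} line $L$ for each pair, and in fact the set $\{\bar x \in \bar X \cap L : \bar x \in P_i \cap P_\ell \text{ for some } i \ne \ell \text{ in the same class}\}$ is a union over pairs of one-element sets. I would then absorb these finitely many exceptional points by further splitting: each such point, when placed at the boundary of a new pair of intervals, removes the ambiguity, at the cost of one extra interval per point. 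But this threatens to inflate the count beyond $2k-1$, so the cleaner route is: observe that if $|K_t| \ge 2$ and some data point $\bar x_j \in \relint(I) \cap \bar X$ is fractionally assigned across $C_i, C_\ell$ with $i,\ell \in K_t$, we can simply \emph{relabel} — since $g_i|_L = g_\ell|_L$ the cells $P_i, P_\ell$ are indistinguishable along $L$, and collapsing the fractional assignment of $\bar x_j$ entirely onto $C_i$ keeps feasibility of the diagram-compatibility but not necessarily the weight constraints.

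The main obstacle is precisely this last point: a naive collapse of fractional assignments breaks the weight constraints $\Kappa$, and naive interval-splitting breaks the $2k-1$ bound. I expect the intended argument is subtler — likely one shows that a point in $(P_i\cap P_\ell)\cap\bar X$ with $i\sim\ell$ can be treated as a single extra singleton split but that strict compatibility, together with the bound $\sum\gamma_i \le 2k-1$ counting \emph{all} components (singletons included), already budgets for it, so that the total interval count including these exceptional singletons stays at $2k-1$. Concretely, I would redo the splitting argument from the proof of \cref{th:generalized_structure-generic} but now with the reduced family of $m$ representatives: each exceptional data point of $\bar X$ lying in the relative interior of an interval and assigned fractionally corresponds to a singleton component $\{y\}$ of some $L \cap P_{i_t}$ in the \emph{reduced} diagram (because a fractionally-assigned point sits on a cell boundary of the reduced diagram), and the accounting $|\mathcal I^*| \le \sum_t \gamma_{i_t}^{(1)} + \sum_t \gamma_{i_t}^{(0)} = \sum_t \gamma_{i_t} \le 2m-1 \le 2k-1$ closes the argument. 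I would double-check the edge case where a fractionally assigned point lies in the relative interior of a component of the reduced diagram \emph{and} is genuinely split across two indices of the same class $K_t$ — here strict compatibility ($|(P_i\cap P_\ell)\cap\bar X|\le 1$) caps the number of such points on $L$, and each contributes at most one additional split, which I'd fold into the $\gamma_{i_t}^{(0)}$ count since such a point becomes a singleton cell boundary after the relabeling. This bookkeeping is the delicate part and the place where the proof must be written carefully.
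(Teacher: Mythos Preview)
Your skeleton matches the paper's approach: group the indices into equivalence classes $K_1,\dots,K_m$ under $g_i|_L=g_\ell|_L$, pick representatives, apply \cref{th:generalized_structure-generic} to the $m$ distinct restrictions to obtain at most $2m-1$ intervals, and then use strict compatibility to handle the data points that may still be fractionally assigned within a class. The gap is in your final counting.

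You overestimate the number of exceptional data points by indexing them over pairs (``a union over pairs of one-element sets''), which makes you fear the $2k-1$ budget will be exceeded and sends you toward the two detours. But fix any class $K_t$ with $|K_t|\ge 2$ and any $i\ne\ell$ in $K_t$. From $g_i|_L=g_\ell|_L$ one does \emph{not} get $L\subseteq P_i\cap P_\ell$ (this is false), but one does get $L\cap P_i=L\cap P_\ell$, hence $\tilde P_t\cap\bar X\subseteq (P_i\cap P_\ell)\cap\bar X$, which by strict compatibility has at most one element. Thus the entire merged cell $\tilde P_t$ contains at most one data point of $\bar X$, so each non-trivial class contributes at most one exceptional point and at most one extra split. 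The paper phrases the same count slightly differently: one split per index $i\in N^-=[k]\setminus\{\text{representatives}\}$, giving at most $k-m$ extra intervals. Either way the total is $(2m-1)+(k-m)=k+m-1\le 2k-1$, and you are done.

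Both workarounds you propose should be dropped. Collapsing the fractional assignment onto a single cluster indeed violates the weight constraints, as you note. Folding the exceptional points into the singleton count $\gamma^{(0)}$ of the reduced diagram is incorrect: these points lie in the \emph{relative interior} of a proper component of the reduced diagram, not at a singleton component, so they are not accounted for in $\sum_t\gamma_{i_t}^{(0)}$. The straightforward extra-split count above is all that is needed.
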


\begin{proof}
Let $\apd=\apd(T,\Sigma,\mathcal{A})=\{P_1,\ldots,P_k\}$ be a diagram such that $\bar{C}$ and $\apd$ are strictly compatible. Suppose that exactly $k'$ of the univariate polynomials $g_i|_L$ are different, let $N^+\subset [k]$ be the set of the corresponding indices, and set $N^-=[k]\setminus N^+$. 

If $k'=k$, the assertion follows from \cref{th:generalized_structure-generic-clustering}. So, suppose that $k' < k$. By \cref{th:generalized_structure-generic-clustering}, we obtain a partition $\mathcal{I}$ of $L$ into at most $2k'-1$ proper intervals with the following property: If $I \in \mathcal{I}$ and $\tilde{x} \in \tilde{X} \cap \relint(I)$, then $\tilde{x}\not\in P_i$ for any $i\in N^+$.

Since $\bar{C}$ and $\apd$ are strictly compatible, $P_i \cap P_\ell$ contains at most one point for $i\ne \ell$. Since, for each $i\in N^-$, each $g_i|L$ coincides with one of the $g_\ell|L$ with $\ell \in N^+$, splitting intervals for each such points will result in an additional number of at most $|N^-|$ intervals in order to guarantee the desired property.
Hence we obtain a total of at most
    \begin{equation*}
    2 |N^+|-1 + |N^-| = 2k'-1 + k- k' = k+ k'-1 \le 2k-1
    \end{equation*}
intervals that have the required properties.
\end{proof}

\subsection{Proof of the coreset properties} \label{ssec:coreset-properties}
We will now prove various lemmas that will subsequently be used to show that $(\coresetX,\coresetOmega)=(\batchingX,\batchingOmega)$ is a coreset for $(k,X,\Omega, \Acc,\Kappa)$, thereby proving \cref{th:smaller_coreset}. Note that, by \cref{le:projecting_is_a_coreset} and \cref{le:coresets_give_coresets}, it suffices to show that $(\batchingX,\batchingOmega)$ is a coreset for $(k,\bar{X},\Omega, \Acc,\Kappa)$.

We begin with a simple observation that relates the batch errors in the anisotropic case to those in the Euclidean case.

\begin{remark}\label{le:euclidean-anisotrop} 
Let $\batches'\subset \batches$. Then,
\begin{equation*}
    \lambdamin \cdot \sum_{B \in \batches'} \error_{E}(B) \le 
    \sum_{i=1}^{k} \smashoperator[r]{\sum_{B \in \batches'} } \xi_{iB}  \error_{A_i}(B)
    \le \lambdamax \cdot \sum_{B \in \batches'} \error_{E}(B).
\end{equation*}
\end{remark}

\begin{proof}
By \cref{rem:norm_equivalence} and since $\sum_{i=1}^{k} \xi_{iB}=1$ for each $B \in \batches'$, we obtain 
	\begin{equation*}
	    \sum_{i=1}^{k}\sum_{B \in \batches'} \xi_{iB} \error_{A_i}(B)
	    \le \lambdamax \sum_{i=1}^{k}\sum_{B \in \batches'} \xi_{iB}  \error_{E}(B)
	   = \lambdamax \sum_{B \in \batches'} \error_{E}(B)
	\end{equation*}
and	
	\begin{equation*}
 \sum_{i=1}^{k} \smashoperator[r]{\sum_{B \in \batches'} } \xi_{iB}  \error_{A_i}(B) \ge \lambdamin \sum_{i=1}^{k} \smashoperator[r]{\sum_{B \in \batches'} } \xi_{iB}  \error_{E}(B) = \lambdamin \smashoperator[r]{\sum_{B \in \batches'} }  \error_{E}(B).
\end{equation*}
\end{proof}

The following lemma essentially handles condition \cref{def:property_a} of \cref{def:coreset} for each set $S$ of sites.

\begin{lemma} \label{le:proof_property_a} 
Let $\batchingC \in \Cs_\Kappa(k,\batchingX,\batchingOmega)$. Then,
    \begin{align*}
	\cost_\AK(\bar{X},\extension_2(\batchingC),S) \le \cost_\AK(\batchingX,\batchingC,S) + \lambdamax \cdot \sum_{B \in \batches} \error_{E}(B).
	\end{align*}
\end{lemma}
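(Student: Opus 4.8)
The plan is to compare the cost on the merged set $(\batchingX,\batchingOmega)$ with the cost on $(\bar X,\Omega)$ under the extension $\extension_2$, which re-splits each batch point $x_B$ back to its original constituents $\bar x_j\in B$ with the identical assignment fractions $\tilde\xi_{iB}$. Since $\extension_2$ preserves assignments batch-wise, for a fixed site $s_i$ the contribution of batch $B$ on the $\bar X$-side is $\sum_{\bar x\in B}\omega_{\bar x}\,\xi_{iB}\norm{\bar x - s_i}_{A_i}^2$, while on the $\batchingX$-side it is $\omega_B\,\xi_{iB}\norm{x_B - s_i}_{A_i}^2$. The natural tool is \cref{le:center_replacement}\ref{le:assertion_a}, applied to the weighted data set $B$ (with its weights $\omega_{\bar x}$), centroid $x_B$, matrix $A_i$, and site $s_i$: it gives
\[
\sum_{\bar x\in B}\omega_{\bar x}\norm{\bar x - s_i}_{A_i}^2 = \omega_B\norm{x_B - s_i}_{A_i}^2 + \error_{A_i}(B).
\]

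First I would write $\cost_\AK(\bar X,\extension_2(\batchingC),S)$ as a double sum over $i\in[k]$ and over batches $B\in\batches$, grouping the inner sum over $\bar x_j\in B$; this uses that $\{B : B\in\batches\}$ partitions $\bar X$ and that $\bar\xi_{ij}=\tilde\xi_{iB}$ is constant on each batch. Then I would apply the displayed identity to each pair $(i,B)$, multiply by $\xi_{iB}=\tilde\xi_{iB}$, and sum. The first resulting term reassembles exactly into $\cost_\AK(\batchingX,\batchingC,S)=\sum_i\sum_B \tilde\xi_{iB}\,\omega_B\norm{x_B-s_i}_{A_i}^2$ (here using that $\tilde\omega_B=\omega_B$ and $x_{B_{\tilde\jmath}}=x_{\tilde\jmath}$ from the construction). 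The leftover term is $\sum_{i=1}^k\sum_{B\in\batches}\tilde\xi_{iB}\,\error_{A_i}(B)$.

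Finally I would bound the leftover term using \cref{le:euclidean-anisotrop} with $\batches'=\batches$, which gives $\sum_{i=1}^k\sum_{B\in\batches}\xi_{iB}\,\error_{A_i}(B)\le \lambdamax\sum_{B\in\batches}\error_E(B)$, yielding the claimed inequality. I do not expect a genuine obstacle here; the only points requiring a little care are (i) confirming that $\extension_2$ really does keep the assignment fractions constant within each batch (so that the decomposition lemma applies cluster by cluster with a single coefficient $\tilde\xi_{iB}$), and (ii) making sure the weights match up, i.e. $\omega_B=\sum_{\bar x_j\in B}\omega_j=\tilde\omega_{\tilde\jmath}$, so that the first term after applying \cref{le:center_replacement}\ref{le:assertion_a} is literally $\cost_\AK(\batchingX,\batchingC,S)$ and not merely comparable to it. Both follow directly from the definitions in \cref{ssec:coreset-construction}.
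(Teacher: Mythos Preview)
Your proposal is correct and follows essentially the same approach as the paper: apply \cref{le:center_replacement}\ref{le:assertion_a} batch-wise to obtain the exact identity $\cost_\AK(\bar X,\extension_2(\batchingC),S)=\cost_\AK(\batchingX,\batchingC,S)+\sum_{i}\sum_{B}\xi_{iB}\,\error_{A_i}(B)$, then bound the error term via \cref{le:euclidean-anisotrop}. The only cosmetic difference is that the paper starts from $\cost_\AK(\batchingX,\batchingC,S)$ and works toward $\cost_\AK(\bar X,\extension_2(\batchingC),S)$, whereas you go in the opposite direction.
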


\begin{proof}
First note that, for each $B\in \batches$ we have by \cref{le:center_replacement} \ref{le:assertion_a} (with $X_0=B$, $c_0=c_B$)
\begin{equation*}
  \omega_{B} \norm{x_{B} - s_i}_{A_i}^{2}=  \sum_{\bar{x}_{j} \in B} \omega_{j} \norm{\bar{x}_{j}-s_i}_{A_i}^{2} - \error_{A_i}(B).
\end{equation*}
Therefore,
	\begin{align*}
	\cost_\AK(\batchingX,\batchingC,S) & = \sum_{i=1}^{k} \sum_{B\in \batches} \xi_{iB} \omega_{B} \norm{x_{B} - s_i}_{A_i}^{2}\\
	 & = \sum_{i=1}^{k} \sum_{B \in \batches}  \Biggl( \sum_{\bar{x}_{j} \in B} \xi_{iB} \omega_{j} \norm{\bar{x}_{j}-s_i}_{A_i}^{2} - \xi_{iB}\error_{A_i}(B) \Biggr).
	\end{align*}
Since $\bar{\xi}_{ij}=\xi_{iB}$ for each $\bar{x}_j\in B$ we have
	\begin{align*}
	\sum_{i=1}^{k} \sum_{B \in \batches} \sum_{\bar{x}_{j} \in B} & \xi_{iB} \omega_{j} \norm{\bar{x}_{j}-s_i}_{A_i}^{2} 
	 = \sum_{i=1}^{k} \sum_{B \in \batches}  \sum_{\bar{x}_{j} \in B} \bar{\xi}_{ij} \omega_{j} \norm{x_{j}-s_i}_{A_i}^{2} \\
	& =\sum_{i=1}^{k} \sum_{j=1}^{n}   \bar{\xi}_{ij} \omega_{j} \norm{\bar{x}_{j}-s_i}_{A_i}^{2} = \cost_\AK(X,\extension_2(\batchingC),S).
	\end{align*}
Hence,
	\begin{align*}
	\cost_\AK(\batchingX,\batchingC,S) = \cost_\AK(X,\extension_2(\batchingC),S) - \sum_{i=1}^{k}\sum_{B \in \batches} \xi_{iB} \error_{A_i}(B),
	\end{align*}
and, by \cref{le:euclidean-anisotrop}, this yields
	\begin{equation*} 
	\cost_\AK(X,\extension_2(\batchingC),S) \le \cost_\AK(\batchingX,\batchingC,S) + \lambdamax \sum_{B \in \batches}  \error_{E}(B),
	\end{equation*}
which is the asserted inequality.
\end{proof}

In order to verify that $(\batchingX,\batchingOmega)$ is a coreset for $(k,\bar{X},\Omega, \Acc,\Kappa)$, we address condition \cref{def:property_b} of \cref{def:coreset}. With the aid of \cref{le:restriction_to_diagrams}, let $ \bar{C}^* \in \Cs_{\K}(k, \bar{X}, \Omega) $ such that it attains $\cost_\AK(\bar{X},S)$ and admits a strictly compatible anisotropic diagram $\apd=\apd(T,\Sigma,\mathcal{A})$.  
We will show that, for $\batchingC = p(\bar{C}^*)$,  
\begin{equation}\label{eq:to_show_for_b}
    \cost_\AK(X_\batches,\batchingC,S) + \Delta^{-} \le (1+\epsilon) \cost_\AK(\bar{X},\bar{C}^*,S).
\end{equation}
as this yields the inequality
\begin{align*}
\cost_\AK(X_\batches,S)+ \Delta^{-} &\le     \cost_\AK(X_\batches,\batchingC,S) + \Delta^{-} \\
& \le (1+\epsilon) \cost_\AK(\bar{X},\bar{C}^*,S)
= (1+\epsilon) \cost_\AK(\bar{X},S),
\end{align*}
needed for condition \cref{def:property_b}.
The next two lemmas will prepare the ground to verify the former inequality.

The clustering $\bar{C}^*$ partitions $\batches$ into the set $\batches^{+}$ of those batches $B$ whose points are all assigned to the same cluster, and its complement $\batches^{-}=\batches \setminus \batches^{+}$. Similarly, $(\bar{X},\Omega)$ and $(X_\batches,\Omega_\batches)$ are split into $(\bar{X}|_\pm,\Omega|_\pm)$ and $(X_\batches|_\pm,\Omega_\batches|_\pm)$, and the clusterings $\bar{C}^*$ and $\batchingC$ are split into their restrictions $\bar{C}^*|_\pm$ and $\batchingC |_\pm$ of $(\bar{X}|_\pm,\Omega|_\pm)$ and $(X_\batches|_\pm,\Omega_\batches|_\pm)$, respectively. As the clustering cost is additive in the contributions of each point, we can consider the costs
\begin{align*}
\cost_\AK(\bar{X}|_{\pm},\bar{C}^*|_{\pm},S) &= \sum_{i=1}^{k} \sum_{B \in \batches^{\pm}} \smashoperator[r]{\sum_{\bar{x}_{j}\in B}} \bar{\xi}_{ij} \omega_{j} \norm{\bar{x}_{j}-s_{i}}_{A_i}^{2},\\
\cost_\AK(\batchingX|_\pm,\batchingC |_\pm,S) &= \sum_{i=1}^{k} \smashoperator[r]{\sum_{B \in \batches^{\pm}}}  \xi_{iB} \omega_{B} \norm{x_{B}-s_{i}}_{A_i}^{2}
\end{align*}
incurred by points in batches from $\batches^{+}$ and $\batches^{-}$ separately. (Let us point out that the subscript $\Kappa$ is used here to indicate that we are still in the constrained case. The bounds in $\Kappa$, however, only apply to the clusterings on the unsplit data sets.)

Note that, for each $B\in \batches^{+}$,
\begin{equation*}
    \xi_{iB}= \bar{\xi}_{ij} = 
    \begin{cases}
    1 & \text{for $\bar{x}_j\in B$};\\
    0 & \text{for $\bar{x}_j\not\in B$}.
    \end{cases}
\end{equation*}
This property is used in the next lemma to show that batches in $\batches^{+}$ behave nicely. 

\begin{lemma} \label{le:proof_property_b_good}
\begin{equation*}
    \cost_\AK(\batchingX|_+,C_\batches |_+,S) + \lambdamin \smashoperator[r]{\sum_{B \in \batches^{+}} } \error_{E}(B) \le \cost_\AK(\bar{X}|_{+},\bar{C}^*|_{+}, S).
\end{equation*}
\end{lemma}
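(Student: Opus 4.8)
The plan is to reduce the claimed inequality to the decomposition lemma \cref{le:center_replacement} \ref{le:assertion_a} applied batchwise, and then invoke \cref{le:euclidean-anisotrop} to pass from the anisotropic batch errors to the Euclidean ones. The key observation is that on $\batches^{+}$ each batch $B$ lies entirely in a single cluster, so the coefficients $\xi_{iB}=\bar\xi_{ij}$ are $0$--$1$; hence summing the clustering cost over batches in $\batches^{+}$ really is an ordinary (unweighted-fraction) sum of squared distances, and the centroid-replacement identity applies cleanly.

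Concretely, first I would fix $B\in\batches^{+}$ and let $i=i(B)$ be the unique cluster containing $B$. Applying \cref{le:center_replacement} \ref{le:assertion_a} with $X_0=B$, $c_0=x_B$, $\omega(X_0)=\omega_B$, and $A=A_i$ gives
\[
\omega_B\norm{x_B-s_i}_{A_i}^2 = \sum_{\bar x_j\in B}\omega_j\norm{\bar x_j-s_i}_{A_i}^2 - \error_{A_i}(B),
\]
so that $\omega_B\norm{x_B-s_i}_{A_i}^2 + \error_{A_i}(B) = \sum_{\bar x_j\in B}\omega_j\norm{\bar x_j-s_i}_{A_i}^2$. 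Summing this identity over all $B\in\batches^{+}$ (and recalling that for $B\in\batches^{+}$ the only nonzero term in the $i$-sum is the one with $i=i(B)$, where $\xi_{iB}=1$) yields
\[
\cost_\AK(\batchingX|_+,C_\batches|_+,S) + \sum_{i=1}^{k}\sum_{B\in\batches^{+}}\xi_{iB}\error_{A_i}(B)
= \cost_\AK(\bar X|_+,\bar C^*|_+,S).
\]
Then I would apply the left inequality of \cref{le:euclidean-anisotrop} with $\batches'=\batches^{+}$, which bounds $\sum_i\sum_{B\in\batches^{+}}\xi_{iB}\error_{A_i}(B)$ from below by $\lambdamin\sum_{B\in\batches^{+}}\error_E(B)$. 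Substituting this lower bound into the identity above gives exactly the asserted inequality.

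I do not expect a genuine obstacle here; this lemma is essentially bookkeeping. The only point requiring a little care is making the $0$--$1$ structure of $\xi_{iB}$ on $\batches^{+}$ explicit so that the sum $\sum_i\sum_{B\in\batches^{+}}\xi_{iB}\omega_B\norm{x_B-s_i}_{A_i}^2$ collapses to $\sum_{B\in\batches^{+}}\omega_B\norm{x_B-s_{i(B)}}_{A_i}^2$, matching the terms produced by the centroid-replacement identity; and similarly that $\bar\xi_{ij}=\xi_{iB}$ for $\bar x_j\in B$ (from the definition of the merging function and extension $\extension_2$) so that the right-hand side really is $\cost_\AK(\bar X|_+,\bar C^*|_+,S)$. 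Everything else is a direct application of the two cited results.
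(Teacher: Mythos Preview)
Your proposal is correct and follows essentially the same argument as the paper: apply \cref{le:center_replacement}\ref{le:assertion_a} batchwise (using that on $\batches^{+}$ the coefficients $\xi_{iB}=\bar\xi_{ij}$ are $0$--$1$) to obtain the identity $\cost_\AK(\bar X|_+,\bar C^*|_+,S)=\cost_\AK(\batchingX|_+,\batchingC|_+,S)+\sum_i\sum_{B\in\batches^{+}}\xi_{iB}\error_{A_i}(B)$, and then invoke the lower bound in \cref{le:euclidean-anisotrop}. One small clarification: the equality $\bar\xi_{ij}=\xi_{iB}$ for $\bar x_j\in B\in\batches^{+}$ comes from the definition of $\batchingC=p(\bar C^*)$ together with the fact that all $\bar\xi_{ij}$ with $\bar x_j\in B$ coincide, rather than from the extension $\extension_2$.
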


\begin{proof}
    Since all points of each batch $B \in \batches^{+}$ have been assigned to the same cluster, we see with the aid of \cref{le:center_replacement} \ref{le:assertion_a} (with $X_0=B$, $c_0=x_B$) that
\begin{align*}
	\cost_\AK &(X|_{+},\bar{C}^*|_{+}, S) = \sum_{i=1}^{k} \sum_{B \in \batches^{+}} \smashoperator[r]{\sum_{\bar{x}_{j}\in B}} \bar{\xi}_{ij} \omega_{j} \norm{\bar{x}_{j}-s_i}_{A_i}^{2}\\
	&= \sum_{i=1}^{k} \smashoperator[r]{\sum_{B \in \batches^{+}}} \xi_{iB}  \smashoperator[r]{\sum_{\bar{x}_{j}\in B}} \omega_{j} \norm{\bar{x}_{j}-s_i}_{A_i}^{2} 
	= \sum_{i=1}^{k} \smashoperator[r]{\sum_{B \in \batches^{+}}} \xi_{iB} \left( \omega_{B} \norm{x_{B}-s_i}_{A_i}^{2} + \error_{A_i}(B) \right)\\
	&= \cost_\AK(\batchingX|_+,\batchingC |_+,S) + \sum_{i=1}^{k} \smashoperator[r]{\sum_{B \in \batches^{+}} } \xi_{iB}  \error_{A_i}(B).  
	\end{align*} 
Now, the asserted inequality follows with the aid of \cref{le:euclidean-anisotrop}.
\end{proof}

While the proof of \cref{le:proof_property_b_good} did not make any use of the special choice of the optimal clustering, the next lemma, which deals with $\batches^{-}$, relies heavily on the property that $\bar{C}^*$ admits a strictly compatible diagram. (Note that \cref{le:proof_property_b_bad} can also be used to fix a flaw in the proof \cite{Har-Peled2007}[Theorem 3.7].)

\begin{lemma}  \label{le:proof_property_b_bad}
\begin{align*}
     \cost_\AK(\batchingX|_-,& \batchingC|_-,  S)  +  	\lambdamin \sum_{B \in \batches^{-}} \error_{E}(B) \\ 
     &\le \cost_\AK(\bar{X}|_-,\bar{C}^*|_-,S) +  \epsilon \cost_\AK(\bar{X},\bar{C}^*,S).
\end{align*}
\end{lemma}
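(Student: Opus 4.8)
The plan is to exploit that $\batchingC=p(\bar C^*)$, so that on the bad batches the coreset clustering is \emph{exactly} $\bar C^*|_-$ with every point $\bar x_j$ displaced to its batch centroid $x_B$. Indeed, since $\xi_{iB}=\frac{1}{\omega_B}\sum_{\bar x_j\in B}\bar\xi_{ij}\omega_j$, one has $\xi_{iB}\omega_B\norm{x_B-s_i}_{A_i}^{2}=\sum_{\bar x_j\in B}\bar\xi_{ij}\omega_j\norm{x_B-s_i}_{A_i}^{2}$, and summing over $i$ and $B\in\batches^{-}$ gives
\[
\cost_\AK(\batchingX|_-,\batchingC|_-,S)=\sum_{i=1}^{k}\ \sum_{B\in\batches^{-}}\ \sum_{\bar x_j\in B}\bar\xi_{ij}\,\omega_j\,\norm{x_B-s_i}_{A_i}^{2}.
\]
The excess over $\cost_\AK(\bar X|_-,\bar C^*|_-,S)$ is therefore a pure point-movement cost, and I would bound it by the same decomposition $\norm{x_B-s_i}_{A_i}^{2}=\norm{x_B-\bar x_j}_{A_i}^{2}+2(x_B-\bar x_j)^{\top}A_i(\bar x_j-s_i)+\norm{\bar x_j-s_i}_{A_i}^{2}$ and Cauchy--Schwarz estimate of the cross term used in the proof of \cref{le:technical_bound} (which invokes no weight constraints), obtaining
\[
\cost_\AK(\batchingX|_-,\batchingC|_-,S)\le\cost_\AK(\bar X|_-,\bar C^*|_-,S)+D^{-}+2\sqrt{D^{-}\cdot\cost_\AK(\bar X,\bar C^*,S)},
\]
where $D^{-}:=\sum_{i}\sum_{B\in\batches^{-}}\sum_{\bar x_j\in B}\bar\xi_{ij}\omega_j\norm{x_B-\bar x_j}_{A_i}^{2}$ and I enlarged $\cost_\AK(\bar X|_-,\bar C^*|_-,S)$ to $\cost_\AK(\bar X,\bar C^*,S)$.

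Next, set $E:=\sum_{B\in\batches^{-}}\error_{E}(B)$ and $K:=\cost_\AK(\bar X,\bar C^*,S)$. From the eigenvalue bound $\norm{x_B-\bar x_j}_{A_i}^{2}\le\lambdamax\norm{x_B-\bar x_j}_{2}^{2}$ (\cref{rem:norm_equivalence}) and $\sum_i\bar\xi_{ij}=1$ we get $D^{-}\le\lambdamax E$, so the lemma reduces to proving
\[
(\lambdamax+\lambdamin)\,E+2\sqrt{\lambdamax\,E\,K}\ \le\ \epsilon\,K .
\]
The crux is that $E$ is tiny, and this is exactly where strict compatibility is used. By \cref{th:generalized_structure}, applied to $\bar C^*$ and the strictly compatible diagram it admits, every line $L$ meeting $\bar X$ is cut into at most $2k-1$ proper intervals on whose relative interiors $\bar C^*$ is monochromatic; because the batches on $L$ partition $\bar X\cap L$ into consecutive blocks, each batch of $\batches^{-}$ must contain one of the at most $2k-2$ cluster transitions on $L$, and each transition lies in the block of at most one batch. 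Hence $|\batches^{-}|\le(2k-1)\,|\mathcal{L}|$, and since every batch satisfies $\error_{E}(B)\le\error_\identity=\frac{\epsilon^{2}}{32\alpha k\lambdamax|\mathcal{L}|}\ALG_{(\alpha,\beta)}(X)$, this yields $E\le\frac{\epsilon^{2}}{16\alpha\lambdamax}\ALG_{(\alpha,\beta)}(X)$.

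It remains to feed this back in. Combining $\ALG_{(\alpha,\beta)}(X)\le\alpha\OPT(X)$, the chain $\OPT(X)\le\lambdamin^{-1}\OPT_\Acc(X)\le\lambdamin^{-1}\OPT_\AK(X)\le\lambdamin^{-1}\cost_\AK(X,S)$, and the fact that $(\bar X,\Omega)$ is a linear $\epsilon$-coreset (\cref{le:projecting_is_a_coreset}), so that the identity extension in condition \cref{def:property_a} of \cref{def:coreset} forces $\cost_\AK(X,S)\le(1-\epsilon)^{-1}\cost_\AK(\bar X,S)\le 2K$, one obtains $\lambdamax E\le c\,\epsilon^{2}K$ with a small absolute constant $c$ --- precisely what the numerical constants in the definition of $\error_\identity$ (and a harmless common scaling of the matrices in $\Acc$) are tuned to deliver. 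The displayed inequality then follows from $\lambdamin E\le\lambdamax E$ and $2\sqrt{\lambdamax E\,K}\le 2\sqrt{c}\,\epsilon K$, and adding $\cost_\AK(\bar X|_-,\bar C^*|_-,S)$ to both sides of the second display concludes the proof.

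The main obstacle is the counting step $|\batches^{-}|\le O(k|\mathcal{L}|)$. A clustering that merely admits a feasible, or even a strongly compatible, diagram will not do: as \cref{fig:s_compatible_does_not_suffice} shows, an entire interval's worth of points can be fractionally split, destroying the $O(k)$ bound. One genuinely needs \cref{th:generalized_structure} and the strict-compatibility hypothesis here --- which, as already noted, is also the place where the original Har-Peled--Kushal argument has a gap. The only other delicacy is arithmetical: keeping the movement error dominated by $\epsilon\,\cost_\AK(\bar X,\bar C^*,S)$ rather than by a larger quantity, which is why the cross term must be retained and controlled by a single Cauchy--Schwarz rather than discarded.
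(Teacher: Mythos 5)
Your proposal is correct and follows essentially the same route as the paper's proof: rewriting the batch cost via $\bar\xi_{ij}$, applying the displacement decomposition with the Cauchy--Schwarz estimate of \cref{le:technical_bound}, invoking strict compatibility through \cref{th:generalized_structure} to bound $|\batches^{-}|$ by $\mathcal{O}(k)\cdot|\mathcal{L}|$, and closing with the same $\epsilon$-arithmetic. The only differences are bookkeeping in the final conversion step (you pass through $\cost_\AK(X,S)$ and condition \cref{def:property_a} of \cref{def:coreset}, the paper through $\OPT(\bar{X})$ and \cref{le:euclidean-anisotropic-bound}), and the residual $\nicefrac{1}{\lambdamin}$ factor you flag there is dropped just as silently in the paper's own application of \cref{le:euclidean-anisotropic-bound} with $\eta=\nicefrac{\epsilon^{2}}{8}$, so your argument is no less complete than the original (a normalization of $\Acc$ performed before the construction, as you suggest, absorbs it into the constants the paper already treats as fixed).
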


\begin{proof}
	First, by the definition of the merging function $p$,
	\begin{equation*}
	    \xi_{iB} \omega_{B}= \sum_{\bar{x}_j\in B} \bar{\xi}_{ij} \omega_{j}.
	\end{equation*}
Hence,
\begin{equation*}
	    \cost_\AK(\batchingX|_-, \batchingC|_-,  S)  
	    = \sum_{i=1}^k \sum_{B\in \batches^-} \sum_{\bar{x}_j\in B} \bar{\xi}_{ij} \omega_{j} \norm{x_{B} -s_{i}}_{A_i}^{2}.
\end{equation*}
Next, we apply \cref{le:technical_bound} to $(\bar{X}|_-,\Omega|_-)$ and $\bar{C}^*|_-$. In this setting, the number $D$ becomes
\begin{equation*}
    D = \sum_{i=1}^k \sum_{B\in \batches^-} \sum_{\bar{x}_j\in B} \bar{\xi}_{ij} \omega_{j} \norm{x_{B} -\bar{x}_j}_{A_i}^{2},  
\end{equation*}
and \cref{le:technical_bound} yields
\begin{equation*}
	    \cost_\AK (\batchingX|_-, \batchingC|_-,  S)
	    \le \cost_\AK(\bar{X}|_-,\bar{C}^*|_-,S) + 2 
	    \sqrt{D \cdot \cost_\AK(\bar{X}|_-,\bar{C}^*|_-,S)} + D.
\end{equation*}
We will now apply \cref{le:euclidean-anisotropic-bound} to derive an upper bound for $D$. In order to do so, let us first consider the lemma's assumption. In fact, we have
\begin{equation*}
	\sum_{B \in \batches^-} \sum_{x_j \in B}  \omega_{j} \norm{x_{B}-\bar{x}_{j}}_{2}^{2} = \sum_{B \in \batches^{-}} \error_{\identity}(B) \le |\batches^-| \cdot \error_{\identity}= \frac{|\batches^-| \cdot \epsilon^2}{32\alpha k \lambdamax |\mathcal{L}|} \text{ALG}_{(\alpha,\beta)}(X).
\end{equation*}
Hence we need to bound the number of batches in $\batches^{-}$. We consider the different lines of the pencils separately.

So, let $L\in \mathcal{L}$. Since $\bar{C}^*$ admits a strictly compatible diagram we can apply \cref{th:generalized_structure}. There is thus a partition of $L$ into at most $2k-1$ proper intervals such that all data points in the relative interior of each interval are fully assigned to the same cluster. Consequently, only batches that contain a boundary point of at least one such interval can contain points of more than one cluster. Therefore, $|\batches^-\cap \batches(L)| \le 2(k-1)$, and thus
\begin{equation*}
    |\batches^-| \le 2(k-1) \cdot |\mathcal{L}|.
\end{equation*}
Since $(\bar{X},\Omega)$ is a linear $\epsilon$-coreset for $(X,\Omega)$ and $\epsilon \in (0,\nicefrac{1}{2}]$, we obtain
\begin{align*}
	\sum_{B \in \batches^-} \sum_{x_j \in B}  \omega_{j} \norm{x_{B}-\bar{x}_{j}}_{2}^{2}
	&\le \frac{\epsilon^2}{16\lambdamax} \OPT(X)
	\le \frac{\epsilon^2}{8\lambdamax} \OPT(\bar{X}).
\end{align*}
\cref{le:euclidean-anisotropic-bound}, applied to $(\bar{X}, \Omega)$ with $\eta = \nicefrac{\epsilon^2}{8}$ yields 
	\begin{equation*}
	   D= \sum_{i=1}^k \sum_{B\in \batches^-} \sum_{\bar{x}_j\in B} \bar{\xi}_{ij} \omega_j \|x_{B}- \bar{x}_j\|^2_{A_i} \le \frac{\epsilon^2}{8} \OPT_\Acc(\bar{X}) \le \frac{\epsilon^2}{8} \cost_\AK(\bar{X},\bar{C}^*,S).
	\end{equation*}
Since
\begin{equation*}
    \cost_\AK(\bar{X}|_-,\bar{C}^*|_-,S) \le \cost_\AK(\bar{X},\bar{C}^*,S),
\end{equation*}
we obtain
\begin{align*}
	    \cost_\AK (\batchingX|_-, \batchingC|_-,  S)
	    & \le \cost_\AK(\bar{X}|_-,\bar{C}^*|_-,S) + \left(2 
	    \sqrt{\frac{\epsilon^2}{8}} + \frac{\epsilon^2}{8}\right) \cost_\AK(\bar{X},\bar{C}^*,S).
\end{align*}
Finally, with the aid of \cref{le:euclidean-anisotrop}, the previous estimates also imply that
\begin{align*}
	\sum_{i=1}^{k} \sum_{B \in \batches^{-}} \xi_{iB}  \error_{A_i}(B) & \le \lambdamax\cdot\sum_{B \in \batches^{-}}  \error_{E}(B) 
	= \lambdamax\cdot \sum_{B \in \batches^-} \sum_{x_j \in B}  \omega_{j} \norm{x_{B}-\bar{x}_{j}}_{2}^{2} \\
	&\le \frac{\epsilon^2}{8} \cost_\AK(\bar{X},\bar{C}^*,S).
\end{align*}
Therefore, using \cref{le:euclidean-anisotrop} again,
\begin{align*}
 \cost_\AK(\batchingX|_-, &\batchingC|_-,  S)  +  \lambdamin	\sum_{B \in \batches^{-}} \error_{E}(B)\\ 
 &\le 
	   \cost_\AK(\batchingX|_-, \batchingC|_-,  S)  +  	\sum_{i=1}^{k} \smashoperator[r]{\sum_{B \in \batches^{-}} } \xi_{iB}  \error_{A_i}(B) \\
	   & \le \cost_\AK(\bar{X}|_-,\bar{C}^*|_-,S) + \left(2 
	    \sqrt{\frac{\epsilon^2}{8}} + 2\cdot \frac{\epsilon^2}{8}\right) \cost_\AK(\bar{X},\bar{C}^*,S)\\
	   &\le \cost_\AK(\bar{X}|_-,\bar{C}^*|_-,S) + \epsilon \cdot \cost_\AK(\bar{X},\bar{C}^*,S),
\end{align*} 
as claimed.
\end{proof}

Now, we are ready to verify that $(\batchingX,\batchingOmega)$ is a coreset for $(k,\bar{X},\Omega, \Acc,\Kappa)$.

\begin{theorem} \label{th:batching_makes_a_coreset}
$(\batchingX,\Omega_{\batches})$ is an $\left(\epsilon,\frac{\lambdamax}{\lambdamin}\right)$-coreset for $(k,\bar{X},\Omega, \Acc,\Kappa)$.
\end{theorem}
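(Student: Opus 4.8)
The plan is to verify the two coreset conditions of \cref{def:coreset} directly, with the choice of offsets
$$
\Delta^{+} = \lambdamax \cdot \sum_{B\in\batches} \error_{E}(B), \qquad
\Delta^{-} = \lambdamin \cdot \sum_{B\in\batches} \error_{E}(B),
$$
so that the required relation $0\le \Delta^{+}\le \frac{\lambdamax}{\lambdamin}\Delta^{-}$ holds (with equality in the right inequality). The extension is the map $\extension_2:\Cs_\Kappa(k,\batchingX,\batchingOmega)\to \Cs_\Kappa(k,\bar{X},\Omega)$ introduced in the construction; that it actually lands in $\Cs_\Kappa(k,\bar{X},\Omega)$, i.e. preserves weights, follows from \cref{le:technical_bound-f}\ref{le:item-a} applied to the merging function $p$ underlying the batching step.

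For condition \cref{def:property_a}, I would simply invoke \cref{le:proof_property_a}: for every set $S$ of $k$ sites and every $\batchingC\in\Cs_\Kappa(k,\batchingX,\batchingOmega)$ it gives
$$
\cost_\AK(\bar{X},\extension_2(\batchingC),S) \le \cost_\AK(\batchingX,\batchingC,S) + \lambdamax\sum_{B\in\batches}\error_{E}(B) = \cost_\AK(\batchingX,\batchingC,S) + \Delta^{+},
$$
and since $1-\epsilon\le 1$ this is even stronger than \cref{def:property_a}. For condition \cref{def:property_b}, the strategy already laid out in \cref{ssec:coreset-properties} is to fix, via \cref{le:restriction_to_diagrams}, an optimal clustering $\bar{C}^*$ attaining $\cost_\AK(\bar{X},S)$ that admits a strictly compatible anisotropic diagram, put $\batchingC = p(\bar{C}^*)$, and prove inequality \eqref{eq:to_show_for_b}. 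Splitting the batches into $\batches^{+}$ and $\batches^{-}$, summing \cref{le:proof_property_b_good} and \cref{le:proof_property_b_bad}, and using additivity of the cost over the two parts, one gets
$$
\cost_\AK(\batchingX,\batchingC,S) + \lambdamin\sum_{B\in\batches}\error_{E}(B) \le \cost_\AK(\bar{X},\bar{C}^*,S) + \epsilon\,\cost_\AK(\bar{X},\bar{C}^*,S),
$$
i.e. $\cost_\AK(\batchingX,\batchingC,S) + \Delta^{-} \le (1+\epsilon)\cost_\AK(\bar{X},S)$, which is \eqref{eq:to_show_for_b}; the chain of inequalities displayed right after \eqref{eq:to_show_for_b} then upgrades this to \cref{def:property_b}, using $\cost_\AK(\batchingX,S)\le\cost_\AK(\batchingX,\batchingC,S)$.

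The only genuinely delicate points have already been discharged by the earlier lemmas, so this proof is mostly bookkeeping: the real obstacle was the bound on $|\batches^{-}|$ inside \cref{le:proof_property_b_bad}, which is where strict compatibility and \cref{th:generalized_structure} are essential, and the care needed so that the two offsets $\Delta^{\pm}$ arising from the $\batches^{+}$ and $\batches^{-}$ contributions are exactly the \emph{same} quantity $\lambdamin\sum_B\error_{E}(B)$ up to the eigenvalue factors, which makes the ratio condition $\Delta^{+}\le\delta\Delta^{-}$ come out with $\delta=\lambdamax/\lambdamin$. I would therefore present the proof as: (i) fix $\Delta^{\pm}$ and the extension, check feasibility; (ii) derive \cref{def:property_a} from \cref{le:proof_property_a}; (iii) pick $\bar{C}^*$ strictly compatible, set $\batchingC=p(\bar{C}^*)$, add \cref{le:proof_property_b_good} and \cref{le:proof_property_b_bad}, and read off \cref{def:property_b}.
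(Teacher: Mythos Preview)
Your proposal is correct and follows essentially the same approach as the paper: define $\Delta^{\pm}$ via the eigenvalue-scaled total batch error, read off condition \cref{def:property_a} directly from \cref{le:proof_property_a}, and obtain \cref{def:property_b} by adding \cref{le:proof_property_b_good} and \cref{le:proof_property_b_bad} for the optimal strictly compatible $\bar{C}^*$ and $\batchingC=p(\bar{C}^*)$. Your write-up is slightly more explicit about feasibility of $\extension_2$ and the $(1-\epsilon)\le 1$ slack in \cref{def:property_a}, but the argument is otherwise identical.
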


\begin{proof}
Let 
\begin{equation*}
    \Delta^{+} = \lambdamax \cdot \sum_{B \in \batches}   \error_{E}(B), \quad \text{and} \quad
    \Delta^{-} = \lambdamin \cdot \sum_{B \in \batches}   \error_{E}(B).
\end{equation*}
Then,
\begin{equation*}
    \Delta^{+} =  \frac{\lambdamax}{\lambdamin} \cdot \Delta^{-}.
\end{equation*}
Further, \cref{le:proof_property_a} immediately yields
	\begin{equation*} 
	\cost_\AK(\bar{X},f_2(\batchingC),S) \le \cost_\AK(\batchingX,\batchingC,S) + \Delta^{+},
	\end{equation*}
which shows condition \cref{def:property_a} of \cref{def:coreset}. 

In order to verify \cref{def:property_b}, remember that it suffices to show:

\begin{equation*}
    \cost_\AK(X_\batches,\batchingC,S) + \Delta^{-} \le (1+\epsilon) \cost_\AK(\bar{X},\bar{C}^*,S).
\end{equation*}

We combine \cref{le:proof_property_b_good,le:proof_property_b_bad}, using  
\begin{align*}
	\cost_\AK(\batchingX,\batchingC, S) + \Delta^{-}= &\cost_\AK(X_{\batches|+},C_{\batches|+}, S)  + \lambdamin \smashoperator[r]{\sum_{B \in \batches^{+}}} \error_{E}(B) \\ 
	& + \cost_\AK(\batchingX|_-, \batchingC|_-, S) + \lambdamin \smashoperator[r]{\sum_{B \in \batches^{-}}} \error_{E}(B).
\end{align*}
This yields
\begin{align*}
    &\cost_\AK(\batchingX,\batchingC, S) + \Delta^{-} \\
    &\quad\le \cost_\AK(\bar{X}|_{+},\bar{C}^*|_{+}, S) +  \cost_\AK(\bar{X}|_-,\bar{C}^*|_-,S) +  \epsilon \cost_\AK(\bar{X},\bar{C}^*,S) \\
    &\quad\le (1+\epsilon)\cost_\AK(\bar{X},\bar{C}^*, S),
\end{align*}
which implies condition \cref{def:property_b} of \cref{def:coreset}.
\end{proof}

Finally, we have all the ingredients with which to prove our main theorem.

\begin{proof}[Proof of \cref{th:smaller_coreset}]
The assertion that $(\coresetX,\coresetOmega)=(\batchingX,\batchingOmega)$ is a coreset for the instance $(k,X,\Omega, \Acc,\Kappa)$ follows immediately from our previous results by means of 
\cref{le:coresets_give_coresets}.

More precisely, we apply \cref{le:projecting_is_a_coreset} for $\epsilon_1=\nicefrac{\epsilon}{3}$ to obtain a linear coreset,i.e., $\delta_1=1$, and \cref{th:batching_makes_a_coreset}  for $\epsilon_2=\nicefrac{\epsilon}{3}$ and $\delta_2=\nicefrac{\lambdamax}{\lambdamin}$. Since
\begin{equation*}
    \epsilon \ge \epsilon_1+\epsilon_2+\epsilon_1\epsilon_2 \quad \text{and} \quad \delta=\max \{\delta_1,\delta_2\} = \frac{\lambdamax}{\lambdamin},
\end{equation*} 
the assertion follows from \cref{le:coresets_give_coresets}.
\end{proof} 

\section{Large Sensitivity} \label{sec:counterexample}
While \cref{th:smaller_coreset} provides  what is currently the best-known bound for the size of deterministic coresets for \wcac, we will round off the paper by addressing the question of whether it is possible to design even smaller coresets for \wcac by applying techniques based on the notion of sensitivity within a probabilistic setting.

In \emph{importance sampling}, as applied to unconstrained least-squares clustering, points are sampled from the data set according to their relevance for the clustering cost. The importance $t(x_j)$ of a point $x_j\in X$, called the \emph{sensitivity} of $x_j$, is assessed as its contribution, in the worst case, to the cost of any feasible clustering. More formally, for the unconstrained case, 
\begin{equation*}
t(x_j) = \sup_{S} \,\, \frac{1}{\cost(X,S)} \sum_{i=1}^k \xi_{ij} \omega_j \norm{x_j-s_i}_2^2,
\end{equation*}
where $C(S)=(\xi_{11},\ldots,\xi_{kn}) \in \argmin \{\cost(X,C,S): C \in \Cs(k,X,\Omega)\}$, and the supremum is taken over all sets $S$ of $k$ sites. Moreover,
\begin{equation*}
    T= T(k,X,\Omega) = \sum_{j=1}^{n} t(x_{j})
\end{equation*}
is called \emph{total sensitivity}. Note that $t(x_j) \le 1$, hence $T\le n$. A key requirement for successfully employing the concept of importance sampling for obtaining small coresets is, however, that $T$ does not depend on the cardinality $n$ of the data set but only on $k$, see \cite[Lemma 2.2]{Bachem2017a}. In fact, \cite[Theorem 3.1]{Langberg2010} showed that $T \in \mathcal{O}(k)$ and derived coresets of size nearly quadratic in $k$ with high probability. Subsequently, \cite[Theorem 6.7]{Braverman2016}  used an improved bound to obtain coresets of size nearly linear in $k$ via importance sampling. To our knowledge, this is the smallest dependence on $k$ currently known.

While such techniques have been successfully applied to unconstrained least-squares clustering, it is not clear, and it is indeed posed as an open question in \cite{Huang2019a, Schmidt2020}, whether they still work in the presence of constraints for the cluster weights. 
 
Of course, the notion of sensitivity can easily be generalized to the constrained case, yielding $t_\Kappa(x_j)$, by simply replacing $\cost(X,S)$ by $\cost_\Kappa(X,S)$ and taking the $\argmin$ of $\cost_\Kappa(X,C, S)$ for all $C \in \Cs_\Kappa(k,X,\Omega)$. As the following \cref{ex:sensitivity} shows, the total sensitivity $T_\Kappa$ might, however be as large as $n$. Hence approaches based on sampling require too large samples to be of any use for \wcac.

\begin{example}\label{ex:sensitivity}
For $r\in (0,\nicefrac{1}{2})$ let $x_1,\ldots,x_n\in \mathbb{R}^{2}$ be equally spaced on the circle with radius $r < 1$ centered at the origin, and set
\begin{equation*}
    X=\{x_1,\ldots,x_n\}\subset \mathbb{R}^{2}, \quad \text{and} \quad \Omega= \{1,\ldots,1\}.
\end{equation*}
Further, let 
\begin{equation*}
    k=2, \quad \kappa_{1}^-=\kappa_{1}^+=n-1, \quad \kappa_{2}^-=\kappa_{2}^+=1, \quad \text{and} \quad \Kappa=\{n-1,n-1,1,1\}.
\end{equation*}
Now, we choose a point $x_{j_0} \in X$ and set
\begin{equation*}
    s_{1}=0, \quad s_{2}=\frac{1}{r} x_{j_0}  \quad \text{and} \quad S_0=\{s_1,s_2\};
\end{equation*}
see \cref{fig:sensitivitycounterexample}. We are interested in finding a clustering $C=(C_1,C_2)\in \Cs_\Kappa(2,X,\Omega)$ that minimizes $\cost_\Kappa(X,C,S_0)$. 

\begin{figure}[h]
	\centering
	\includegraphics[width=0.7\linewidth]{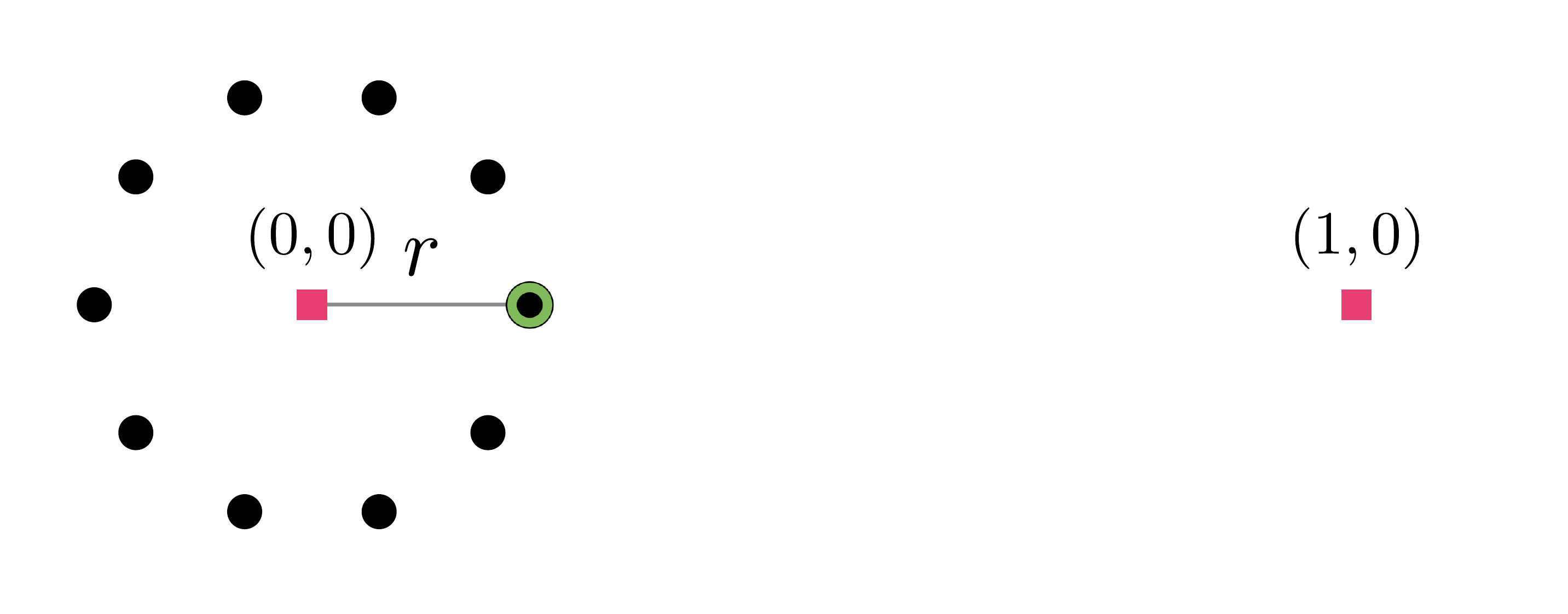}
	\caption{Construction of \cref{ex:sensitivity}. The data points are represented as black circles and the two sites $s_1=0$ and $s_2=(1,0)^\top$ are indicated as pink squares. The encircled data point $x_{j_0}$ contributes most of the cost.}
	\label{fig:sensitivitycounterexample}
\end{figure}

Since $x_{j_0}$ is the unique closest point of $x_{j_0}$ to $s_2$ and $C_2$ requires exactly weight one, the optimal clustering is unique, each point is fully assigned to one cluster, and $C_1$ contains the points of $X\setminus \{x_{j_0}\}$ while $C_2$ consists of $x_{j_0}$. More formally,
\begin{equation*}
    \xi_{ij} = 
    \begin{cases}
    1 & \text{for $(i,j)\in\bigl\{(1,j): j\in [n]\setminus \{j_0\}\bigr\}\cup  \{(2,j_0)\}$};\\
    0 & \text{else},
    \end{cases}
\end{equation*}
and we obtain
\begin{equation*}
    \cost_\Kappa(X,S_0)= \sum_{i=1}^2 \sum_{i=1}^n \xi_{ij} \norm{x_j-s_i}_2^2 = (n-1)r^2 + (1-r)^2.
\end{equation*}
Hence, for the sensitivity $t_\Kappa(x_{j_0})$ of $x_{j_0}$, we know
\begin{equation*}
t_\Kappa(x_{j_0}) = \sup_{S} \,\, \frac{1}{\cost_\Kappa(X,S)} \sum_{i=1}^2 \xi_{ij_0} \norm{x_{j_0}-s_i}_2^2
\ge \frac{(1-r)^2}{(n-1)r^2 + (1-r)^2}.
\end{equation*}
Since
\begin{equation*} 
\lim_{r \to 0}t_\Kappa(x_{j_0}) \ge \lim_{r \to 0} \frac{(1-r)^2}{(n-1)r^2 +(1-r)^2} =  1
\end{equation*}
and $x_{j_0}\in X$ was chosen arbitrarily, we see that
\begin{equation*}
   n\ge T_\Kappa= T_\Kappa(k,X,\Omega) = \sum_{j=1}^{n} t_\Kappa (x_{j}) \ge n,
\end{equation*}
proving \cref{th:unbounded_sensitivity}.
\end{example}

Note that in the instance of \wcac of \cref{ex:sensitivity}, the weight constraints for the clusters prevent the assignment of each point to its nearest site. In fact, for sufficiently small $r$, the one point assigned to the \enquote{far-away site} $s_2$ contributes the most to the clustering cost. Since the sensitivity measures the worst-case contribution over all choices of sites, each of the $n$ points can be decisive for the cost. Consequently, importance sampling, based on the above notion of sensitivity, does not help in designing smaller coresets for \wcac.

\section{Final remarks} \label{conclusion}

Our results provide small coresets for \wcac, i.e., in the weight-constrained anisotropic case. This allows to compute good clusterings for the generally much smaller sets $(\coresetX,\coresetOmega)$ and to subsequently convert them to clusterings of the original data sets $(X,\Omega)$. While the corresponding extensions $\extension$ preserve the cluster weights and guarantee the feasibility of the obtained clusterings, we might, however, loose the favorable properties of compatibility with diagrams. Hence, the question arises as to whether any efficiently computable extensions exist that map clusterings that admit a (strictly, strongly, or just) compatible diagram on the coresets to clusterings of the same type on the full data set. Alternatively, we might ask if we can find a diagram on the coreset such that the induced clustering on the original data set only slightly violates the cluster size constraints. This leads to a number of relevant stability issues.

One may also wonder whether other techniques might enable the design of even smaller coresets for \wcac. While \cref{th:unbounded_sensitivity} gives a negative answer for techniques based on importance sampling, even for weight-constrained least-squares clustering, there might be other techniques that are better suited for constrained clustering.
Recall that the high sensitivity of each data point in  \cref{ex:sensitivity} is a consequence of its definition as worst-case behavior. Of course, the average or expected contribution of each point to the clustering cost taken over all choices of $k$ sites is much smaller in this example. Therefore, it might be reasonable to investigate whether such a weaker notion of average-case sensitivity can be utilized to design improved coresets.

For the applications in materials science referred to in the introduction, it might be worthwhile to point out that our construction and analysis in the proof of \cref{th:smaller_coreset}, which has now reduced the dependency on $k$ from cubic in \cite{Har-Peled2007} to quadratic, does not come with a larger constant hidden in the big-O notation. In fact, by representing each batch of points by its centroid, the number of coreset points is actually reduced by half (note that  \cite{Har-Peled2007} states explicitly that this does not work in their setup). Reductions in the multiplicative constant play a decisive role in the grain map application in practice in terms of memory requirements and computation times. We refer to \cite{AFGK21a} for a computational study of grain map reconstructions based on coresets.

Finally, let us point out that our approach is not limited to \wcac. In fact, it will work as long as the extensions $\extension$ preserve the feasibility for the constraint set. This is the case, for instance, with \emph{fair clustering}, where points belong to different groups and clusters should not over- or under -represent any of the groups. Using a result by \cite{Huang2019a} that designing coresets for this problem with $l$ groups can be reduced to the design of $l$ coresets for points from a single group, our results can be extended by creating batches per group.  

\printbibliography

\appendix

\end{document}